\let\mathbb\varmathbb
\crefname{lemma}{Lemma}{Lemmas}
\crefname{fact}{Fact}{Facts}
\crefname{theorem}{Theorem}{Theorems}
\crefname{corollary}{Corollary}{Corollaries}
\crefname{claim}{Claim}{Claims}
\crefname{example}{Example}{Examples}
\crefname{algorithm}{Algorithm}{Algorithms}
\crefname{problem}{Problem}{Problems}
\crefname{definition}{Definition}{Definitions}
\crefname{exercise}{Exercise}{Exercises}
\newtheorem{theorem}{Theorem}[section]
\newtheorem*{theorem*}{Theorem}
\newtheorem{lemma}[theorem]{Lemma}
\newtheorem*{lemma*}{Lemma}
\newtheorem{fact}[theorem]{Fact}
\newtheorem*{fact*}{Fact}
\newtheorem*{proposition*}{Proposition}
\newtheorem{corollary}[theorem]{Corollary}
\newtheorem*{corollary*}{Corollary}
\newtheorem*{hypothesis*}{Hypothesis}
\newtheorem*{conjecture*}{Conjecture}
\theoremstyle{definition}
\newtheorem{definition}[theorem]{Definition}
\newtheorem*{definition*}{Definition}
\newtheorem*{construction*}{Construction}
\newtheorem*{example*}{Example}
\newtheorem*{question*}{Question}
\newtheorem{algorithm}[theorem]{Algorithm}
\newtheorem*{algorithm*}{Algorithm}
\newtheorem*{assumption*}{Assumption}
\newtheorem*{problem*}{Problem}
\newtheorem*{openquestion*}{Open Question}
\theoremstyle{remark}
\newtheorem*{claim*}{Claim}
\newtheorem*{remark*}{Remark}
\newtheorem*{observation*}{Observation}
\let\originalleft\left
\let\originalright\right
\renewcommand{\left}{\mathopen{}\mathclose\bgroup\originalleft}
\renewcommand{\right}{\aftergroup\egroup\originalright}
\let\latexparagraph\paragraph
\RenewDocumentCommand{\paragraph}{som}{%
  \IfBooleanTF{#1}
    {\latexparagraph*{#3}}
    {\IfNoValueTF{#2}
       {\latexparagraph{\maybe@addperiod{#3}}}
       {\latexparagraph[#2]{\maybe@addperiod{#3}}}%
  }%
}
\newcommand{\maybe@addperiod}[1]{%
  #1\@addpunct{.}%
}
\newcommand{\Esymb}{\mathbb{E}}
\newcommand{\Psymb}{\mathbb{P}}
\DeclareMathOperator*{\E}{\Esymb}
\DeclareMathOperator*{\ProbOp}{\Psymb}
\renewcommand{\Pr}{\ProbOp}
\newcommand\bdot\bullet
\DeclareMathOperator{\Tr}{Tr}
\newcommand{\Hoelder}{H\"{o}lder\xspace}
\newcommand{\N}{\mathbb N}
\newcommand{\R}{\mathbb R}
\newcommand{\cA}{\mathcal A}
\newcommand{\cB}{\mathcal B}
\newcommand{\cN}{\mathcal N}
\let\epsilon=\varepsilon
\numberwithin{equation}{section}
\newcommand\MYcurrentlabel{xxx}
\newcommand{\MYstore}[2]{%
  \global\expandafter \def \csname MYMEMORY #1 \endcsname{#2}%
}
\newcommand{\MYload}[1]{%
  \csname MYMEMORY #1 \endcsname%
}
\newcommand{\MYnewlabel}[1]{%
  \renewcommand\MYcurrentlabel{#1}%
  \MYoldlabel{#1}%
}
\newcommand{\MYdummylabel}[1]{}
\newcommand{\torestate}[1]{%
  \let\MYoldlabel\label%
  \let\label\MYnewlabel%
  #1%
  \MYstore{\MYcurrentlabel}{#1}%
  \let\label\MYoldlabel%
}
\newcommand{\restatetheorem}[1]{%
  \let\MYoldlabel\label
  \let\label\MYdummylabel
  \begin{theorem*}[Restatement of \cref{#1}]
    \MYload{#1}
  \end{theorem*}
  \let\label\MYoldlabel
}
\newcommand{\restatelemma}[1]{%
  \let\MYoldlabel\label
  \let\label\MYdummylabel
  \begin{lemma*}[Restatement of \cref{#1}]
    \MYload{#1}
  \end{lemma*}
  \let\label\MYoldlabel
}
\newcommand{\restateprop}[1]{%
  \let\MYoldlabel\label
  \let\label\MYdummylabel
  \begin{proposition*}[Restatement of \cref{#1}]
    \MYload{#1}
  \end{proposition*}
  \let\label\MYoldlabel
}
\newcommand{\restatefact}[1]{%
  \let\MYoldlabel\label
  \let\label\MYdummylabel
  \begin{fact*}[Restatement of \cref{#1}]
    \MYload{#1}
  \end{fact*}
  \let\label\MYoldlabel
}
\newcommand{\restate}[1]{%
  \let\MYoldlabel\label
  \let\label\MYdummylabel
  \MYload{#1}
  \let\label\MYoldlabel
}
\DeclarePairedDelimiterX{\infdivx}[2]{(}{)}{#1\;\delimsize\|\;#2}
\newcommand{\kldiv}{D_{KL}\infdivx}
\newcommand*{\Lowner}{L\"owner\xspace}
\title{
  Outlier-robust Mean Estimation near the Breakdown Point via Sum-of-Squares
  \thanks{This project has received funding from the European Research Council (ERC) under the European Union’s Horizon 2020 research and innovation programme (grant agreement No 815464)}
}
\renewcommand{\epsilon}{\ensuremath\varepsilon}
\renewcommand{\phi}{\ensuremath{\varphi}}
\newcommand{\xs}{{x}^*}
\newcommand{\Ss}{{\Sigma}^*}
\newcommand{\mus}{{\mu}^*}
\newcommand{\sn}{\frac{1}{n}\sum\limits_{i=1}^n}
\newcommand{\snj}{\frac{1}{n}\sum\limits_{j=1}^n}
\newcommand{\ip}{\{z_i\}_{i=1}^n}
\newcommand{\uip}{\{ \xs_i \}_{i=1}^n}
\newcommand{\pwi}{\{w_i\}_{i=1}^n}
\newcommand{\pxi}{\{x_i\}_{i=1}^n}
\newcommand{\vnorm}[1]{\left\lVert#1\right\rVert}
\newcommand{\one}{\mathbb{1}}
\newcommand{\Ex}[1]{\E\left[#1\right]}
\newcommand{\err}{\vnorm{\mu - \mu^*}}
\newenvironment{algorithmbox}{\begin{mdframed}[nobreak=true]\begin{algorithm}}{\end{algorithm}\end{mdframed}}
\newcommand{\del}{\mu - \mu^*}
\newcommand{\pE}{\widetilde{\mathbb{E}}}
\newcommand{\filter}{filter }
\newcommand{\D}{\mathcal{D}}
\newcommand{\Pp}{\mathcal{P}}
\newcommand{\ot}{\widetilde{O}(\varepsilon)}
\author{
  Hongjie Chen\thanks{ETH Z\"urich.}
  \and
  Deepak Narayanan Sridharan\footnotemark[2]
  \and 
  David Steurer\footnotemark[2]\\
}
\date{}
\begin{document}

\pagestyle{empty}

\maketitle
\thispagestyle{empty} %

\begin{abstract}

               We revisit the problem of estimating the mean of a high-dimensional distribution in the presence of an $\varepsilon$-fraction of adversarial outliers.
    When $\varepsilon$ is at most some sufficiently small constant, previous works can achieve optimal error rate efficiently \cite{diakonikolas2018robustly, kothari2018robust}. 
    As $\varepsilon$ approaches the breakdown point $\frac{1}{2}$, all previous algorithms incur either sub-optimal error rates or exponential running time.  
    
    In this paper we give a new analysis of the canonical sum-of-squares program introduced in \cite{kothari2018robust} and show that this program efficiently achieves optimal error rate for all $\varepsilon \in[0,\frac{1}{2})$. The key ingredient for our results is a new identifiability proof for robust mean estimation that focuses on the overlap between the distributions instead of their statistical distance as in previous works. We capture this proof within the sum-of-squares proof system, thus obtaining efficient algorithms using the sum-of-squares proofs to algorithms paradigm \cite{raghavendra2018high}.

\end{abstract}

\clearpage

\microtypesetup{protrusion=false}
\tableofcontents{}
\microtypesetup{protrusion=true}

\clearpage

\pagestyle{plain}
\setcounter{page}{1}

\newcommand{\package}{\emph}

\section{Introduction}

The problem of estimating the mean of a high-dimensional distribution given samples from the distribution is a foundational problem in statistics. In the problem of high-dimensional robust mean estimation, we are instead interested in estimating the mean of a distribution when we observe corrupted samples, making it a natural extension of mean estimation. The main challenge in high-dimensional robust mean estimation stems from the fact that efficient estimators that work well in one-dimension often do not generalize well to high dimensions.  In particular, the error achieved by estimators such as the geometric median and coordinate-wise median scales with the dimension \cite{diakonikolas2023algorithmic} while information-theoretically this dependence on dimension is unnecessary. On the other hand estimators that achieved dimension-independent error such as the Tukey Median \cite{tukey1975mathematics} were hard to compute \cite{johnson1978densest}. 

The breakthrough works of \cite{diakonikolas2019robust, lai2016agnostic} designed the first efficient algorithms for high-dimensional robust mean estimation for Gaussians that achieved near-optimal error. Since then, there has been a plethora of work that has focused on different aspects on the problem of high-dimensional robust mean estimation under different distributional assumptions. Some of these aspects include designing faster and more practical algorithms \cite{diakonikolas2017being, cheng2019high}, designing algorithms for specific corruption models \cite{diakonikolas2018robustly} and exploiting structural properties such as sparsity \cite{balakrishnan2017computationally}. 
Much of the work in robust mean estimation is in the following setting: The algorithm receives as input a corrupted set of samples, a corruption parameter $\varepsilon$, and other information about the underlying distribution. Typically, an $\varepsilon$ fraction of the samples is corrupted by an adversary with knowledge about the underlying distribution and consequently the true mean. The key distinction between different corruption models arises in how these $\varepsilon$ fraction of the samples are corrupted. In the strong contamination model an adversary has full ability to add or remove an $\varepsilon$ fraction of the samples, whereas in the additive contamination model, the adversary is only allowed to add corrupted samples and not \emph{remove} uncorrupted samples. This distinction is critical as different models of corruption can lead to intriguing statistical-computational trade-offs \cite{diakonikolas2017statistical}. In this paper, we consider the strong contamination model and define an $\varepsilon$-corruption as follows.

 \begin{definition}[$\varepsilon$-corruption]\label{def:epscor}
     Let $\D$ be a distribution on $\R^d$ with mean $\mu^*$. Let $\uip \overset{\mathrm{iid}}{\sim} \D$. An adversary with unbounded computational power, knowledge of $\D$, $\mu^*$, and the algorithm inspects the samples $\uip$ and arbitrarily modifies at most an $\varepsilon$ fraction of the samples. The algorithm receives  $\ip$ with the promise that at least $n \cdot (1-\varepsilon)$ samples of $\ip$ agrees with $\uip$.
 \end{definition}

The corruption fraction $\varepsilon$ is also directly related to the \emph{breakdown point} of the estimator. Informally, the breakdown point of an estimator is the maximum $\varepsilon$ that an estimator tolerates before the estimation error becomes unbounded. We note that the type of estimators that we consider have maximum breakdown point of $\frac{1}{2}$ \cite{rousseeuw2005robust, zhu2022robust}. The maximum breakdown is critical as it marks the threshold of statistical identifiability: If $\varepsilon$ is $\frac{1}{2}$, an adversary can simulate the corrupted samples to appear from another distribution from the same family as the true underlying distribution. An algorithm will thus receive as input a balanced mixture and it becomes information-theoretically impossible to {uniquely} identify the mean.

Efficient estimators for robust mean estimation typically work in the regime where $\varepsilon$ is a sufficiently small constant bounded away from $\frac{1}{2}$. Many of these algorithms achieve optimal/near-optimal error up to constants when $\varepsilon$ is sufficiently small \cite{diakonikolas2019robust, diakonikolas2017being,cheng2019high}. While recent work has also focused on improving the breakdown point of estimators and even achieve optimal breakdown point in some cases \cite{dalalyan2022all, zhu2022robust, hopkins2020robust}, these estimators do not achieve optimal error as $\varepsilon $ approaches the breakdown point of  $\frac{1}{2}$\footnote{We note here that there case where $\varepsilon \geqslant \frac{1}{2}$ is studied in the so called list-decodable setting \cite{charikar2017learning, steinhardt2018robust}, where the goal is to output a list of vectors such that one of them is a good estimate for the mean.}. 

Two key methods have emerged as the workhorse behind significant recent developments in robust mean estimation: the filter approach \cite{diakonikolas2017being, diakonikolas2019robust} that iteratively removes and down-weighs outliers, and the Sum-of-Squares (SoS) approach \cite{kothari2018robust,hopkins2018mixture}, based on polynomial optimization.  Some key recent works that use the filter approach include \cite{diakonikolas2017being, diakonikolas2019robust, diakonikolas2018robustly, zhu2022robust} while those that utilize SoS include \cite{kothari2018robust, hopkins2018mixture, kothari2022polynomial, diakonikolas2022robust}.

Algorithms based on the \filter typically assume $\varepsilon$ sufficiently small and are often sub-optimal in terms of the breakdown point. This is partly because algorithms utilizing the filter lack careful analysis and also because these algorithms utilize iterative procedures that succeed only under certain conditions -- conditions that may no longer hold when $\varepsilon$ is large\footnote{ We refer the reader to \cite{li2019lectures} for a related discussion.}. Recently, \cite{hopkins2020robust, zhu2022robust} proposed problem-specific modifications of the standard \filter and demonstrated via careful analysis that algorithms based on this method can indeed achieve optimal breakdown point. However, these algorithms do not achieve optimal error as $\varepsilon \to \frac{1}{2}$. 

Like the \filter method, existing work using SoS is also tailored for small $\varepsilon$ \cite{kothari2018robust, hopkins2018mixture, diakonikolas2022robust} and the behavior of these estimators near the breakdown point is not well understood. This leads us to the natural question:

\begin{itemize}
  \item[]
  \emph{
    Does there exist efficient algorithms achieving optimal error for the full range of corruption for robust mean estimation under different distributional assumptions? 
  }
\end{itemize}

In this paper we answer the above question affirmatively for a large class of distributions. We consider distributions with bounded covariance, certifiably bounded higher moments (See Definition \ref{defn:certifiablebounded}) and Gaussian distributions with known covariance and present efficient estimators based on Sum-of-Squares that achieve optimal breakdown point of $\frac{1}{2}$. For the cases of bounded covariance and certifiably bounded higher moments, our estimators also achieve optimal error (up to constant factors) \emph{efficiently} for the entire range of corruption. For the Gaussian case, our estimator achieves optimal error (up to constant factors) in time that is quasi-polynomial in the inverse of the distance from the breakdown point. We now state our main results and provide corresponding information-theoretic lower bounds. 

\subsection{Our Results}

\subsubsection{Bounded Covariance Distributions}
\begin{theorem}\label{thm:boundedcov}
    Let $\D$ be a distribution on $\R^d$ with mean $\mu^*$ and covariance $\Sigma^*$. 
    Suppose $\Sigma^* \preccurlyeq \sigma^2\cdot I_d$. 
    Let $\uip \overset{\mathrm{iid}}{\sim} \D.$  
    Let $\{z_i\}_{i=1}^n$ be an $\varepsilon$-corruption of $\uip$ where $\varepsilon \in [0, \frac{1}{2})$.
    Let $n = \Omega\left( d \log d \right)$. 
    Then, there is an efficient algorithm based on SoS, running in time $n^{O(1)}$ that takes as input $\ip, \varepsilon, \sigma$, and returns a vector $\hat{\mu}$ such that 
    \begin{align*}
        \Vert \hat{\mu} - \mus \Vert_2 = O\left( \sigma \cdot \sqrt{\frac{\varepsilon}{1-2\varepsilon}} + \sigma \cdot \sqrt{\frac{d}{n}}\right)
    \end{align*}
    with probability 0.99.

\end{theorem}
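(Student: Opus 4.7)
The plan is to analyze the canonical sum-of-squares program of \cite{kothari2018robust} with a new identifiability argument centered on the \emph{overlap} between a candidate weighting $w$ and the true indicator $w^*$ of uncorrupted samples, rather than on the statistical distance between the associated empirical measures. Concretely, with $\mu_w := \tfrac{1}{(1-\varepsilon)n}\tsum_i w_i z_i$, the algorithm computes a constant-degree pseudo-expectation $\pE$ satisfying the axioms $\{w_i^2 = w_i\}_{i=1}^n$, $\tsum_i w_i = (1-\varepsilon)n$, and $\tsum_i w_i(z_i-\mu_w)(z_i-\mu_w)^\top \sle C\sigma^2 n\cdot I_d$ (interpreted as the family of scalar SoS axioms $\tsum_i w_i\langle v, z_i-\mu_w\rangle^2 \leq C\sigma^2 n$ parameterized by unit $v$), and returns $\hat\mu = \pE[\mu_w]$. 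A matrix-Bernstein argument under $\Sigma^* \sle \sigma^2 I_d$ shows that for $n = \Omega(d \log d)$, with probability $0.999$ the empirical covariance of the uncorrupted samples $\{x^*_i\}$ is $\sle O(\sigma^2) I_d$ and $\bar\mu^* := \tfrac{1}{(1-\varepsilon)n}\tsum_i w^*_i z_i$ satisfies $\|\bar\mu^* - \mu^*\| = O(\sigma\sqrt{d/n})$, so $w^*$ (where $w^*_i = \one[z_i = x^*_i]$, trimmed to exactly $(1-\varepsilon)n$ ones) is an integral feasible point.

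The crux of the proof is a constant-degree SoS identifiability lemma: from the axioms and the concentration above, I would derive
\[
\bigl(\tsum_i w_i w^*_i\bigr)^{\!2}\,\|\mu_w - \bar\mu^*\|^2 \;\sle\; O(\sigma^2 n)\cdot \bigl(\tsum_i w_i w^*_i\bigr)\cdot\bigl(\tsum_i w_i(1-w^*_i) + \tsum_i (1-w_i)w^*_i\bigr)\mper
\]
The starting point is the polynomial identity (valid because $\tsum w_i = \tsum w^*_i = (1-\varepsilon)n$ and $\mu_w, \bar\mu^*$ are the sample means under $w$ and $w^*$)
\[
\bigl(\tsum_i w_i w^*_i\bigr)(\mu_w-\bar\mu^*) = \tsum_i w_i(1-w^*_i)(z_i-\mu_w) - \tsum_i (1-w_i)w^*_i(z_i-\bar\mu^*)\mper
\]
Inner-producting against an arbitrary direction $v$, squaring, and applying an SoS Cauchy--Schwarz weighted by the overlap bounds each sum, using the covariance axiom on the $w$-side and the empirical-covariance concentration of $\{x^*_i\}$ on the $w^*$-side (with $z_i = x^*_i$ whenever $w^*_i = 1$). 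The novelty is to use $\tsum w_i w^*_i$ as the normalizer throughout: near the breakdown point this overlap can collapse all the way to $(1-2\varepsilon)n$, and it, not $(1-\varepsilon)n$, is what controls identifiability there.

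To turn the identifiability inequality into a bound on the estimator, note the SoS-provable inequalities $\tsum w_i w^*_i \geq (1-2\varepsilon)n$, $\tsum w_i(1-w^*_i) \leq \varepsilon n$, and $\tsum (1-w_i)w^*_i \leq \varepsilon n$, all immediate from the sum axioms. Substituting these and taking $\pE$ yields $\pE\|\mu_w - \bar\mu^*\|^2 \leq O(\sigma^2 \cdot \varepsilon/(1-2\varepsilon))$. Jensen's inequality for pseudo-expectations (applicable because $\|x - \bar\mu^*\|^2$ is SoS in $x$) then gives $\|\hat\mu - \bar\mu^*\|^2 \leq \pE\|\mu_w - \bar\mu^*\|^2$, and combining with $\|\bar\mu^* - \mu^*\| = O(\sigma\sqrt{d/n})$ via the triangle inequality yields the claimed rate.

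The main obstacle is obtaining the tight $\sqrt{\varepsilon/(1-2\varepsilon)}$ dependence in the identifiability lemma rather than the weaker $\sqrt{\varepsilon}/(1-2\varepsilon)$ produced by a naive SoS Cauchy--Schwarz, which bounds $\tsum_i w_i(1-w^*_i)\langle v, z_i-\mu_w\rangle^2 \leq \tsum_i w_i\langle v, z_i-\mu_w\rangle^2 \leq C\sigma^2 n$ and loses an extra factor of $\sqrt{1/(1-2\varepsilon)}$. Closing this gap requires coupling the Cauchy--Schwarz weighting to the overlap itself so that $\tsum w_i w^*_i$ appears exactly once on each side of the identifiability inequality, and threading the covariance axiom through this weighted bound so the $(1-2\varepsilon)$ denominator is picked up exactly once in the final estimate. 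Capturing this refinement as a constant-degree SoS proof -- rather than only as an analytic inequality on integral solutions -- is the principal new technical contribution, and is what makes the algorithm work uniformly for all $\varepsilon \in [0, \tfrac12)$.
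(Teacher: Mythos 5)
Your proposal is incomplete at exactly the point that matters. You write down the overlap identity
\[
\bigl(\tsum_i w_i w^*_i\bigr)(\mu_w-\bar\mu^*) \;=\; \tsum_i w_i(1-w^*_i)(z_i-\mu_w) \;-\; \tsum_i (1-w_i)w^*_i(z_i-\bar\mu^*)\mcom
\]
and the desired conclusion, but the Cauchy--Schwarz step you describe does not reach your stated target. Taking inner product with $v$, squaring, and applying SoS Cauchy--Schwarz (using $w_i^2=w_i$ so that $\sum a_ib_i$ with $a_i = w_i(1-w^*_i)$ and $b_i = w_i\langle v, z_i-\mu_w\rangle$ gives $\sum a_i^2 = \sum w_i(1-w^*_i)$) yields
\[
\bigl(\tsum_i w_i w^*_i\bigr)^{2}\langle v,\mu_w-\bar\mu^*\rangle^2 \;\leq\; O(\sigma^2 n)\Bigl(\tsum_i w_i(1-w^*_i) + \tsum_i (1-w_i)w^*_i\Bigr)\mper
\]
There is no $\tsum w_iw^*_i$ on the right-hand side. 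Plugging in $\tsum w_iw^*_i \geq (1-2\varepsilon)n$ and the $\varepsilon n$ bounds gives $\|\mu_w-\bar\mu^*\|^2 \leq O(\sigma^2 \varepsilon/(1-2\varepsilon)^2)$ --- precisely the ``weak'' rate you are trying to beat. To reach your target inequality you would need to cancel one factor of $\tsum w_iw^*_i$, which is a program variable, not a constant; there is no SoS mechanism for that. You name this gap explicitly (``the principal new technical contribution''), but naming the missing step is not proving the theorem. There is also a normalization bug: with the covariance axiom $\tsum_i w_i\langle v,z_i-\mu_w\rangle^2 \leq C\sigma^2 n$, your displayed target inequality carries an extra stray factor of $n$ after substituting the $\varepsilon n$ and $(1-2\varepsilon)n$ bounds; the $O(\sigma^2 n)$ on the right should be $O(\sigma^2)$ for the dimensions to work out.

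The paper closes the gap by a structurally different argument that your program cannot replicate. The paper's SoS program introduces auxiliary variables $\pxi$ with $w_i(z_i - x_i) = 0$ and sets $\mu := \sn x_i$, i.e.\ $\mu$ is the unweighted average over all $n$ auxiliary points, not the weighted average over the selected $z_i$. This is what makes the proof work: starting from $\err^4$, one first uses the statistical-distance Cauchy--Schwarz to get $\err^4 \leq 2\varepsilon\cdot\sn\langle x_i - x^*_i, \mu-\mu^*\rangle^2$, then expands $\langle x_i - x^*_i, \del\rangle = \langle x_i-\mu,\del\rangle - \langle x^*_i - \mu^*,\del\rangle + \err^2$ and squares. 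The cross terms involving $\err^2$ \emph{vanish exactly} because $\sn(x_i-\mu) = 0$ and $\sn(x^*_i-\mu^*) = 0$ over all $n$ indices, and the remaining diagonal term reproduces $\err^4$ with coefficient one. This yields the self-referential inequality $\err^4 \leq 2\varepsilon\bigl(4\err^2 + \err^4\bigr)$, hence $(1-2\varepsilon)\err^4 \leq 8\varepsilon\err^2$, and SoS Cancellation (from $X^2 \leq CX$ derive $X \leq C$) gives the sharp $\err^2 \leq 8\varepsilon/(1-2\varepsilon)$. In your program, $\mu_w$ is centered over only the selected points, so $\sum_{i\in[n]}(z_i-\mu_w) \ne 0$ and the analogous cross-term cancellation is unavailable. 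Your overlap-based route is, in fact, essentially the paper's Lemma \ref{lemma:generaliden} / Theorem \ref{theorem:general} machinery (for $k=2$), which gives $O(\sigma^2/(1-2\varepsilon))$ --- correct near the breakdown point but missing the $\varepsilon$ factor the theorem requires for small $\varepsilon$. For bounded covariance the paper needs and uses the sharper self-referential argument, and your proposal does not recover it.
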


We remark here that the
$O\left(\sigma\cdot \sqrt{d/n}\right)$ term is necessary even under no corruption. The remaining term of  
$O\left( \sigma \cdot \sqrt{\varepsilon} \cdot (1 - 2\varepsilon)^{-1/2} \right)$ has the information-theoretically optimal dependence on $\varepsilon$ for any $\varepsilon \in [0, \frac{1}{2})$.

We also remark here that existing algorithms, based on the \filter method achieve optimal breakdown point for bounded covariance distributions \cite{zhu2022robust, hopkins2020robust}. Among these algorithms, \cite{zhu2022robust} obtains error $O\left(\sigma \cdot \sqrt{\varepsilon}\cdot(1 - 2\varepsilon)^{-1}\right)$, while \cite{hopkins2020robust} obtains error $O\left(\sigma \cdot \sqrt{\varepsilon}\cdot(1 - 2\varepsilon)^{-3/2}\right)$. When $\varepsilon$ is close to $\frac{1}{2}$, the error obtained by these algorithms is substantially larger than the error attained in Theorem \ref{thm:boundedcov}. It was an open problem of \cite{zhu2022robust} if there was an \emph{efficient} algorithm that achieved error of $O\left(\sigma \cdot \sqrt{\varepsilon} \cdot (1 - 2\varepsilon)^{-1/2}\right)$. Theorem \ref{thm:boundedcov} affirmatively resolves this open problem.

We require $n = \Omega(d\log d)$ to ensure that there exists a sufficiently large subset of samples with bounded empirical covariance, and that the empirical mean of this subset of samples is a good estimate for the true mean of the distribution. We refer the reader to Lemma \ref{lemma:boundedcovconc} for more details. 
\begin{lemma}[Lower Bound for Bounded Covariance]\label{lemma:lbbcov}
    There exists two distributions $\D_1$ and $\D_2$ on $\R$ such that for $\varepsilon \in (0,\frac{1}{2})$
    \begin{enumerate}
        \item $d_{TV}(\D_1, \D_2) \leqslant 2\varepsilon$.
        \item $\D_1$ and $\D_2$ have variance bounded by 1.
        \item $\left| \E_{X \sim \D_1}[X] - \E_{X \sim \D_2}[X] \right| \geqslant \Omega\left(\sqrt{\varepsilon}\cdot (1 - 2\varepsilon)^{-1/2} \right)$.
    \end{enumerate}
\end{lemma}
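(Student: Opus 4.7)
The plan is to exhibit an explicit pair of discrete one-dimensional distributions achieving the claimed trade-off. Fix $\varepsilon \in (0, 1/2)$ and, for a scalar $a > 0$ to be chosen, set $\D_1 = (1-2\varepsilon)\,\delta_0 + 2\varepsilon\,\delta_a$ and $\D_2 = (1-2\varepsilon)\,\delta_0 + 2\varepsilon\,\delta_{-a}$. This is the standard two-point gadget used throughout the robust statistics literature; each of the three required properties will follow by a direct computation.

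First, since $\D_1$ and $\D_2$ share a $(1-2\varepsilon)$-mass atom at $0$ while their remaining mass lives on the disjoint atoms $\{a\}$ and $\{-a\}$, the total variation distance is $d_{TV}(\D_1,\D_2) = \frac{1}{2}(2\varepsilon + 2\varepsilon) = 2\varepsilon$, which gives property (1). Second, each distribution has second moment $2\varepsilon a^2$ and mean $\pm 2\varepsilon a$, so the variance of either equals $2\varepsilon a^2 - (2\varepsilon a)^2 = 2\varepsilon(1-2\varepsilon)a^2$; taking $a := 1/\sqrt{2\varepsilon(1-2\varepsilon)}$ normalizes the variance to exactly $1$, yielding property (2). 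Third, with this choice of $a$, the mean gap is $\bigl|\E_{X\sim\D_1}[X] - \E_{X\sim\D_2}[X]\bigr| = 4\varepsilon a = 4\varepsilon/\sqrt{2\varepsilon(1-2\varepsilon)} = 2\sqrt{2}\cdot \sqrt{\varepsilon/(1-2\varepsilon)}$, which is $\Omega\bigl(\sqrt{\varepsilon}\,(1-2\varepsilon)^{-1/2}\bigr)$ and establishes property (3).

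There is no real obstacle in the argument; the construction is elementary. The only design choice is how to balance the mass $p$ placed on the nonzero atom (which drives the TV distance, and hence the fraction of samples an adversary may use to couple the two instances) against its location $a$ (which controls the variance). Setting $p = 2\varepsilon$ saturates the TV constraint, and then the unit-variance constraint pins $a$ down and makes the mean gap as large as possible among pairs of this two-atom form; one could check by Lagrange multipliers or directly that no pair of distributions satisfying (1) and (2) can beat this rate up to constants. The standard Le~Cam--style consequence for estimation--that no $\varepsilon$-contamination estimator can achieve $\ell_2$ error better than $\Omega\bigl(\sqrt{\varepsilon}\,(1-2\varepsilon)^{-1/2}\bigr)$ in the bounded-covariance model--then follows because, by the TV bound, an adversary can relabel a $2\varepsilon$-fraction of samples from $\D_1$ to obtain a valid $\varepsilon$-corruption of samples from $\D_2$ and vice versa, so any estimator incurs error at least half the mean gap on one of the two instances.
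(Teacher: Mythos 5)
Your proof is correct and follows essentially the same strategy as the paper: exhibit an explicit discrete one-dimensional pair, verify the TV bound, normalize the variance, and compute the mean gap. The paper's construction (stated for the bounded $k$th moment case, which specializes to $k=2$) is the asymmetric variant where $\D_1$ is a point mass at $0$ and $\D_2$ places mass $2\varepsilon$ on a single positive atom; your symmetric two-atom gadget with shared mass at $0$ is a cosmetic variant of the same idea, with the small advantage that the variance normalizes to exactly $1$ rather than requiring a final rescaling.
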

We are thus the first to demonstrate that an efficient algorithm can achieve the information-theoretically optimal error up to constants for the entire range of corruption for the bounded covariance case. 

\subsubsection{Certifiably bounded higher moment distributions}
We start with the following definition before stating our next result.

\begin{definition}[Distributions with Certifiably Bounded $k^{th}$ moments \cite{hopkins2018mixture}]\label{defn:certifiablebounded} \footnote{In \cite{hopkins2018mixture} the authors refer to this class of distributions as explicitly bounded distributions.}
Let $k \in \mathbb{N}$. A distribution $\D$ over $\R^d$ with mean $\mu$ has certifiably bounded $k^{th}$ moments if for every even $s \leqslant k$
\begin{align*}
    \sststile{s}{u} \E_{X \sim \D}\left[ \langle X - \mu, u \rangle^s \right] \leqslant s^{s/2}\vnorm{u}^s_2.
\end{align*}

\end{definition}
Equivalently, the polynomial $p(u) \coloneqq s^{s/2}\Vert u \Vert_2^{s} - \E_{X \sim \D}\left[ \langle X - \mu, u \rangle^s \right]$ has a sum of squares proof of degree at most $s$ in variables $u$. We refer to this class of distributions as distributions with certifiably bounded higher moments or distributions with certifiably bounded $k^{th}$ moments from hereon for convenience. We remark here that a large class of distributions have higher moments whose boundedness can be certified by an SoS proof \cite{kothari2018robust}.

\begin{theorem}\label{theorem:general}
Let $\D$ be a distribution on $\R^d$  with mean $\mu^*$ and certifiably bounded $k^{th}$ moments in the sense of Definition \ref{defn:certifiablebounded} where $k$ is a power of 2.  
Let $\uip \overset{\mathrm{iid}}{\sim} \D$. Let $\{z_i\}_{i=1}^n$ be an $\varepsilon$-corruption of $\uip$ where $\varepsilon \in [0, \frac{1}{2})$.
Let $n = \Omega\left(d^{O(k)}\right)$.  
Then there is an efficient algorithm based on SoS, running in time $n^{O(k)}$ that takes as input $\ip, \varepsilon, k$, and returns a vector $\hat{\mu}$ such that 
    \begin{align*}
        \Vert \hat{\mu} - \mus \Vert_2 = O\left( \frac{\sqrt{k}\cdot \varepsilon^{1-1/k}}{(1-2\varepsilon)^{1/k}} + \sqrt{\frac{d}{n}}\right)
    \end{align*}
        with probability 0.99.
\end{theorem}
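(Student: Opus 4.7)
The approach follows the sum-of-squares paradigm from Theorem~\ref{thm:boundedcov}, now with the $k$-th moment certificate of Definition~\ref{defn:certifiablebounded} playing the role of the second-moment bound. I would solve a canonical SoS program over Boolean indicator variables $w \in \{0,1\}^n$ with $\sum_i w_i = (1-\varepsilon)n$ and the constraint that for every even $s \leq k$ and every direction $u$,
\begin{align*}
  \sststile{O(k)}{u}\; \sum_{i=1}^n w_i \langle z_i - \mu_w, u\rangle^s \;\leq\; s^{s/2}\,(1-\varepsilon)n\,\|u\|_2^s,
\end{align*}
where $\mu_w := \tfrac{1}{(1-\varepsilon)n}\sum_i w_i z_i$, and output $\hat\mu := \pE[\mu_w]$. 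The degree is $O(k)$, giving an $n^{O(k)}$-time algorithm. For concentration, with $n = \Omega(d^{O(k)})$ i.i.d.\ samples, a matrix-Bernstein/VC-type argument on the $k$-th tensor moments shows (with probability $0.99$) that there is a subset $S^*$ of uncorrupted samples of size $\geq (1-\varepsilon)n$ whose empirical distribution certifiably satisfies all of these $k$-th moment constraints and whose empirical mean lies within $O(\sqrt{d/n})$ of $\mu^*$; its indicator $w^*$ is then a feasible solution, extending Lemma~\ref{lemma:boundedcovconc} to the certifiably bounded higher-moment setting.

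The heart of the proof is an SoS identifiability lemma: for any two feasible solutions $w,w'$,
\begin{align*}
  \sststile{O(k)}{w,w',v}\; \bigl\langle \mu_w - \mu_{w'},\, v\bigr\rangle^k \;\leq\; C\cdot k^{k/2}\cdot \frac{\varepsilon^{k-1}}{1-2\varepsilon}\,\|v\|_2^k
\end{align*}
for an absolute constant $C$. I would prove it by carrying out the following overlap argument inside SoS. Let $\bar w_i := w_i w'_i$ (Boolean overlap of mass $\bar m := \sum_i \bar w_i \geq (1-2\varepsilon)n$), $a_i := w_i - \bar w_i$ and $b_i := w'_i - \bar w_i$ (Boolean, disjoint, each of mass $s := (1-\varepsilon)n - \bar m \leq \varepsilon n$), and $m := (1-\varepsilon)n$. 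The polynomial identities $m(\mu_w - \mu_{\bar w}) = s(\mu_A - \mu_{\bar w})$ and the analogue for $w'$ let me decompose the moment constraint along $w = \bar w + a$; applying SoS-Jensen, which for even $k$ is the SoS-\Holder inequality $(\sum_i a_i y_i)^k \leq (\sum_i a_i)^{k-1}\sum_i a_i y_i^k$, yields
\begin{align*}
  s\,(\bar m/m)^k\,\Delta^k \,+\, \bar m\,(s/m)^k\,\Delta^k \;\leq\; k^{k/2}\, m\, \|v\|_2^k,
\end{align*}
where $\Delta := \langle \mu_A - \mu_{\bar w}, v\rangle$. The SoS-power-mean inequality $\bar m^{k-1} + s^{k-1} \geq m^{k-1}/2^{k-2}$ then gives $\Delta^k \leq 2^{k-2}\,k^{k/2}\,m^2/(s\,\bar m)\,\|v\|_2^k$. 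Finally, the identity $\mu_w - \mu_{w'} = (s/m)(\mu_A - \mu_B)$, together with an SoS-power-mean triangle step, transfers the bound to $\langle \mu_w - \mu_{w'}, v\rangle^k$; after absorbing the benign factor $(1-\varepsilon)^{2/k-1} \leq 2$ into a constant, taking $k$-th roots yields the target rate $O(\sqrt{k}\,\varepsilon^{1-1/k}/(1-2\varepsilon)^{1/k})$.

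To conclude, applying the pseudoexpectation to the identifiability lemma with $w' = w^*$ (treated as a fixed feasible solution) and $v = \hat\mu - \mu_{w^*}$ and using the standard $k$-th root inequality for pseudoexpectations gives $\|\hat\mu - \mu_{w^*}\|_2 \leq O(\sqrt{k}\,\varepsilon^{1-1/k}/(1-2\varepsilon)^{1/k})$; triangle inequality with $\|\mu_{w^*} - \mu^*\|_2 \leq O(\sqrt{d/n})$ from concentration finishes the proof. The main obstacle is the SoS identifiability lemma: the overlap argument is elementary over $\R$, but embedding it at degree $O(k)$ requires (i) clearing the denominators $m,\bar m,s$ so that every inequality becomes a polynomial one, (ii) verifying that SoS-\Holder and SoS-power-mean admit the needed low-degree proofs in the natural polynomial variables, and (iii) retaining the $s^{k-1}$ term in the power-mean bound to obtain the sharp $(1-2\varepsilon)^{-1/k}$ denominator rather than the loose $(1-2\varepsilon)^{-1}$ that a naive dropping would produce.
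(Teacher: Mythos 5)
Your proposal takes a genuinely different route from the paper's proof of this theorem, so let me compare. The paper's own SoS proof is considerably more direct: it starts from the constant-valued scalar inequality $\delta = 1-2\varepsilon \leq \sn w_i w_i^*$ (from Lemma~\ref{lemma:thm}), multiplies $\langle \mu-\mu^*, v\rangle^k$ through by this, uses booleanity $(w_iw_i^*)^k = w_iw_i^*$ to push the factor inside the $k$-th power, substitutes $w_iw_i^* x_i = w_iw_i^* x_i^*$, applies a single SoS triangle inequality, drops the $w_iw_i^* \leq 1$ factors, invokes the moment certificates, then applies SoS cancellation and square root. The end result is $\err^2 \leq O(k/\delta^{2/k})$, i.e.\ $\err \leq O(\sqrt{k}/\delta^{1/k})$, which is the sharp rate only for $\varepsilon$ near $\tfrac12$; the paper explicitly obtains the small-$\varepsilon$ regime from the known bound $O(\sqrt{k}\varepsilon^{1-1/k}/\delta)$ (cf.\ Theorem~\ref{thm:optbdmom} and \cite{kothari2018robust,hopkins2018mixture}) and glues the two regimes together. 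Your proof instead tries to obtain the unified $O(\sqrt{k}\,\varepsilon^{1-1/k}/\delta^{1/k})$ rate in one shot via an explicit overlap decomposition $w = \bar w + a$, $w' = \bar w + b$, Jensen on each of the overlap and non-overlap parts, a power-mean lower bound on $\bar m^{k-1} + s^{k-1}$, and a final recombination; over $\R$ this argument checks out and, if SoS-ifiable at degree $O(k)$, would actually be a mild strengthening of what the paper proves for this system.

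The genuine concern is precisely where you hedge. Your intermediate objects $\mu_A$, $\mu_{\bar w}$, $\Delta$ involve divisions by the polynomial quantities $s = \sum_i a_i$ and $\bar m = \sum_i w_iw_i^*$, and the \Holder/Jensen and power-mean steps manipulate expressions of the form $\Delta^k\cdot s\bar m(\bar m^{k-1}+s^{k-1})$. Clearing denominators leaves you needing to \emph{divide} a polynomial inequality by powers of $s$ and $\bar m$, which is exactly the cancellation issue the paper singles out in Appendix~\ref{app:factscompare}: the relevant SoS lemma from \cite{hopkins2018mixture} (Lemma~\ref{lemma:ineff}) has the form $(\sn w_iw_i^*)^k \err^{2k} \leq 2^k k^{k/2}(\sn w_iw_i^*)^{k-1}\err^k$, and the paper remarks that cancellation of those polynomial factors is not generally valid in SoS. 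Your strategy of ``retaining the $s^{k-1}$ term'' to get the $\varepsilon^{1-1/k}$ factor re-introduces precisely those polynomial factors as things you would need to cancel, so (ii) in your list of obstacles is not a routine verification. The paper sidesteps this entirely by working with the \emph{constant} $\delta$ on the left from the start, never forming a ratio of polynomial quantities; the price is that it only certifies the $\delta^{-1/k}$ rate and must invoke the small-$\varepsilon$ analysis separately. If you want to pursue your unified route, you would need to exhibit the SoS certificates for the cleared-denominator inequalities explicitly rather than assert cancellation, and this is an open-ended task; otherwise, it is cleaner to follow the paper's two-regime proof.

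One smaller remark: you replace $\sum_i w_i \geq (1-\varepsilon)n$ with equality and fold $x_i$ into $z_i$ via $\mu_w = \tfrac{1}{(1-\varepsilon)n}\sum_i w_i z_i$. This is a legitimate reformulation of System~\ref{eqn:finalsystem}, but note that the paper's version with auxiliary variables $x_i$ and the constraint $w_i(z_i - x_i) = 0$ makes feasibility of the ground-truth assignment ($w_i = w_i^*$, $x_i = x_i^*$) immediate and keeps the moment constraint a pure constraint on $\{x_i\}$ rather than on $w$-weighted sums of $z_i$, which matters when you pass to pseudoexpectations.
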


We remark here that the $O\left(\sqrt{d/n}\right)$ term is necessary even without corruption. The remaining term achieves the optimal dependency on $\varepsilon$ for any $\varepsilon \in [0,\frac{1}{2})$. We remark here that the optimal dependence on $\varepsilon$ for sufficiently small $\varepsilon$ is a well-known result \cite{kothari2018robust, hopkins2018mixture}. The optimal error that the above algorithm attains as $\varepsilon \to \frac{1}{2}$ is the first for efficient algorithms.
We require $d^{O(k)}$ many samples for ensuring that the uniform distribution over the uncorrupted samples also satisfies the SoS certifiability as in Definition \ref{defn:certifiablebounded} -- also see Fact 7.6 in \cite{hopkins2018mixture}. 

\begin{lemma}[Lower Bound for Bounded Higher Moments]
    There exists two distributions $\D_1$ and $\D_2$ on $\R$ such that for $\varepsilon \in (0,\frac{1}{2})$
    \begin{enumerate}
        \item $d_{TV}(\D_1, \D_2) \leqslant 2\varepsilon$.
        \item $\D_1$ and $\D_2$ have bounded $k^{th}$ moments for $k \geqslant 2$ in the sense of Definition \ref{defn:certifiablebounded}. 
        \item $\left| \E_{X \sim \D_1}[X] - \E_{X \sim \D_2}[X] \right| \geqslant  \Omega\left(\sqrt{k}\cdot \varepsilon^{1 - 1/k}\cdot{(1 - 2\varepsilon)^{-1/k}}\right)$.
    \end{enumerate}
\end{lemma}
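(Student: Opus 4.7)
The plan is to mimic the two-point construction used for the bounded-covariance lower bound (Lemma~\ref{lemma:lbbcov}), but with the outlier location scaled so as to saturate the $k$-th moment budget instead of the variance budget. Concretely, I will set
\[
  \mathcal{D}_1 \;=\; (1-2\varepsilon)\delta_0 + 2\varepsilon\,\delta_{t}, \qquad
  \mathcal{D}_2 \;=\; (1-2\varepsilon)\delta_0 + 2\varepsilon\,\delta_{-t},
\]
where $t>0$ is chosen below. These two measures share the component $(1-2\varepsilon)\delta_0$, and their disjoint parts have mass $2\varepsilon$ each and are supported at distinct points $\pm t$, so $d_{TV}(\mathcal{D}_1,\mathcal{D}_2) = 2\varepsilon$. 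This settles item~(1) immediately and in fact with equality.

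For item~(2), I would compute the central $k$-th moment of $\mathcal{D}_1$ (by symmetry the same bound holds for $\mathcal{D}_2$). The mean is $\mu_1 = 2\varepsilon t$, and for even $k$,
\[
  \E_{X\sim\mathcal{D}_1}\bigl[(X-\mu_1)^k\bigr]
  \;=\; (1-2\varepsilon)(2\varepsilon t)^k + 2\varepsilon\bigl((1-2\varepsilon)t\bigr)^k
  \;=\; t^k\cdot 2\varepsilon(1-2\varepsilon)\bigl[(2\varepsilon)^{k-1}+(1-2\varepsilon)^{k-1}\bigr].
\]
Since $(2\varepsilon)^{k-1}+(1-2\varepsilon)^{k-1}\le 2$ for $\varepsilon\in(0,\tfrac12)$, the right-hand side is at most $4\varepsilon(1-2\varepsilon)t^k$. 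Thus choosing $t = \sqrt{k}\bigl/(4\varepsilon(1-2\varepsilon))^{1/k}$ ensures $\E[(X-\mu_1)^k]\le k^{k/2}$, as required by Definition~\ref{defn:certifiablebounded}. SoS-certifiability in dimension one is automatic: every non-negative univariate polynomial is a sum of squares, so the scalar inequality $E[(X-\mu)^s]u^s\le s^{s/2}u^s$ is a (constant multiple of a) square in $u$ for every even $s$, yielding a degree-$s$ SoS proof.

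For item~(3), plug in the chosen $t$ to get
\[
  \bigl|\E_{\mathcal{D}_1}[X]-\E_{\mathcal{D}_2}[X]\bigr|
  \;=\; 4\varepsilon t
  \;=\; \frac{4\varepsilon\sqrt{k}}{(4\varepsilon(1-2\varepsilon))^{1/k}}
  \;=\; \Omega\!\left(\sqrt{k}\cdot\varepsilon^{1-1/k}\cdot(1-2\varepsilon)^{-1/k}\right),
\]
where the constant absorbed into $\Omega$ is $4/4^{1/k}\ge 1$.

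\paragraph{Main obstacle.}
The delicate point is that Definition~\ref{defn:certifiablebounded} demands the moment bound $\E[(X-\mu)^s]\le s^{s/2}$ for \emph{every} even $s\le k$, not just for $s=k$. With the choice of $t$ above, a direct substitution shows the $s$-th central moment equals $k^{s/2}(4\varepsilon(1-2\varepsilon))^{1-s/k}$ up to constants, so the intermediate constraints reduce to $(4\varepsilon(1-2\varepsilon))^{(k-s)/k}\le (s/k)^{s/2}$. I expect that checking these intermediate constraints is the bulk of the remaining work: for small $\varepsilon$ (say $\varepsilon\le 1/k$) they follow immediately from the $s=k$ bound, while for larger $\varepsilon$ one either has to verify them directly or choose $t$ as the minimum of the per-$s$ upper bounds $t\le\sqrt{s}/\{2\varepsilon(1-2\varepsilon)[(2\varepsilon)^{s-1}+(1-2\varepsilon)^{s-1}]\}^{1/s}$ and then check that this minimum is still of order $\sqrt{k}/(\varepsilon(1-2\varepsilon))^{1/k}$ in the regimes where the claimed lower bound does not already follow from Lemma~\ref{lemma:lbbcov} via the bounded-variance construction. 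In either case the final scaling matches the claim, and the mean separation dominates the bounded-variance lower bound $\Omega(\sqrt{\varepsilon/(1-2\varepsilon)})$ that Lemma~\ref{lemma:lbbcov} already provides, so taking the stronger of the two constructions yields the stated bound for all $k\ge 2$ and $\varepsilon\in(0,\tfrac12)$.
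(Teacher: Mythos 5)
Your construction and parameter choice are essentially the same as the paper's: both use a two-point measure with a single outlier placed at $t = \Theta\bigl(\sqrt{k}\,/\,(\varepsilon(1-2\varepsilon))^{1/k}\bigr)$ carrying mass $2\varepsilon$. The only cosmetic difference is that you take a symmetric pair $\pm t$, whereas the paper fixes $\D_1 = \delta_0$ and only shifts $\D_2$; this changes nothing essential (and the paper's slightly tighter estimate $(2\varepsilon)^{k-1}+(1-2\varepsilon)^{k-1}\le 1$ lets it avoid the extra factor of $4$ you put inside the $k$-th root).

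The more important point is that the obstacle you flag, namely that Definition~\ref{defn:certifiablebounded} demands the $s$-th moment bound $\E[(X-\mu)^s]\le s^{s/2}$ for \emph{every} even $s\le k$ and not just $s=k$, is a genuine gap, and it is a gap in the paper's own proof as well: the paper verifies only $s=k$. Your closing claim that ``the final scaling matches the claim\,\ldots\,for all $k\ge2$ and $\varepsilon\in(0,\tfrac12)$'' cannot, however, be rescued. Take $\varepsilon=1/4$ and $k\to\infty$: the $s=2$ constraint of Definition~\ref{defn:certifiablebounded} forces the variance to be at most $2$, and then Lemma~\ref{lemma:generaliden} instantiated at $k=2$ with $\delta=1-2\varepsilon=1/2$ caps the mean separation at $2\sqrt{2}/\sqrt{\delta}=4$, a constant independent of $k$ -- yet $\sqrt{k}\cdot\varepsilon^{1-1/k}\cdot(1-2\varepsilon)^{-1/k}\approx \sqrt{k}/4\to\infty$. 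So no choice of $t$ can satisfy all the intermediate-moment constraints while achieving the stated separation at such intermediate $\varepsilon$; the lemma as literally stated is false there. The statement (and the paper's proof) should either be restricted to the regimes the paper actually uses, namely $\varepsilon$ sufficiently small or $\varepsilon$ sufficiently close to $1/2$ (in both of which the factor $(2\varepsilon(1-2\varepsilon))^{1-s/k}$ is small enough that the $s<k$ constraints hold for the given $t$), or the hypothesis should be weakened to a bound on the $k$-th moment alone.
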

We are thus the first to demonstrate that an efficient algorithm can achieve the information-theoretically optimal error up to constants for the entire range of corruption for distributions with certifiably bounded higher moments.

\subsubsection{Gaussian Distributions}
\begin{theorem}\label{thm:gauss}
    Let $\uip \overset{\mathrm{iid}}{\sim} \cN(\mu^*, I_d)$. 
    Let $\{z_i\}_{i=1}^n$ be an $\varepsilon$-corruption of $\uip$ where $C \leqslant \varepsilon < \frac{1}{2}$ for $C > 0$ sufficiently large. 
    Let $n = \Omega\left( d^{O(\log \frac{1}{\delta})}\right)$ where $\delta = 1 - 2\varepsilon$.  
    Then there is an algorithm based on SoS, running in time $n^{O\left( \log \frac{1}{\delta} \right)}$ that takes as input $\ip, \varepsilon$, and returns a vector $\hat{\mu}$ such that 
    \begin{align*}
        \Vert \hat{\mu} - \mus \Vert_2 = O\left( \sqrt{\log \frac{1}{\delta}} + \sqrt{\frac{d}{n}}\right)
    \end{align*}
    with probability 0.99.
\end{theorem}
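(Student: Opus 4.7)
The plan is to reduce \cref{thm:gauss} to \cref{theorem:general} by exploiting the fact that the Gaussian has bounded moments of \emph{every} order, so we may choose the moment degree $k$ to minimize the error bound. Concretely, I would instantiate \cref{theorem:general} with $k = \Theta(\log(1/\delta))$ (rounded to a power of $2$) and then simplify the resulting error expression.

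First I would verify that $\cN(\mu^*, I_d)$ has certifiably bounded $k$-th moments in the sense of \cref{defn:certifiablebounded} for every even $k$. Since $X - \mu^* \sim \cN(0, I_d)$, the standard Gaussian moment identity gives $\E\bigl[\langle X - \mu^*, u\rangle^s\bigr] = (s-1)!!\,\|u\|_2^s$ for every even $s$. Using the elementary inequality $(s-1)!! \leq s^{s/2}$ and the observation that $\|u\|_2^s = (u_1^2 + \dots + u_d^2)^{s/2}$ is manifestly a sum of squares of degree $s$ in $u$, we obtain the polynomial identity
\begin{equation*}
s^{s/2}\,\|u\|_2^s \;-\; \E\bigl[\langle X - \mu^*, u\rangle^s\bigr] \;=\; \bigl(s^{s/2} - (s-1)!!\bigr) \cdot \bigl(u_1^2 + \dots + u_d^2\bigr)^{s/2},
\end{equation*}
which is a non-negative scalar times a sum of squares and hence exhibits the required SoS proof of degree $s$.

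Next I would handle the sample complexity. The algorithm from \cref{theorem:general} needs the uniform distribution on the uncorrupted samples to also certifiably have bounded $k$-th moments. Following the concentration argument used in Fact~7.6 of \cite{hopkins2018mixture}, this holds with the desired probability provided $n = \Omega(d^{O(k)})$. Plugging in $k = \Theta(\log(1/\delta))$ matches the claimed sample bound $n = \Omega(d^{O(\log(1/\delta))})$ and running time $n^{O(\log(1/\delta))}$.

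Finally I would simplify the error bound. Applying \cref{theorem:general} yields error
\begin{equation*}
O\!\left(\frac{\sqrt{k}\,\varepsilon^{1-1/k}}{(1-2\varepsilon)^{1/k}} + \sqrt{\frac{d}{n}}\right) \;=\; O\!\left(\frac{\sqrt{k}}{\delta^{1/k}} + \sqrt{\frac{d}{n}}\right),
\end{equation*}
using $\varepsilon < 1/2$. With $k = \lceil \log(1/\delta)\rceil$ (rounded up to a power of $2$, which only affects constants), we have $\delta^{1/k} = e^{-\log(1/\delta)/k} = \Theta(1)$, so $\sqrt{k}/\delta^{1/k} = \Theta(\sqrt{\log(1/\delta)})$, giving the claimed bound. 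The only delicate point is that this reduction is ``black-box'' in \cref{theorem:general}, so I expect no new obstacle beyond ensuring that $k$ can indeed be taken to grow with $1/\delta$ without breaking the assumptions of \cref{theorem:general}; since that theorem's proof works uniformly in $k$ (only paying $n^{O(k)}$ time and $d^{O(k)}$ samples), this is immediate, and the bulk of the work is truly in the moment-certification identity above.
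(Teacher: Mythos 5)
Your proposal is correct and follows essentially the same route as the paper: the paper likewise derives \cref{thm:gauss} as a corollary of \cref{theorem:general} by instantiating $k = \Theta(\log\frac{1}{\delta})$, using the fact that the identity-covariance Gaussian has certifiably bounded moments of every even order. The one point where you supply more detail than the paper is the certifiability step: the paper simply cites \cite{hopkins2018mixture} for the fact that $\cN(\mu^*, I_d)$ satisfies \cref{defn:certifiablebounded}, whereas you exhibit the SoS certificate directly via the Gaussian moment identity $\E\bigl[\langle X - \mu^*, u\rangle^s\bigr] = (s-1)!!\,\|u\|_2^s$, the bound $(s-1)!! \le s^{s/2}$, and the observation that $\|u\|_2^s$ is itself SoS; this is a clean, self-contained argument and is entirely correct. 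Your handling of rounding $k$ to a power of $2$, the sample-complexity dependence $n = \Omega(d^{O(k)})$, and the simplification $\sqrt{k}/\delta^{1/k} = \Theta(\sqrt{\log(1/\delta)})$ at $k = \Theta(\log\frac{1}{\delta})$ also all match the paper's reasoning.
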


We remark here that the $O\left(\sqrt{d/n}\right)$ term is necessary even without corruption. The remaining term of $O\left( \sqrt{\log \frac{1}{\delta}} \right)$ achieves the optimal dependence on $\varepsilon$ as $\varepsilon \to \frac{1}{2}$. To the best of our knowledge there is only an \emph{inefficient} estimator \cite{zhu2020does} that achieves this error, under a slightly weaker contamination model. This is thus the first that the optimal error of $O\left(\sqrt{\log \frac{1}{\delta}}\right)$ could be attained for this problem in time $n^{O\left( \log \frac{1}{\delta}\right)}$. We require $n = \Omega\left( d^{O(\log \frac{1}{\delta})}\right)$ for ensuring concentration of higher moments. We refer the reader to Section \ref{sec:gaussian} for a more detailed discussion on this problem.
\begin{lemma}[Lower Bound for Gaussians]
    There exists two distributions $\D_1$ and $\D_2$ on $\R$ such that for $C \leqslant \varepsilon < \frac{1}{2}$ for a sufficiently large constant $C > 0$ such that
    \begin{enumerate}
        \item $d_{TV}(\D_1, \D_2) \leqslant 2\varepsilon$.
        \item $\D_1$ and $\D_2$ are Gaussians with unit variance.
        \item $\left| \E_{X \sim \D_1}[X] - \E_{X \sim \D_2}[X] \right| \geqslant \Omega \left(\sqrt{\log \frac{1}{1 - 2\varepsilon}}\right)$.
    \end{enumerate}
\end{lemma}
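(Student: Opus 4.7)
The plan is to exhibit two unit-variance Gaussians whose mean gap is $\Omega\bigl(\sqrt{\log(1/(1-2\varepsilon))}\bigr)$ and whose total variation distance is at most $2\varepsilon$. I would take $\D_1 = \cN(0, 1)$ and $\D_2 = \cN(\Delta, 1)$, so that conditions~2 and~3 of the statement hold automatically once $\Delta$ is chosen appropriately; the only nontrivial step is to control the TV distance.

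My first step is to compute the TV distance in closed form. Exploiting the symmetry of the two densities about the midpoint $\Delta/2$, one obtains
\[
    d_{TV}\bigl(\cN(0,1),\, \cN(\Delta, 1)\bigr) \;=\; 2\Phi(\Delta/2) - 1,
\]
where $\Phi$ denotes the standard normal CDF. Setting $\delta \coloneqq 1 - 2\varepsilon$, the TV constraint becomes $\Phi(\Delta/2) \leq 1 - \delta/2$, so I may take $\Delta$ as large as $2\Phi^{-1}(1 - \delta/2)$.

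The remaining step is to show $\Phi^{-1}(1 - \delta/2) = \Omega(\sqrt{\log(1/\delta)})$, which follows from a standard Mill's-ratio lower bound on the Gaussian tail: for $t > 0$,
\[
    1 - \Phi(t) \;\geq\; \frac{t}{\sqrt{2\pi}(1+t^2)}\, e^{-t^2/2}.
\]
Plugging in $t = c\sqrt{\log(1/\delta)}$ for any fixed constant $c < \sqrt{2}$, the right-hand side decays only like $\delta^{c^2/2}$ up to lower-order factors, which dominates $\delta/2$ once $\delta$ lies below a constant depending on $c$. This smallness of $\delta$ is precisely what the hypothesis $\varepsilon \geq C$ provides, so $\Phi^{-1}(1 - \delta/2) \geq c\sqrt{\log(1/\delta)}$ and hence $\Delta = \Omega\bigl(\sqrt{\log(1/(1 - 2\varepsilon))}\bigr)$ as required.

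There is no real obstacle here; the only care needed is in calibrating the constant $c$ and the threshold $C$ so that the Mill's-ratio bound dominates $\delta/2$ uniformly in the stated range of $\varepsilon$. The argument is entirely one-dimensional and uses no sample-complexity or algorithmic machinery, matching the character of the other two lower bounds in the paper.
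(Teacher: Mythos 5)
Your proof is correct, but it takes a genuinely different route from the paper. The paper avoids the exact form of the Gaussian TV distance entirely: it invokes Tsybakov's strengthening of the Bretagnolle--Huber inequality, $d_{TV}(P,Q) \leq 1 - \tfrac12 e^{-\kldiv{P}{Q}}$, specializes $\kldiv{\cN(0,1)}{\cN(\mu,1)} = \mu^2/2$, and then algebraically inverts $2\varepsilon \leq 1 - \tfrac12 e^{-\mu^2/2}$ to read off $\mu \geq \sqrt{\log\tfrac{1}{1-2\varepsilon} - \log 2}$. You instead use the closed-form identity $d_{TV}(\cN(0,1), \cN(\Delta,1)) = 2\Phi(\Delta/2) - 1$ and then lower-bound $\Phi^{-1}(1-\delta/2)$ via the Mill's-ratio tail estimate $1-\Phi(t) \geq \tfrac{t}{\sqrt{2\pi}(1+t^2)} e^{-t^2/2}$: with $t = c\sqrt{\log(1/\delta)}$, $c < \sqrt 2$, the bound decays as $\delta^{c^2/2}$ which eventually dominates $\delta/2$ since $c^2/2 < 1$, exactly as you argue. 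Both are sound. The KL/Bretagnolle--Huber route is shorter and more portable (it applies to non-Gaussian families as long as one can compute or bound the KL), while your argument is more elementary and self-contained for the Gaussian case, at the cost of the explicit-constant bookkeeping you flag at the end. Your reliance on $\varepsilon \geq C$ to make $\delta$ small enough mirrors the paper's use of the same hypothesis.
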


Our results complement the results of existing robust mean estimation algorithms based on SoS \cite{kothari2018robust, hopkins2018mixture} -- We show that SoS based algorithms work for the full range of corruption and also attain \emph{optimal error} rates (up to constants) even as the corruption rate $\varepsilon \to \frac{1}{2}$. We emphasize here that there are few efficient algorithms that attain optimal breakdown point for this problem, and efficient algorithms achieving optimal error as $\varepsilon \to \frac{1}{2}$ were \emph{unknown} for different distributional assumptions. Our results make significant progress in this regard.

\subsection{Technical Overview}\label{sec:techniques}

At a high level, SoS methods for robust mean estimation \cite{kothari2018robust} capture the following intuition using a SoS proof: Two distributions that are close in statistical distance, while belonging to a sufficiently nice distribution class must also have close means. They proceed by first setting up a system of polynomial constraints that captures the important aspects of the underlying distribution. This system additionally captures the fact that the input is an $\varepsilon$-corruption by encoding a constraint that asks the program variables to agree with the input on $(1-\varepsilon)$ fraction of the samples. A robust identifiability proof is then derived in SoS that essentially argues that any feasible solution of the polynomial system will be close to the true mean. This suffices as the goal becomes finding an appropriate degree SoS relaxation for (approximately) solving the polynomial system under consideration. We will stick to this blueprint when setting up the polynomial system in this work. 

When the corruption is large and close to $\frac{1}{2}$, the important thing as it turns out will be to focus on the overlap $\delta$ between the distributions and not the statistical distance. Indeed, the statistical distance and the overlap are complementary of one another: the sum of the statistical distance and the overlap between two distributions is 1\footnote{We refer the reader to Appendix \ref{app:factsoltv} for a discussion.}. We recall here that the way the canonical SoS program \cite{kothari2018robust} is set up, the statistical distance between the uniform distribution over the program variables and the uncorrupted samples is at most $2\varepsilon$, resulting in a corresponding overlap $\delta$ of at least $1 - 2\varepsilon$\footnote{Observe that $\varepsilon = \frac{1}{2}$ already implies that the distributions (and respectively their means) could be arbitrarily far apart. Note that this already indicates that SoS based methods have optimal breakdown point.}.  As long as the distributions belong to a sufficiently nice family, there is an intuitive relationship between the overlap of distributions and the distance between their means: the larger the overlap, the closer the means; the smaller the overlap, the farther apart the means. 

We obtain sharp error rates by capturing this intuition with a general identifiability proof when $\varepsilon \to \frac{1}{2}$ by exploiting the underlying properties of the distributions.  We then provide an SoS version of this proof that allows us to go from identifiability proofs to efficient algorithms. 

Indeed there is the question of why we need to view the problem from the perspective of overlap instead of statistical distance. The reason lies in how previous SoS proofs worked for our problem.

In the SoS proof introduced in \cite{kothari2018robust}, scaled and centered projections of data points that did not belong to the overlap region between the two distributions played a crucial role in bounding the distance between the means. The authors used appropriate SoS inequalities (SoS version of \Hoelder or Cauchy-Schwarz) to decouple the scaling and the centered projections. While the latter was controlled by the properties of the underlying distribution, the former related to the points not in the overlap and was a function of the statistical distance between the uniform distributions over the program variables and the uncorrupted samples\footnote{Recall that the statistical distance is at most $2\varepsilon$.}. This quantity provided the optimal dependency on $\varepsilon$ when $\varepsilon$ was sufficiently small and bounded away from $\frac{1}{2}$. When $\varepsilon \to \frac{1}{2}$, this quantity grows but is always bounded from above by a constant. Although we can obtain error that diverges as $\varepsilon \to \frac{1}{2}$ by being careful with the above approach and by introducing appropriate modifications (Theorems \ref{thm:optbdbcov}, \ref{thm:optbdmom}), decoupling the scaling and the centered projections the way in which it is done in current SoS proofs proves insufficient to attain \emph{sharp} error rates when $\varepsilon \to \frac{1}{2}$ for many distribution classes.

This step from optimal breakdown point to optimal error is what will thus require a new identifiability proof -- one that works primarily with overlap instead of statistical distance. In this proof we argue that the distance between the means is bounded by the sum of the distances between the individual means to the mean of the points in the overlap region between the distributions. As we show in our general identifiability proof for large corruption rates (Lemma \ref{lemma:generaliden}), bounding these distances still requires a decoupling step (into a scaling and a centered projection term) similar to the above case. However the key difference is that the scaling term is now a function of overlap, and this decoupling obtains the right error as $\varepsilon \to \frac{1}{2}$. For the bounded covariance case, a careful and optimized analysis inspired by existing identifiability proofs suffices to achieve optimal error (up to constants) for all $\varepsilon \in [0,\frac{1}{2})$. However, this happens to be a special case and these techniques do not generalize for other distribution classes. Specifically, as $\varepsilon \to \frac{1}{2}$ the general identifiability proof for large $\varepsilon$ recovers the optimal error for bounded covariance as a special case.

In summary, we study existing SoS proofs for robust mean estimation and through a careful analysis demonstrate their optimality in terms of the breakdown point. We then present a new proof of identifiability for robust mean estimation for large corruption rates. By designing an SoS version of this proof of identifiability, we achieve optimal error up to constants even as $\varepsilon \to \frac{1}{2}$ efficiently for a large class of distributions. 

\subsection{Notation}

We use the following convention: $\N$ is the set of natural numbers and $\R$ is the set of real numbers. $\R^d$ is the set of real vectors in $d$ dimensions.  For a positive integer $n$, $[n]$ is the set $\{1, 2, \dots, n\}$. Unless explicitly stated, the base of the logarithm is $e$. Unless otherwise specified, all vector norms $(\Vert . \Vert$ or $\Vert . \Vert_2)$ are the euclidean norm, and all matrix norms are the spectral norm. We use the notation $O(.), \Theta(.), \Omega(.), \lesssim, \gtrsim $ to hide absolute constants. We use $\mathbb{1}[.]$ for the indicator variable. We use $\widetilde{O}(.)$ to hide logarithmic factors. We denote the identity matrix in $d$ dimensions by $I_d$. Let $A, B \in \R^{d \times d}$. Then $A \preccurlyeq B$ is the ordering of $A$ and $B$ in \Lowner order. We use statistical distance interchangeably with Total Variation distance and denote it by $d_{TV}$. We use $\kldiv{X}{Y}$ to denote the Kullback-Leibler divergence between two probability distributions $X$ and $Y$. We use the notation $\uip$ and $\ip$ to denote uncorrupted and corrupted samples respectively. We use RHS and LHS to refer to the right and the left of an inequality respectively.

\subsection{Organization}
In Section \ref{sec:optbdcov}, we provide an optimized analysis that achieves optimal error up to constant factors for the bounded covariance case and discuss the challenges in extending this analysis for other cases. We overcome these challenges by presenting a new proof of identifiability for robust mean estimation when $\varepsilon$ is large in Section \ref{sec:iden} and subsequently capture this identifiability proof using a low-degree SoS proof to obtain efficient algorithms. We discuss the Gaussian case in Section \ref{sec:gaussian} before discussing the rounding procedure in Section \ref{sec:rounding}. We demonstrate the applicability of our identifiability proof to the problem of robust sparse mean estimation in Section \ref{sec:sparse}. In Appendix \ref{sec:lower} we provide information-theoretic lower bounds for the different distribution classes that we consider. We provide an introduction to the SoS proofs to algorithms framework in Appendix \ref{app:sos} and provide an SoS toolkit for all the SoS proofs that we use in Appendix \ref{app:sostool}.

\section{Robust Mean Estimation via Sum-of-Squares}\label{sec:main}
Throughout this paper, we always take sufficiently many samples to ensure that the distributional assumptions that we make also apply to the uniform distribution over the uncorrupted samples. The number of samples required to preserve these distributional assumptions also ensures that the sample mean of the uncorrupted samples is sufficiently close to the true mean. Consequently, the algorithms that we present in this paper will estimate the sample mean of the uncorrupted samples. Finally, by using triangle inequality, we obtain the desired estimate of the true mean.

\subsection{Optimal Error for Bounded Covariance}\label{sec:optbdcov}
We now consider the canonical SoS program introduced in \cite{kothari2018robust} for robust mean estimation. We provide an optimized analysis that achieves information-theoretically optimal error for all $\varepsilon \in [0,\frac{1}{2})$ up to constant factors for the bounded covariance case.  Consider the polynomial system \ref{eqn:basicsystem} in variables $\pwi$ and $\pxi$ where $w_i \in \R$ and $x_i \in \R^d$. $\pwi$ are indicators for if our program ``thinks'' a specific data point is uncorrupted or not. Observe that whenever $w_i$ is non-zero, we ask the program to agree with the corresponding data point. We capture using the second and third constraint that our input is an $\varepsilon$-corruption. The last two constraints enforce the fact that the empirical covariance of the program variables is bounded\footnote{ We observe that the positive semi-definiteness constraint can be implemented as a polynomial constraint by considering an \emph{auxiliary} matrix $R \in \R^{d \times d}$ such that $\Sigma + R^TR = I_d$. $R$ is auxiliary as we will not seek an assignment for it.}.

We note that this system is feasible for the following choice of  $\pwi$ and $\pxi$: $ \forall i \in [n]: w_i = \one\{z_i = x_i^*\}$ and $ \forall i \in [n]: x_i = x_i^*$. 
\begin{align}\label{eqn:basicsystem}
    \cA_{w, x} \coloneqq \left\{
    \begin{array}{l}
        \forall i \in [n]: \ w_i^2 = w_i   \\
        \forall i \in [n]: \ w_i\cdot (z_i - x_i) = 0\\
        \sn w_i \geqslant (1 - \varepsilon)\\
        \mu = \sn x_i \\
          \Sigma \coloneqq \sn (x_i - \mu)(x_i - \mu)^T \preccurlyeq I_d
        \end{array}
    \right\}
\end{align}
We now define the following constants that we will require in our proofs.
\begin{align}\label{eqn:truesamples}
    \forall i \in [n]: w_i^* \coloneqq \mathbb{1}\{z_i = x_i^*\}
\end{align}
Notice that these are precisely the ground truth indicators for if sample $x_i^*$ was uncorrupted.
Using this definition, we have the following Lemma that we will require in our proofs.
\begin{lemma}\label{lemma:thm}
\begin{align}
    \cA_{w,x} \sststile{2}{w} \left\{
    \begin{array}{l}
     \forall i \in [n]: w_iw_i^* \cdot x_i = w_iw_i^* \cdot x_i^*\\
     \forall i \in [n]: (1 - w_iw_i^*)^2 = (1 - w_iw_i^*) \\
     \sn (1 - w_iw_i^*) \leqslant 2\varepsilon\\
      1 - 2\varepsilon \leqslant \sn w_iw_i^*
    \end{array}
    \right\}
\end{align}
   
\end{lemma}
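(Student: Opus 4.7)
The crucial structural observation is that $w_i^*$ is not a program variable but a constant in $\set{0,1}$ determined by the input: $w_i^* = \one\{z_i = x_i^*\}$. Consequently every equation involving $w_i^*$ reduces to a case analysis on the value of $w_i^*$, and each identity becomes a polynomial identity in the actual variables $\pwi$ (with data-dependent coefficients). In particular, $(w_i^*)^2 = w_i^*$ holds as a scalar identity, and $w_i^*\cdot z_i = w_i^*\cdot x_i^*$ also holds trivially (if $w_i^*=1$ then $z_i = x_i^*$; if $w_i^* = 0$ both sides vanish). I will use these two facts freely below.

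For the first identity, I would combine the axiom $w_i(z_i - x_i) = 0$ in $\cA_{w,x}$ with the scalar identity $w_i^*(z_i - x_i^*) = 0$. Multiplying the former by $w_i^*$ gives $w_i w_i^* x_i = w_i w_i^* z_i$, and then using the latter yields $w_i w_i^* z_i = w_i w_i^* x_i^*$, so $w_i w_i^*\cdot x_i = w_i w_i^*\cdot x_i^*$. For the second identity, expand
\begin{align*}
(1 - w_i w_i^*)^2 \;=\; 1 - 2 w_i w_i^* + w_i^2 (w_i^*)^2 \;=\; 1 - 2 w_i w_i^* + w_i w_i^* \;=\; 1 - w_i w_i^*,
\end{align*}
using the axiom $w_i^2 = w_i$ together with $(w_i^*)^2 = w_i^*$. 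Both derivations are degree $2$ in $w$.

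For the two aggregate inequalities, the plan is to SoS-certify $(1 - w_i)(1 - w_i^*)\ge 0$ sample-wise and then average. Since $w_i^*$ is a boolean constant, either $(1-w_i^*) = 0$, in which case the product is identically zero, or $(1-w_i^*) = 1$, in which case $(1-w_i)(1-w_i^*) = 1 - w_i = (1-w_i)^2$ by $w_i^2 = w_i$. Either way $(1-w_i)(1-w_i^*)$ equals a square modulo the axioms, so $\cA_{w,x}\sststile{2}{w} (1-w_i)(1-w_i^*)\ge 0$. Averaging gives
\begin{align*}
\tfrac{1}{n}\sum_i w_i w_i^* \;\ge\; \tfrac{1}{n}\sum_i w_i + \tfrac{1}{n}\sum_i w_i^* - 1 \;\ge\; (1 - \varepsilon) + (1 - \varepsilon) - 1 \;=\; 1 - 2\varepsilon,
\end{align*}
where I use the axiom $\tfrac{1}{n}\sum_i w_i \ge 1 - \varepsilon$ and the scalar fact $\tfrac{1}{n}\sum_i w_i^* \ge 1 - \varepsilon$ (which holds because $\set{z_i}$ is an $\varepsilon$-corruption of $\uip$, so at most $\varepsilon n$ coordinates of $w_i^*$ are zero). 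This gives the fourth claim directly, and the third follows by rewriting $\tfrac{1}{n}\sum_i (1 - w_i w_i^*) = 1 - \tfrac{1}{n}\sum_i w_i w_i^*$.

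There is no serious obstacle here; the lemma is a bookkeeping step. The only thing one must be careful about is not treating $w_i^*$ as an SoS variable: the inequality $(1-w_i)(1-w_i^*)\ge 0$ would not be SoS of low degree if $w_i^*$ were a variable without a booleanity axiom, but as a scalar it trivializes. Everything above is degree $2$ in $w$, matching the claimed degree of the SoS proof.
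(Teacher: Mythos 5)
Your proposal is correct and follows essentially the same route as the paper's proof: the algebraic identities are verified by direct expansion using $w_i^2=w_i$ and the scalar facts about $w_i^*$, and the aggregate inequalities follow by certifying $(1-w_i)(1-w_i^*)\ge 0$ (the paper first shows $1-w_i\ge 0$ and multiplies by the nonnegative constant $1-w_i^*$; you observe directly that the product equals a square modulo the booleanity axiom, which is the same calculation) and then averaging and invoking $\sn w_i\ge 1-\varepsilon$ and $\sn w_i^*\ge 1-\varepsilon$.
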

We defer the proof of the above \ref{lemma:thm} to Appendix \ref{app:proofs}. We are now ready to prove Theorem \ref{thm:boundedcov}. We recall the theorem statement before proving it.
\begingroup
\allowdisplaybreaks
\begin{theorem}
     Let $\D$ be a distribution on $\R^d$ with mean $\mu^*$ and covariance $\Sigma^*$. 
    Suppose $\Sigma^* \preccurlyeq \sigma^2\cdot I_d$. 
    Let $\uip \overset{\mathrm{iid}}{\sim} \D.$  
    Let $\{z_i\}_{i=1}^n$ be an $\varepsilon$-corruption of $\uip$ where $\varepsilon \in [0, \frac{1}{2})$.
    Let $n = \Omega\left( d \log d \right)$. 
    Then there is an efficient algorithm based on SoS, running in time $n^{O(1)}$ that takes as input $\ip, \varepsilon, \sigma$, and returns a vector $\hat{\mu}$ such that 
    \begin{align*}
        \Vert \hat{\mu} - \mus \Vert_2 = O\left( \sigma \cdot \sqrt{\frac{\varepsilon}{1-2\varepsilon}} + \sigma \cdot \sqrt{\frac{d}{n}}\right)
    \end{align*}
    with probability 0.99.
\end{theorem}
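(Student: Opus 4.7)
The plan is to reduce the problem to estimating the sample mean $\sm = \sn x_i^*$ of the uncorrupted samples, and then to bound $\|\hat\mu - \sm\|_2$ via a careful SoS identifiability argument combined with a pseudo-expectation Cauchy--Schwarz decoupling. By rescaling the input (the algorithm is given $\sigma$), we may assume $\sigma = 1$. A standard concentration argument (which the paper defers to Lemma~\ref{lemma:boundedcovconc}) then guarantees that with probability at least $0.99$: (i) $\|\sm - \mu^*\|_2 = O(\sqrt{d/n})$ and (ii) the empirical covariance of the uncorrupted samples satisfies $\sn (x_i^*-\sm)(x_i^*-\sm)^\top \preccurlyeq 2\, I_d$. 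By the triangle inequality, it therefore suffices to prove $\|\hat\mu - \sm\|_2 = O(\sqrt{\varepsilon/(1-2\varepsilon)})$.

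The identifiability argument is carried out through the ``overlap'' quantities supplied by Lemma~\ref{lemma:thm}. Writing $a_i = w_i w_i^*$ and $\alpha = \sn a_i$, we have $a_i(x_i-x_i^*)=0$, $a_i^2=a_i$, $1-\alpha \leqslant 2\varepsilon$, and $\alpha \geqslant 1-2\varepsilon$ all derivable from $\cA_{w,x}$. Using $a_i x_i = a_i x_i^*$ and $\sn(\alpha - a_i) = 0$, an elementary rearrangement yields the key identity
\begin{align*}
\alpha\cdot (\mu - \sm) \;=\; \sn (\alpha - a_i)\bigl[(x_i - \mu) - (x_i^* - \sm)\bigr].
\end{align*}
Taking the inner product of both sides with $\mu - \sm$ and applying SoS Cauchy--Schwarz with weights $(\alpha - a_i)^2$ (whose average equals $\alpha(1-\alpha)$ thanks to $a_i^2 = a_i$), combined with the covariance bounds $\Sigma \preccurlyeq I_d$ (a constraint of $\cA_{w,x}$) and $\sn (x_i^*-\sm)(x_i^*-\sm)^\top \preccurlyeq 2\, I_d$ (from the concentration fact above), produces the central quartic SoS inequality
\begin{align*}
\alpha^2 \cdot \|\mu - \sm\|_2^4 \;\leqslant\; C\cdot \alpha(1-\alpha)\cdot \|\mu - \sm\|_2^2
\end{align*}
for an absolute constant $C$, provable in constant SoS degree in $(w,x)$.

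For the rounding, let $\pE$ denote a degree-$O(1)$ pseudo-expectation satisfying $\cA_{w,x}$, computable in time $n^{O(1)}$ by semidefinite programming. Set $X \coloneqq \alpha\|\mu-\sm\|_2^2$ and note $X \geqslant 0$ in SoS (both $\alpha$ and the squared norm are SoS). The quartic inequality together with $1 - \alpha \leqslant 2\varepsilon$ yields $\pE[X^2] \leqslant C\cdot \pE[(1-\alpha)\cdot X] \leqslant 2C\varepsilon \cdot \pE[X]$. Pseudo-expectation Cauchy--Schwarz, $\pE[X]^2 \leqslant \pE[X^2]$, then gives $\pE[X] \leqslant 2C\varepsilon$. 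Invoking $\alpha \geqslant 1 - 2\varepsilon$ in SoS after the fact, $(1-2\varepsilon)\pE[\|\mu-\sm\|_2^2] \leqslant \pE[X] \leqslant 2C\varepsilon$, so $\pE[\|\mu-\sm\|_2^2] = O(\varepsilon/(1-2\varepsilon))$. Outputting $\hat\mu \coloneqq \pE[\mu]$, coordinate-wise Jensen gives $\|\hat\mu - \sm\|_2^2 \leqslant \pE[\|\mu-\sm\|_2^2]$, and the triangle inequality with $\|\sm - \mu^*\|_2 = O(\sqrt{d/n})$ finishes the proof (restoring $\sigma$ by undoing the initial rescaling).

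The main obstacle is precisely the decoupling step from the quartic SoS inequality to a genuine bound on $\pE[\|\mu-\sm\|_2^2]$: ``dividing'' by $\alpha^2$ or by $\|\mu-\sm\|_2^2$ is not legal inside SoS, and the naive route of combining the quartic inequality with the crude bound $\alpha^2 \geqslant (1-2\varepsilon)^2$ loses a factor of $\sqrt{1-2\varepsilon}$ and delivers only the suboptimal rate $O(\sqrt{\varepsilon}/(1-2\varepsilon))$. Applying pseudo-expectation Cauchy--Schwarz to the auxiliary quadratic quantity $X = \alpha\|\mu-\sm\|_2^2$ first, and only \emph{then} substituting the lower bound on $\alpha$, is the key maneuver that recovers the sharp information-theoretic rate $O(\sqrt{\varepsilon/(1-2\varepsilon)})$ claimed in the theorem.
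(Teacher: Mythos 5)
Your argument is correct and reaches the claimed $O(\sigma\sqrt{\varepsilon/(1-2\varepsilon)})$ rate, but by a genuinely different path from the paper's. Both proofs start from the same polynomial system $\cA_{w,x}$ and the consequences in Lemma~\ref{lemma:thm}, and both end the same way (the Jensen step for rounding, Lemma~\ref{lemma:boundedcovconc} for the sampling term), so the divergence is in the middle. The paper writes $\err^4 = \bigl(\sn(1-w_iw_i^*)\langle x_i-x_i^*,\del\rangle\bigr)^2$, applies SoS Cauchy--Schwarz with the \emph{complement} weights $1-w_iw_i^*$ (whose mean is $1-\alpha \leqslant 2\varepsilon$), and then \emph{exactly} expands $\langle x_i-x_i^*,\del\rangle = \langle x_i-\mu,\del\rangle - \langle x_i^*-\mus,\del\rangle + \err^2$, observing that two of the resulting cross terms vanish identically. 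The surviving copy of $\err^4$ on the right carries the factor $2\varepsilon$, so subtracting it leaves $(1-2\varepsilon)\err^4 \leqslant 8\varepsilon\,\err^2$, which SoS Cancellation resolves directly. You instead use the \emph{mean-centered} weights $\alpha - w_iw_i^*$ (which average to zero), so the stray $\del$ inside the inner product is annihilated from the start, and you land immediately on the cleaner quartic $\alpha^2\err^4 \leqslant C\,\alpha(1-\alpha)\,\err^2$ without any exact-expansion gymnastics; the price is a residual power of $\alpha$ to peel off. You do that at the pseudo-expectation level via $\pE[X]^2 \leqslant \pE[X^2]$ with $X = \alpha\err^2$, which is correct but not actually necessary: since $X$ is a product of SoS-nonnegative polynomials, your quartic reads $X^2 \leqslant C(1-\alpha)X \leqslant 2C\varepsilon\,X$ inside SoS, SoS Cancellation then gives $X \leqslant 2C\varepsilon$, and $(1-2\varepsilon)\err^2 \leqslant \alpha\err^2 = X$ finishes, so the whole argument can be kept within the proof system just as in the paper. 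In short, centering the Cauchy--Schwarz weights is an alternative way of organizing the same $(1-2\varepsilon)$-cancellation the paper extracts from the exact algebraic expansion; your version makes the role of the overlap mass $\alpha = \sn w_iw_i^*$ fully explicit and anticipates the general identifiability argument of Lemma~\ref{lemma:generaliden}, while the paper's version avoids introducing $\alpha$ as a distinguished quantity at all in the bounded-covariance case.
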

  
\begin{proof}
Overloading notation, we will use $\mu^* = \sn x_i^*$. As discussed earlier, it will suffice to estimate the sample mean of $\uip$. Moreover, using our assumption on $n$ in the above theorem we further assume that empirical covariance over the uncorrupted points is bounded from above by $I_d$ in \Lowner order for the purpose of our proof.  We refer the reader to related discussion in Lemma \ref{lemma:boundedcovconc}. We assume $\sigma = 1$ without loss of generality.
We now prove the following:
    \begin{align*}
        \cA_{w,x}  \sststile{O(1)}{w,x} \err^2 \leqslant O\left(\frac{\varepsilon}{1-2\varepsilon}\right).
    \end{align*}
This will suffice as the estimator we obtain after rounding will satisfy the same error guarantees as we show in Section \ref{sec:rounding}. Starting with the following degree 4 polynomial in $\mu$, we have
    \begin{align*}
        \err^4 &= \left(\err^2\right)^2 = \langle \del, \del \rangle^2 = \left(\sn \langle x_i - x_i^*, \del \rangle \right)^2 \\
        &= \left(\sn (1 - w_iw_i^*) \langle x_i - x_i^*, \mu - \mu^* \rangle \right)^2 \leqslant \left(\sn(1 - w_iw_i^*)^2\right)\left( \sn \langle x_i - x_i^*, \mu - \mu^* \rangle^2 \right) \\
        & = \left(\sn(1 - w_iw_i^*)\right)\left( \sn \langle x_i - x_i^*, \mu - \mu^* \rangle^2 \right) \leqslant 2\varepsilon \cdot \left( \sn \langle x_i - x_i^*, \mu - \mu^* \rangle^2 \right).
    \end{align*}
    The fourth equality is due to Lemma \ref{lemma:thm}. We used SoS Cauchy-Schwarz for the first inequality, then used Lemma \ref{lemma:thm} in the penultimate step. The last inequality follows from Lemma \ref{lemma:thm} as well. We now have:
\begin{align*}
    \sn \langle x_i - x_i^*, \mu - \mu^* \rangle^2 = \sn \langle x_i - \mu - x_i^* + \mu^* + \mu - \mu^*, \mu - \mu^* \rangle^2.
\end{align*}
Instead of splitting it into three terms and upper bounding them by SoS triangle inequality as in previous SoS works \cite{kothari2018robust, hopkins2018mixture}, we will instead simply expand the terms. We get
\begin{align*}
    &\sn \left( \langle x_i - \mu, \del \rangle - \langle x_i^* - \mu^*, \del \rangle + \err^2 \right)^2 \\
    &= \sn  \langle x_i - \mu, \del \rangle^2 + \sn \langle x_i^* - \mu^*, \del \rangle^2 + \sn \err^4   \\
    & \ \  -2 \left(\sn \langle x_i - \mu, \del \rangle \cdot \langle x_i^* - \mu^*, \del \rangle\right) - 2\left(\sn\langle x_i^* - \mu^*, \del \rangle \cdot \err^2 \right) \\
    & \ \ + 2 \left(\sn\langle x_i - \mu, \mu - \mu^* \rangle \cdot \err^2 \right). \\ 
\end{align*}
The first two squared terms can be bounded by the assumption on the covariance. Now we handle each cross term individually. We note that
\begin{align*}
    \left(\sn\langle x_i^* - \mu^*, \del \rangle \err^2 \right) &= \left( \left\langle \sn (x_i^* - \mu^*), \del \right\rangle \err^2 \right) \\ 
    &= \langle \mu^* - \mu^*, \del \rangle \err^2 = 0.
\end{align*}
Similarly,
\begin{align*}
    \left(\sn\langle x_i - \mu, \mu - \mu^* \rangle \err^2 \right) &= \left( \left\langle \sn (x_i - \mu), \del \right\rangle \err^2 \right) \\
    &=  \langle \mu - \mu, \del \rangle  \err^2 = 0.
\end{align*}
Now we have,
\begin{align*}
    -\left(\sn 2 \cdot \langle x_i - \mu, \del \rangle \cdot \langle x_i^* - \mu^*, \del \rangle\right) &\leqslant \sn \langle x_i - \mu, \del \rangle^2 + \sn \langle x_i^* - \mu^*, \del \rangle^2
\end{align*}
which follows from the fact that $(a+b)^2 \geqslant 0 \ \forall  a, b \in \R$.
Putting all the pieces together we get
\begin{align*}
    \sn \langle x_i - x_i^*, \mu - \mu^* \rangle^2 &\leqslant 2 \left(\sn \langle x_i - \mu, \del \rangle^2\right) + 2\left(\sn \langle x_i^* - \mu^*, \del \rangle^2 \right) + \err^4.
\end{align*}
Therefore we have
\begin{align*}
    \err^4 &\leqslant 2\varepsilon \cdot \left[ 2 \left(\sn \langle x_i - \mu, \del \rangle^2\right) + 2\left(\sn \langle x_i^* - \mu^*, \del \rangle^2 \right) \right]  + 2\varepsilon \cdot \err^4  \\
    &\leqslant 2\varepsilon \cdot \left( 2 \err^2 + 2 \err^2 \right) + 2\varepsilon \cdot \err^4.
\end{align*}
where the last inequality is due to the bounded covariance assumption. In particular we used our assumption on $n$ here for the uncorrupted samples. (Also see Lemma \ref{lemma:boundedcovconc}). Continuing with the proof, we get after rearranging that
\begin{align*}
    (1 - 2\varepsilon) \cdot \err^4 \leqslant 8\varepsilon \cdot \err^2.
\end{align*}
By SoS Cancellation,
\begin{align*}
    \err^2 \leqslant \frac{8\varepsilon}{1 - 2\varepsilon}.
\end{align*}
\end{proof}
\endgroup
 We emphasize here that previous works based on SoS focus on the regime when $\varepsilon$ is sufficiently small and bounded away from $\frac{1}{2}$, and that their proofs are inherently not optimized when $\varepsilon$ is large and close to $\frac{1}{2}$. We remark here that the proof for Theorem \ref{thm:boundedcov} is much more careful and optimized compared to standard SoS proofs \cite{kothari2018robust}. Our analysis also optimizes constants by avoiding inequalities such as SoS triangle inequality. Our proof for Theorem \ref{thm:boundedcov} starts with a degree $4$ polynomial $\left(\err^4\right)$ as opposed to a degree $2$ polynomial $\left(\err^2\right)$. We emphasize here that starting with a degree 2 polynomial suffices for both optimal error when $\varepsilon$ is small as well as optimal breakdown point (Theorem \ref{thm:optbdbcov}). The reason we start with a degree 4 polynomial is that when $\varepsilon$ is large and close to $\frac{1}{2}$, it allows us to bound $\delta \cdot \err^2$, where $\delta \coloneqq 1 - 2\varepsilon$ by an appropriate constant, allowing us to optimal dependence on $\delta$ as $\varepsilon \to \frac{1}{2}$.

Let us now consider the setting where the underlying distribution has certifiably bounded fourth moments.  Recall that distributions with certifiably bounded fourth moments also have bounded covariance. Therefore, Theorem \ref{thm:boundedcov} naturally applies to this case as well. However, information-theoretically it is possible to exploit this stronger assumption to get better dependency on both $\varepsilon$ and $\delta$ when $\varepsilon$ is small and large respectively. In particular, for bounded fourth moment distributions, the optimal dependency on $\varepsilon$ and $\delta$ for small enough $\varepsilon$ and $\delta$ are $O(\varepsilon^{3/4})$ and $O(\delta^{-1/4})$ respectively. 

If we want to build on the proof in Theorem \ref{thm:boundedcov}, a possible strategy would be to start with $\err^8$, and then replace SoS Cauchy-Schwarz, with SoS \Hoelder. We remark here that this is the decoupling step that we discussed in Section \ref{sec:techniques}. This will result in a bound of $8 \cdot \varepsilon^3$ for the first term. While this constant factor is already sub-optimal (notice that we require $2\varepsilon \cdot \err^8$ on the RHS for even optimal breakdown point), the second term once expanded will have cross moment terms that do not cancel out due to asymmetry unlike the bounded covariance case. While we can indeed bound the second term with inequalities like SoS triangle inequality or related variants as has been done in \cite{kothari2018robust, hopkins2018mixture} these lead to sub-optimal dependency on $\varepsilon$ and the analysis will indicate a much worse breakdown point instead of the correct one of $\frac{1}{2}$. Even in the proof of Theorem \ref{thm:boundedcov} above, using SoS triangle inequality will prove sub-optimal. These limitations in existing SoS proofs, especially in the difficulty in obtaining the right dependency on $\varepsilon$ after the decoupling step motivates our identifiability proof for large corruption rates.

\subsection{Identifiability for large corruption}\label{sec:iden}
In previous identifiability proofs for robust mean estimation, the focus was on bounding the distance between the means when the statistical distance between two distributions from a nice distribution class was small. The SoS version presented above was inspired by the initial identifiability proof of \cite{kothari2018robust} which focused on the regime where $\varepsilon$ was small. When $\varepsilon$ is large the means are inherently far away since the statistical distance is large. The distance between the means when $\varepsilon$ is large is captured more sharply by the overlap $\delta$ between the distributions, rather than the statistical distance as discussed earlier. We will utilize this notion of overlap to formalize the following intuition in the next Lemma: If two distributions from a sufficiently nice class have large overlap, then their means must be close and if they have small overlap, their means must be far.
\begingroup
\allowdisplaybreaks
\begin{lemma}[Identifiability when $\varepsilon$ is close to $\frac{1}{2}$]\label{lemma:generaliden}
    Let $\D_1, \D_2$ be distributions on $\R^d$ with means $\mu_1$ and $\mu_2$ respectively. Assume that $\D_1$ and $\D_2$ have bounded $k^{th}$ moments in the sense of Definition \ref{defn:certifiablebounded}.  Suppose $d_{TV}(\D_1, \D_2) \leqslant 1 - \delta$ for $\delta$ sufficiently small. Then,
    \begin{align*}
        \vnorm{\mu_1 - \mu_2} \leqslant O\left(\frac{\sqrt{k}}{\delta^{1/k}}\right)
    \end{align*}
\end{lemma}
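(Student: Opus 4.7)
The plan is a one-dimensional reduction: project both distributions onto the direction of the mean separation and then combine a Markov-style tail bound (coming from the moment assumption) with the fact that the overlap along that direction is still at least $\delta$.

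First I would set $u = (\mu_1 - \mu_2)/\vnorm{\mu_1 - \mu_2}$ and $M = \vnorm{\mu_1 - \mu_2}$, and push both $\D_1,\D_2$ forward under the projection $x \mapsto \iprod{x, u}$. Call the resulting real-valued distributions $P_1, P_2$, with means $m_1, m_2$ satisfying $m_1 - m_2 = M$ (assume WLOG $m_1 > m_2$). Two inheritances are immediate. (a) The moment bound specialised to the unit vector $u$ gives $\E_{X \sim P_i}[(X - m_i)^k] \leqslant k^{k/2}$ for even $k$; note this only uses the real-valued moment bound, not the SoS-certifiability. (b) Total variation is non-increasing under push-forward, so $d_{TV}(P_1, P_2) \leqslant 1 - \delta$, i.e.\ the 1-d overlap is still at least $\delta$.

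Next I would split at the midpoint $t = (m_1 + m_2)/2$. Since $\min(p_1, p_2) \leqslant p_1$ and $\min(p_1, p_2) \leqslant p_2$, splitting the overlap integral at $t$ yields
\[
\delta \;\leqslant\; \int \min(p_1, p_2)\, dx \;\leqslant\; \Pr_{P_1}(X < t) + \Pr_{P_2}(X \geqslant t).
\]
Each event puts $X$ at distance at least $M/2$ from the corresponding mean, so Markov applied to the even-power random variable $(X - m_i)^k$ bounds each probability by $k^{k/2}/(M/2)^k = (2\sqrt{k}/M)^k$. Combining and rearranging gives $\delta \leqslant 2\,(2\sqrt{k}/M)^k$, hence $M \leqslant 4\sqrt{k}/\delta^{1/k}$, which is the claim (up to the constant absorbed in $O(\cdot)$).

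The classical argument itself is short; the point of isolating it as a lemma is that the next section must realise this overlap-based bound inside the SoS proof system while never literally projecting onto the unknown direction $u$ nor using the threshold indicator $\mathbf{1}[X < t]$. I expect that to be the main obstacle: replacing the threshold-tail argument by a polynomial decoupling (scaling by a function of the overlap weights $w_i w_i^*$, then bounding a centered moment in SoS as alluded to in Section~\ref{sec:techniques}) that retains the sharp $\delta^{-1/k}$ scaling rather than degrading to $\delta^{-1}$, which is what a naive Hölder/SoS triangle-inequality move would produce.
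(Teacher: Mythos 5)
Your argument is correct, but it is a genuinely different proof from the one in the paper. You reduce to one dimension by projecting onto the separation direction, observe that the overlap is at least $\delta$ after push-forward (data processing), split at the midpoint, and apply Markov with the $k$-th moment to each one-sided tail. The paper instead works directly in $\R^d$ with a coupling $(X,Y)$ of $\D_1$ and $\D_2$ satisfying $\Pr[X=Y]\geq\delta$, writes $\langle\mu_1-\mu_2,v\rangle = \langle\mu_1-\mu',v\rangle + \langle\mu'-\mu_2,v\rangle$ where $\mu' = \E[X\mid X=Y]$ is the mean of the overlap region, and bounds each term by H\"older's inequality applied to the indicator $\mathbb{1}[X=Y]$, which produces the $\Pr[X=Y]^{1/k}\geq\delta^{1/k}$ scaling. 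Both give the same $O(\sqrt{k}/\delta^{1/k})$ bound; yours is arguably a touch shorter since it avoids conditional expectations and couplings. The reason the paper prefers its version is exactly what you conjecture in your closing paragraph: the decomposition through the overlap mean $\mu'$, with the scaling factor separated from the centered moment, is the structure that gets ported into the SoS proof of Theorem~\ref{theorem:general}, where the products $w_i w_i^*$ play the role of $\mathbb{1}[X=Y]$, the overlap probability $\Pr[X=Y]$ becomes $\frac{1}{n}\sum_i w_i w_i^* \geq \delta$, and SoS triangle inequality replaces the classical H\"older step. A midpoint-threshold Markov argument has no natural SoS analogue (indicators of half-lines are not low-degree polynomials), so the paper chooses the coupling proof even though it is slightly heavier at the classical level.
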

\begin{proof}
    Consider a coupling $\D_1(x)$ and $\D_2(y)$ such that $\Pr[x = y] \geqslant \delta$. Then, for any fixed vector $v \in \R^d$
    \begin{align*}
        \left|\langle \mu_1 - \mu_2, v \rangle \right| = \left|\langle \mu_1 - \mu' + \mu' - \mu_2, v \rangle \right| \leqslant \left| \langle \mu_1 - \mu', v \rangle\right| + \left| \langle \mu' - \mu_2, v \rangle\right| 
    \end{align*}
    where $\mu' \coloneqq \Ex{X|A}$ and $A$ is the event $x = y$ with $x \sim \D_1$ and $y \sim \D_2$.
     The above proof has the following interpretation: along any direction $v$, the distance between the means $\mu_1 - \mu_2$ is at most sum of the distances between the means ($\mu_1, \mu_2)$ to the mean of the common region of \emph{overlap} ($\mu'$).
    We now analyze the first term on the right hand side.
    \begin{align*}
        \left| \langle \mu_1 - \mu', v \rangle\right| &= \left| \langle \mu_1 - \E_{X \sim \D_1}[X|A], v \rangle\right| =\left| \E_{X \sim \D_1}\left[\langle X - \mu_1, v \rangle | A \right]\right| 
        = \frac{\left| \E_{X \sim \D_1} \left[ \one[A] \cdot \langle X - \mu_1, v \rangle \right] \right|}{\Pr[A]}. 
    \end{align*}
     For the above equalities, we used standard facts from conditional probability. Notice that by this point, we can already decouple the distance into a scaling and a moment term. We now have
     \begin{align*}
        \frac{\left| \E_{X \sim \D_1} \left[ \one[A] \cdot \langle X - \mu_1, v \rangle \right] \right|}{\Pr[A]}   &\leqslant \frac{\E_{X \sim \D_1}\left[\one[A]^{k/(k-1)}\right]^{1 - 1/k}\cdot \E_{X \sim \D_1}\left[ \langle X - \mu_1, v \rangle^k \right]^{1/k}}{\Pr[A]} \\
        & \leqslant \frac{\Pr[A]^{1-1/k} \cdot \left(k^{k/2} \Vert v \Vert^{k}\right)^{1/k}}{\Pr[A]} \\
        &= \frac{\sqrt{k} \Vert v \Vert}{\Pr[A]^{1/k}} \leqslant \frac{\sqrt{k}\Vert v \Vert}{\delta^{1/k}}
     \end{align*}
     where the first inequality follows from \Hoelder's inequality and the second inequality from the assumptions on the moments. 
    Similarly, we have
    \begin{align*}
        \left| \langle \mu' - \mu_2, v \rangle\right| \leqslant \frac{\sqrt{k}\Vert v \Vert}{\delta^{1/k}}.
    \end{align*}
    Putting the pieces together and picking $v = \mu_1 - \mu_2$, we get that
    \begin{align*}
        \vnorm{\mu_1 - \mu_2} \leqslant O\left(\frac{\sqrt{k}}{\delta^{1/k}}\right).
    \end{align*}
\end{proof}
\endgroup
Observe that for the bounded covariance ($k=2$) case, the above identifiability proof already recovers the optimal error of Theorem \ref{thm:boundedcov}. We remark here that the above Lemma is not new in the literature. Lemma \ref{lemma:generaliden} and a related SoS version of it (See Lemma \ref{lemma:ineff}) already appear in \cite{hopkins2018clustering, hopkins2018mixture} in the context of \emph{clustering}. This is the first that this identifiability proof has instead been utilized to obtain results for the problem of robust mean estimation, especially for large corruption rates. In retrospect, the reason is that for clustering we require small overlap between clusters. For robust mean estimation near the breakdown point, the means are far apart and the two distributions of interest have small overlap.  We remark here that the SoS version of the proof used in \cite{hopkins2018clustering, hopkins2018mixture} is insufficient for our purpose as it only infers statements about $\err^{2k}$. This sufficed for their goal of showing that if there existed a subset of points with bounded moments, then this subset had large intersection with at least one of the clusters (Lemma 1 in \cite{hopkins2018clustering}). The goal in our problem is instead to show in SoS that the distance of the means between two bounded moment distributions can be sharply controlled by their overlap. Their SoS inequality alone does not suffice for our goal as it cannot obtain optimal dependency on the overlap. We instead provide a simple and (stronger) SoS proof that achieves the desired error.

\subsection{SoSizing the proof of identifiability}
We now consider the following polynomial system in variables $\pxi \in \R^d$ and $\pwi \in \R$. Consistent with before, let $\ip$ be an $\varepsilon$-corruption of $\uip$. 

\begin{align}\label{eqn:finalsystem}
     \cA \coloneqq \left\{
    \begin{array}{l}
        \forall i \in [n]: \ w_i^2 = w_i   \\
        \forall i \in [n]: \ w_i\cdot (z_i - x_i) = 0\\
        \sn w_i \geqslant (1 - \varepsilon) \\
        \mu = \sn x_i \\
        \forall v \in \R^d : \sn \langle x_i - \mu, v \rangle^k \leqslant k^{k/2}\Vert v \Vert^k 
        \end{array}
    \right\}
\end{align}
The above polynomial system $\cA$ is essentially the same as $\cA_{w,x}$ (System \ref{eqn:basicsystem}), except we replace the bounded covariance constraint with the appropriate moment condition. We note that feasibility is straightforward to check for this system, similar to system $\cA_{w,x}$. We also note that the final constraint above is a \emph{universal} constraint. We refer the reader to the Appendix \ref{app:facts} to see how to model this inside SoS. We recall Theorem \ref{theorem:general} before providing the formal proof.

\begin{theorem}
Let $\D$ be a distribution on $\R^d$  with mean $\mu^*$ and certifiably bounded $k^{th}$ moments in the sense of Definition \ref{defn:certifiablebounded} where $k$ is a power of 2.  
Let $\uip \overset{\mathrm{iid}}{\sim} \D$. Let $\{z_i\}_{i=1}^n$ be an $\varepsilon$-corruption of $\uip$ where $\varepsilon \in [0, \frac{1}{2})$.
Let $n = \Omega\left(d^{O(k)}\right)$.  
Then there is an efficient algorithm based on SoS, running in time $n^{O(k)}$ that takes as input $\ip, \varepsilon, k$, and returns a vector $\hat{\mu}$ such that 
    \begin{align*}
        \Vert \hat{\mu} - \mus \Vert_2 = O\left( \frac{\sqrt{k}\cdot \varepsilon^{1-1/k}}{(1-2\varepsilon)^{1/k}} + \sqrt{\frac{d}{n}}\right)
    \end{align*}
        with probability 0.99.
\end{theorem}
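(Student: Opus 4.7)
The proof will establish the SoS polynomial inequality
\[
\cA \sststile{O(k)}{w, x} \delta \cdot \err^{2k} \leq O(k^{k/2} (2\varepsilon)^{k-1}) \cdot \err^{k},
\]
where $\delta := 1 - 2\varepsilon$; SoS cancellation of the square $\err^{k}$ (valid since $k$ is even) then yields $\err^{k} \leq O(k^{k/2} \varepsilon^{k-1}/\delta)$, and taking $k$-th roots gives the claimed $\err \leq O(\sqrt{k}\,\varepsilon^{1-1/k}/\delta^{1/k})$. The additive $\sqrt{d/n}$ term is a standard sampling error between the true mean and the empirical mean $\mu^* := \sn x_i^*$, which is $O(\sqrt{d/n})$ with high probability for $n = \Omega(d^{O(k)})$; for this sample size, Fact~7.6 of~\cite{hopkins2018mixture} also ensures that the uncorrupted empirical distribution inherits the SoS-certifiable moment bound $\sststile{k}{v} \sn \langle x_i^* - \mu^*, v\rangle^k \leq k^{k/2}\|v\|^k$, so it suffices to estimate $\mu^*$ by the SoS algorithm and invoke triangle inequality.

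Following the argument of Lemma~\ref{lemma:thm}, I first derive the SoS consequences $w_iw_i^*(x_i - x_i^*) = 0$, $\sn(1 - w_iw_i^*) \leq 2\varepsilon$, and $S := \sn w_iw_i^* \geq \delta$. Setting $v = \mu - \mu^*$ and using $\sn(x_i - \mu) = \sn(x_i^* - \mu^*) = 0$, the first relation gives the identity $\err^2 = \sn(1 - w_iw_i^*)\langle x_i - x_i^*, v\rangle$. Raising to the $k$-th power and applying SoS H\"older with the $\{0,1\}$-valued weight $1 - w_iw_i^*$ (which avoids fractional exponents, since $(1 - w_iw_i^*)^m = 1 - w_iw_i^*$ in SoS for any integer $m \geq 1$) produces $\err^{2k} \leq (2\varepsilon)^{k-1}\sn\langle x_i - x_i^*, v\rangle^k$. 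Writing $\langle x_i - x_i^*, v\rangle = a_i - b_i + \err^2$ where $a_i := \langle x_i - \mu, v\rangle$ and $b_i := \langle x_i^* - \mu^*, v\rangle$ have zero sample mean, I expand the $k$-th moment as a polynomial in $c := \err^2$:
\[
\sn(a_i - b_i + c)^k = c^k + \sum_{r = 0}^{k - 2} \binom{k}{r} c^r \cdot \sn(a_i - b_i)^{k-r},
\]
where the $r = k - 1$ term vanishes because $\sn(a_i - b_i) = 0$, and crucially the coefficient of $c^k = \err^{2k}$ is exactly $1$.

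The heart of the proof is bounding the remaining sum by $O(k^{k/2}) \cdot \err^k$. For each $r$, SoS Cauchy--Schwarz together with Jensen and the moment bounds on $\{x_i\}, \{x_i^*\}$ yield $|\sn(a_i - b_i)^{k-r}| \leq O(k^{(k-r)/2})\,\err^{k-r}$: for even $k - r$ this uses $(a_i - b_i)^{k-r} \leq 2^{k-r-1}(a_i^{k-r} + b_i^{k-r})$ and the $k$-th moment bound via Jensen; for odd $k - r$, I further apply SoS Cauchy--Schwarz $(\sn\gamma_i^{k-r})^2 \leq (\sn \gamma_i^2)(\sn \gamma_i^{2(k-r)-2})$ with $\gamma_i = a_i - b_i$, reducing to even moments of order at most $k$. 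This produces intermediate terms of the form $\binom{k}{r} \cdot O(k^{(k-r)/2}) \cdot \err^{k+r}$. The higher powers $\err^{k+r}$ for $r \geq 1$ are then collapsed back to $\err^k$ via a bootstrap: the SoS version of Lemma~\ref{lemma:generaliden}, obtained by applying SoS H\"older with the complementary weight $w_iw_i^*$ and performing an SoS cancellation using $S \geq \delta > 0$, gives the preliminary bound $\err^k \leq O(k^{k/2}/\delta)$, allowing each $\err^{k+r}$ to be absorbed into $\err^k$ at the cost of a multiplicative $\poly(k)/\poly(\delta)$ factor.

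Substituting back, $\err^{2k} \leq (2\varepsilon)^{k-1}\bigl[\err^{2k} + O(k^{k/2}) \cdot \err^k\bigr]$, and using $1 - (2\varepsilon)^{k-1} \geq 1 - 2\varepsilon = \delta$ (valid because $(2\varepsilon)^{k-1} \leq 2\varepsilon$ for $2\varepsilon \leq 1$ and $k \geq 2$) produces the target inequality. The main obstacle throughout is the bootstrap step and the SoS treatment of the odd cross-moments $\sn(a_i - b_i)^{k-r}$ for odd $k - r$, where the moment bounds only control even moments up to order $k$; reconciling this with the need to avoid the weaker bound $\err \leq O(\sqrt{k}\,\varepsilon^{1-1/k}/\delta)$ that arises from naive SoS cancellation is the technical crux of the argument. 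Finally, the estimator $\hat\mu$ is extracted from the SoS pseudo-distribution via the standard rounding procedure of Section~\ref{sec:rounding}, which preserves the derived bound.
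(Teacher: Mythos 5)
Your high-level strategy differs in a material way from the paper's. The paper proves the $\delta$-dependent bound $\err \leq O(\sqrt{k}/\delta^{1/k})$ by applying SoS H\"older with the \emph{overlap} weight $w_iw_i^*$: starting from $\delta\langle\del,v\rangle^k \leq \sn w_iw_i^*\langle\del,v\rangle^k$ and inserting $w_iw_i^*\langle x_i^*-x_i,v\rangle = 0$, the $\del$ term cancels exactly, leaving only the two summands $\langle\mu-x_i,v\rangle$ and $\langle x_i^*-\mu^*,v\rangle$ to which a single $2^{k-1}$-factor SoS triangle inequality and the moment constraint apply directly. The paper then combines this with the known small-$\varepsilon$ bound $O(\sqrt{k}\varepsilon^{1-1/k})$ from prior work by a case split on $\varepsilon$; it never attempts a single SoS inequality certifying the combined rate. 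You instead try to derive $\delta\cdot\err^{2k} \leq O(k^{k/2}(2\varepsilon)^{k-1})\err^k$ in one shot, applying SoS H\"older with the \emph{statistical-distance} weight $1-w_iw_i^*$ and then binomially expanding $(a_i - b_i + c)^k$ with $c=\err^2$.

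That one-shot route has a genuine gap. After the expansion, the ``rest'' contains terms $\binom{k}{r}\err^{2r}\cdot\sn(a_i-b_i)^{k-r}$ for $1\leq r\leq k-2$, which even under your proposed Cauchy--Schwarz bounds produce monomials $\err^{k+r}$ with $r\geq 1$. These cannot be absorbed into $O(k^{k/2})\cdot\err^k$ without spending $\delta$-factors: your bootstrap $\err^k\leq O(k^{k/2}/\delta)$ gives $\err^{k+r}\leq (O(k^{k/2}/\delta))^{r/k}\err^k$, so the dominant $r=k-2$ term contributes on the order of $k^{(k-2)/2}\,\delta^{-(k-2)/k}\,\err^k$; substituting into $(1-(2\varepsilon)^{k-1})\err^{2k}\leq(2\varepsilon)^{k-1}\cdot\text{rest}$ and cancelling then yields $\err \lesssim \varepsilon^{1-1/k}\,\delta^{-2/k+2/k^2}$, which is strictly worse than the target $\varepsilon^{1-1/k}\delta^{-1/k}$ for every $k>2$. (For $k=2$ there are no middle terms and the argument collapses to exactly the proof of Theorem~\ref{thm:boundedcov}, which is why the bounded-covariance case is clean; the higher-moment case is not a mere replay of that proof.) You flag this absorption step as ``the technical crux,'' but the bootstrap as described does not close it. A simpler fix is the paper's: prove the large-$\varepsilon$ bound via the $w_iw_i^*$-H\"older argument and invoke the known small-$\varepsilon$ bound as a separate SoS inequality; the same pseudo-expectation satisfies both, and the rounding of Section~\ref{sec:rounding} then yields the combined rate.
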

We will prove the result for the case when $\varepsilon \to \frac{1}{2}$ inspired by the identifiability proof in Lemma \ref{lemma:generaliden}. As mentioned earlier, the result for small $\varepsilon$ is well-known \cite{kothari2018robust, hopkins2018mixture}\footnote{We also refer the reader to our analysis for optimal breakdown point (Theorem \ref{thm:optbdmom}), which recovers this result.}. Putting together the results for both the regimes of corruption gives us the guarantee of Theorem \ref{theorem:general}.
\begin{proof}
 We overload notation like we did in the proof of Theorem \ref{thm:boundedcov} -- we will use $\mu^* = \sn x_i^*$. As argued before, estimating the sample mean of $\uip$ is enough. Defining $\delta \coloneqq 1 - 2\varepsilon$,
     we will prove the following:
    \begin{align*}
        \cA \sststile{O(k)}{w,x,v} \err^2  \leqslant O\left( \frac{k}{\delta^{2/k}}\right).
    \end{align*}
As with Theorem \ref{thm:boundedcov}, this is sufficient as the estimator we obtain after rounding will satisfy the same error guarantees. We have
\begin{align*}
    \delta \cdot \langle \del , v \rangle^k \leqslant \left( \sn w_iw_i^* \right) \cdot \langle \del, v \rangle^k.
\end{align*}
Here the inequality follows from the definition of $\delta$ and Lemma \ref{lemma:thm}. Then,
\begin{align*}
    \left( \sn w_iw_i^* \right) \cdot \langle \del, v \rangle^k &= \left( \sn \left(w_iw_i^* \right)^k \right) \cdot \langle \del, v \rangle^k \\ 
    &= \sn \left(w_iw_i^* \cdot \langle \mu - \mu^*, v \rangle \right)^k \\
    &= \sn \left(w_iw_i^* \cdot \langle \del, v \rangle + w_iw_i^* \cdot \langle x_i^* - x_i, v \rangle \right)^k.
\end{align*}
Here the first equality is due to the booleanity of $w_iw_i^*$, and the final equality is because of Lemma \ref{lemma:thm}. Rearranging the terms, we obtain
\begin{align*}
    \delta \cdot \langle \del , v \rangle^k \leqslant \sn \left( w_iw_i^* \cdot \langle \mu - x_i, v \rangle + w_iw_i^* \cdot \langle x_i^* - \mu^*, v \rangle \right)^k.
\end{align*}
Applying SoS Triangle Inequality to the Right Hand Side, we get
\begin{align*}
    \delta \cdot \langle \del , v \rangle^k &\leqslant 2^{k-1}\left(\sn \left(w_iw_i^* \cdot \langle \mu - x_i, v \rangle \right)^k + \sn \left(w_iw_i^* \cdot \langle x_i^* - \mu^*, v \rangle \right)^k \right)\\
    &= 2^{k-1}\left(\sn (w_iw_i^*)^k \cdot \langle \mu - x_i, v \rangle^k + \sn (w_iw_i^*)^k \cdot \langle x_i^* - \mu^*, v \rangle^k \right) \\
    &= 2^{k-1}\left(\sn w_iw_i^* \cdot \langle \mu - x_i, v \rangle^k + \sn w_iw_i^* \cdot \langle x_i^* - \mu^*, v \rangle^k \right) \\
    &\leqslant 2^{k-1}\left(\sn \langle \mu - x_i, v \rangle^k + \sn  \langle x_i^* - \mu^*, v \rangle^k \right) \\
    &\leqslant 2^{k-1} \left(2 \cdot k^{k/2} \cdot  \Vert v \Vert^k \right) = 2^k \cdot k^{k/2} \cdot \Vert v \Vert^k.
\end{align*}
where the second inequality is because $\forall i \in [n] : w_iw_i^* \leqslant 1 $ (See Proof of Lemma \ref{lemma:thm} in Appendix \ref{app:lemmaproof}) and the last inequality follows from our assumption of certifiably bounded higher moments in the sense of Definition \ref{defn:certifiablebounded}. Putting all these pieces together, we get
\begin{align*}
    \delta \cdot \langle \del , v \rangle^k \leqslant 2^k \cdot k^{k/2} \cdot \Vert v \Vert^k.
\end{align*}
Now picking $v = \del$, we have
\begin{align*}
    \delta \cdot \err^{2k} \leqslant 2^k \cdot k^{k/2} \cdot \err^k.
\end{align*}
By SoS Cancellation,
\begin{align*}
    \err^k \leqslant \frac{2^k \cdot  k^{k/2}}{\delta}.
\end{align*}
By finally taking SoS Square Root, we get
\begin{align*}
   \cA \sststile{O(k)}{w,x,v}  \err^2  \leqslant \frac{4\cdot k}{\delta^{2/k}} = O \left( \frac{k}{\delta^{2/k}} \right).
\end{align*}
\end{proof}

We note that the above proof is much simpler compared to standard SoS proofs that exploit inequalities such as SoS \Hoelder when dealing with higher moments. 
We now present the SoS version of Lemma \ref{lemma:generaliden} that appears in \cite{hopkins2018clustering,hopkins2018mixture}.
\begin{lemma}[Lemma 5.5 of \cite{hopkins2018mixture}]\label{lemma:ineff}
    \begin{align*}
        \cA \sststile{O(k)}{w,x,v} \left(\sn w_iw_i^* \right)^{k} \cdot \err^{2k} \leqslant 2^k \cdot k^{k/2} \cdot \left(\sn w_iw_i^* \right)^{k-1} \cdot \err^k
    \end{align*}
\end{lemma}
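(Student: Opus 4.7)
The plan is to mirror the proof of \cref{theorem:general} but retain the factor $\left(\sn w_iw_i^*\right)^{k-1}$ on the right-hand side, instead of lower-bounding it by $\delta^{k-1}$ as is done there. The starting point is the identity $w_iw_i^* \cdot (x_i - x_i^*) = 0$ from \cref{lemma:thm}, which gives, for any vector $v$,
\begin{align*}
\left(\sn w_iw_i^*\right) \cdot \iprod{\del, v} \;=\; \sn w_iw_i^* \cdot \bigl(\iprod{\mu - x_i, v} + \iprod{x_i^* - \mu^*, v}\bigr)
\end{align*}
as an SoS-provable polynomial identity, because the cross term $\sn w_iw_i^* \cdot \iprod{x_i - x_i^*, v}$ vanishes identically. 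Since both sides are equal as polynomials, their $k$-th powers are also equal in SoS.

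The key step is then to apply an SoS weighted power-mean inequality on the right-hand side, of the form
\begin{align*}
\bigl(\sn b_i \cdot f_i\bigr)^{k} \;\leqslant\; \bigl(\sn b_i\bigr)^{k-1} \cdot \sn b_i \cdot f_i^{k}
\end{align*}
for non-negative weights $b_i$. I would instantiate this with $b_i = w_iw_i^*$ (non-negativity being SoS-certified by booleanity) and $f_i = \iprod{\mu - x_i, v} + \iprod{x_i^* - \mu^*, v}$. Since $k$ is a power of $2$, this inequality admits a degree-$O(k)$ SoS proof by iterated Cauchy--Schwarz; this is the step I expect to require the most careful bookkeeping. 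After applying it, I would bound $f_i^{k}$ via the SoS triangle inequality $(a+b)^{k} \leqslant 2^{k-1}(a^{k} + b^{k})$ and then drop the remaining weight inside each sum using $w_iw_i^* \leqslant 1$, which again follows from booleanity.

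What remains is $2^{k-1} \bigl(\sn w_iw_i^*\bigr)^{k-1} \cdot \bigl(\sn \iprod{\mu - x_i, v}^{k} + \sn \iprod{x_i^* - \mu^*, v}^{k}\bigr)$, and each inner sum is bounded by $k^{k/2} \Norm{v}^{k}$: the first directly by the moment constraint built into $\cA$, and the second by certifiable boundedness of the $k$-th moments of the uncorrupted sample distribution, which is exactly why we take $n = \Omega(d^{O(k)})$. Assembling everything yields
\begin{align*}
\left(\sn w_iw_i^*\right)^{k} \cdot \iprod{\del, v}^{k} \;\leqslant\; \left(\sn w_iw_i^*\right)^{k-1} \cdot 2^{k} \cdot k^{k/2} \cdot \Norm{v}^{k},
\end{align*}
and substituting $v = \del$ produces exactly the claimed inequality. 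The main obstacle is verifying the weighted power-mean step as a clean degree-$O(k)$ SoS derivation; everything else is essentially a re-use of manipulations already performed in the proof of \cref{theorem:general}.
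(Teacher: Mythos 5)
Your proof is correct and is essentially the paper's argument, just with the SoS H\"older and SoS triangle steps applied in the opposite order: the paper splits by triangle inequality first and then applies H\"older separately to each of the two resulting terms (which automatically discards the weight in the inner sum), whereas you apply H\"older to the combined inner sum first and then triangle inequality pointwise, dropping $w_iw_i^*\leqslant 1$ at the end. The weighted power-mean step you flagged as needing careful bookkeeping is not a new lemma but exactly the paper's SoS H\"older together with booleanity: writing $\sn b_i f_i = \sn b_i\cdot(b_i f_i)$ and invoking SoS H\"older gives $\left(\sn b_i f_i\right)^k \leqslant \left(\sn b_i\right)^{k-1}\sn (b_i f_i)^k = \left(\sn b_i\right)^{k-1}\sn b_i f_i^{k}$ at degree $O(k)$, using $b_i^k=b_i$.
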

    We defer the proof of Lemma \ref{lemma:ineff} to Appendix \ref{app:proofs}. We note that cancellation of terms is not always allowed in the SoS proof system. Indeed if we could cancel the terms above, we will get the desired error we obtained in Theorem \ref{theorem:general}. We discuss more about this in Appendix \ref{app:factscompare} and show how Lemma \ref{lemma:ineff} and Theorem \ref{theorem:general} are related. 

\subsection{Robust Gaussian Mean Estimation with known Covariance}\label{sec:gaussian}

We now consider the problem of robust mean estimation when input $\ip$ is an $\varepsilon$-corruption of samples  $\uip$ where $x_i^* \overset{\mathrm{iid}}{\sim} \cN(\mu^*, \Sigma)$. Here we assume $\varepsilon$ to be sufficiently large and also assume knowledge of the covariance $\Sigma$. Through a straightforward whitening procedure\footnote{We simply multiply the datapoints with the matrix $\Sigma^{-1/2}$.} the problem reduces to robustly estimating the mean of an identity covariance Gaussian.  

We note here that the class of distributions considered in Definition \ref{defn:certifiablebounded} includes the identity covariance Gaussian distribution as well \cite{hopkins2018mixture}. Therefore the results of Theorem \ref{theorem:general} apply to the Gaussian case. Specifically, by taking $k$ moments (where $k$ is a constant), there is an efficient algorithm that robustly estimates the mean of a Gaussian to error $O \left(\sqrt{k}\cdot (1 - 2 \varepsilon)^{-1/k} \right)$ when $\varepsilon$ is large. 

However, the optimal error for robust Gaussian mean estimation when $\varepsilon$ is large is $O\left(\sqrt{\log \frac{1}{\delta}}\right)$, where $\delta = 1 - 2\varepsilon$.  In order to achieve error $O\left(\sqrt{\log \frac{1}{\delta}}\right)$, Theorem \ref{theorem:general} requires $\log \frac{1}{\delta}$ many moments which translates to a sample complexity $n \gtrsim d^{O\left({\log \frac{1}{\delta}}\right)}$ for concentration of measure arguments and running time $n^{O \left(\log \frac{1}{\delta}\right)} = d^{O\left( \log \frac{1}{\delta} \right)^2}$. Our algorithm thus requires running time that is quasi-polynomial in $\delta^{-1}$ - i.e., quasi-polynomial in the inverse of the distance from the breakdown point. We therefore obtain the results of Theorem \ref{thm:gauss} as a Corollary of our more general Theorem \ref{theorem:general}.

\begin{corollary}
 Let $\uip \overset{\mathrm{iid}}{\sim} \cN(\mu^*, I_d)$. 
    Let $\{z_i\}_{i=1}^n$ be an $\varepsilon$-corruption of $\uip$ where $C \leqslant \varepsilon < \frac{1}{2}$ for $C > 0$ sufficiently large. 
    Let $n = \Omega\left( d^{O(\log \frac{1}{\delta})}\right)$ where $\delta = 1 - 2\varepsilon$.  
    Then there is an algorithm based on SoS, running in time $n^{O\left( \log \frac{1}{\delta} \right)}$ that takes as input $\ip, \varepsilon$, and returns a vector $\hat{\mu}$ such that 
    \begin{align*}
        \Vert \hat{\mu} - \mus \Vert_2 = O\left( \sqrt{\log \frac{1}{\delta}} + \sqrt{\frac{d}{n}}\right)
    \end{align*}
    with probability 0.99.    
\end{corollary}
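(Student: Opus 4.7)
The plan is to obtain this corollary essentially for free from Theorem~\ref{theorem:general} by choosing the moment parameter $k$ optimally in terms of $\delta$. First, I would invoke the fact (cited in the paper and established in \cite{hopkins2018mixture}) that the standard Gaussian $\cN(0, I_d)$ has certifiably bounded $k^{\text{th}}$ moments in the sense of Definition~\ref{defn:certifiablebounded} for every even $k$, with the constant $s^{s/2}$ in the bound exactly matching the one used there. Hence after centering, $\cN(\mu^*, I_d)$ satisfies the hypothesis of Theorem~\ref{theorem:general} for every even $k$ that is a power of $2$.

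Next, I would pick $k$ to be the smallest power of $2$ with $k \geqslant \log(1/\delta)$, so that $k = \Theta(\log(1/\delta))$. Plugging this $k$ into Theorem~\ref{theorem:general} yields the error bound
\begin{equation*}
  \vnorm{\hat\mu - \mu^*} \;=\; O\!\left( \frac{\sqrt{k}\,\varepsilon^{1-1/k}}{(1-2\varepsilon)^{1/k}} + \sqrt{\tfrac{d}{n}} \right).
\end{equation*}
Since $\varepsilon < 1/2$ we have $\varepsilon^{1-1/k} \leqslant 1$, and by the choice of $k$ we have $\delta^{1/k} = \exp(-\log(1/\delta)/k) \geqslant e^{-1}$, i.e.\ $\delta^{1/k} = \Omega(1)$. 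Substituting gives $\vnorm{\hat\mu - \mu^*} = O(\sqrt{k} + \sqrt{d/n}) = O(\sqrt{\log(1/\delta)} + \sqrt{d/n})$, which is the claimed rate. The running time is directly $n^{O(k)} = n^{O(\log(1/\delta))}$ as asserted, and the sample complexity $n = \Omega(d^{O(k)}) = \Omega(d^{O(\log(1/\delta))})$ matches the hypothesis of the corollary.

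The only real issue to check carefully is the sample-complexity step, namely that $n = \Omega(d^{O(\log(1/\delta))})$ i.i.d.\ samples from $\cN(\mu^*, I_d)$ suffice to guarantee, with probability at least $0.99$, that the empirical distribution on the \emph{uncorrupted} samples inherits both the certifiable $k^{\text{th}}$-moment bound (so that the system $\cA$ remains feasible and the SoS proof of Theorem~\ref{theorem:general} goes through with parameter $k$) and the error of sample mean of uncorrupted samples in $\ell_2$ norm is small. The first is the Gaussian analogue of Fact~7.6 of \cite{hopkins2018mixture}, which yields certifiable moment concentration when $n \gtrsim d^{O(k)}$; the second is a standard concentration bound for $\sqrt{d/n}$ accuracy. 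This is the main technical overhead, but it is standard and already implicit in the hypothesis of Theorem~\ref{theorem:general}. Finally, applying the triangle inequality between the SoS estimate $\hat\mu$ and the true mean $\mu^*$ via the empirical mean of the uncorrupted samples completes the argument.
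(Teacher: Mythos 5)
Your proposal is correct and is essentially the paper's own derivation: the paper also obtains the Gaussian corollary by noting that $\cN(\mu^*, I_d)$ has certifiably bounded $k^{\text{th}}$ moments (citing \cite{hopkins2018mixture}) and then invoking Theorem~\ref{theorem:general} with $k = \Theta(\log \tfrac{1}{\delta})$, so that $\delta^{-1/k} = O(1)$ and $\sqrt{k}\,\varepsilon^{1-1/k}\delta^{-1/k} = O(\sqrt{\log\tfrac{1}{\delta}})$, with the stated sample complexity and $n^{O(\log\frac{1}{\delta})}$ running time following directly from the theorem's hypotheses.
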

This is the first that the optimal error of $O\left( \sqrt{\log \frac{1}{\delta}} \right)$ could be attained for this problem in time $n^{O\left( \log \frac{1}{\delta} \right)}$. To the best of our knowledge there is only an \emph{inefficient} estimator \cite{zhu2020does} that achieves this error, under the slightly weaker total variation contamination model\footnote{In the total variation contamination model, the distribution is first corrupted and then samples are generated from this corrupted distribution. Strong contamination model is a stronger model of corruption as the adversary has knowledge of the samples directly.}. The inefficiency in \cite{zhu2020does} is due to projection under the generalized Kolmogorov-Smirnov distance (See Appendix B of \cite{zhu2022generalized} for more discussion). \cite{zhu2020does} do not explicitly analyze their error guarantee as $\varepsilon \to \frac{1}{2}$. We analyze their estimator in  Theorem \ref{thm:zjs}, and demonstrate that it achieves this optimal error.  

\subsubsection{Discussion about existing work}
When $\varepsilon$ is small, the information-theoretically optimal error is $O(\varepsilon)$. Efficient algorithms achieve error of $\ot$\footnote{The exact term is $O\left(\varepsilon \sqrt{\log \frac{1}{\varepsilon}}\right)$} \cite{diakonikolas2019robust} and there is evidence that the additional logarithmic factor is necessary for efficient Statistical Query algorithms \cite{diakonikolas2017statistical} in the total variation contamination model\footnote{This is therefore also a lower bound against statistical query algorithms in the strong contamination model.}.

In order to achieve $\ot$ error, \cite{diakonikolas2019robust} utilize strong concentration properties of Gaussians up to the first two moments. The SoS based algorithm in \cite{kothari2018robust, hopkins2018mixture} attains this error in running time $n^{O(\log \frac{1}{\varepsilon})}$ requiring $O\left(\log \frac{1}{\varepsilon}\right)$ moments. This left open a question of if there was an inherent drawback with utilizing SoS for the Gaussian case, or if it was an artifact of the analysis.

\cite{kothari2022polynomial} demonstrated that with a more involved analysis that the near-optimal error of $\ot$ error is achievable in \emph{polynomial} time using SoS. Their key idea was to interleave proofs in SoS and in pseudo-expectation. This allowed them to utilize the same concentration properties of Gaussians used in \cite{diakonikolas2019robust} outside SoS, circumventing the need to certify these properties \emph{within} SoS.

The improved analysis from \cite{kothari2022polynomial} above however does not extend for the case when $\varepsilon$ is large. Specializing the techniques of \cite{kothari2022polynomial} to our setting, we can interpret the proof in \cite{kothari2022polynomial} as formalizing the following: A Gaussian and a specific bounded covariance distribution have means $\ot$ apart as long as they are sufficiently close in statistical distance.
We observe here that this argument is \emph{asymmetric} as opposed to the ones we have considered so far in this paper -- the two distributions belong to different families. When $\varepsilon$ is large, the tails of the individual distributions play a significant role.  The overlap region when $\varepsilon$ is large is essentially at the tail of the distributions. In particular, the distance between the means will be dominated by the distance from the mean of the bounded covariance distribution to the overlap region between the distributions. Therefore, we would only be able to infer the same error of $O(\delta^{-1/2})$ that we obtain in Theorem \ref{thm:boundedcov}. In particular, specializing the program of  \cite{kothari2022polynomial} for the identity covariance case alone would not be sufficient to improve over $O(\delta^{-1/2})$.
We refer the reader to Appendix \ref{app:gauss} for more details and lower bounds against algorithms that estimate the mean of a Gaussian using a bounded covariance distribution.

For the Gaussian case, \cite{dalalyan2022all} designed an efficient robust estimator that achieves optimal breakdown point. When $\varepsilon < (5 - \sqrt{5})/10 \approx 0.28$, they achieve near-optimal dimension independent error of $\ot$. However when $\varepsilon$ is large, their estimator is the geometric median, whose error scales with the dimension. In particular, when $\varepsilon$ is large, the geometric median has error $O(\sqrt{d}\cdot \delta^{-1})$ (Lemma 2 of \cite{dalalyan2022all}). \cite{diakonikolas2023algorithmic} suggest that a variant of the filter method achieves optimal breakdown point for the Gaussian case efficiently. The best error rate attained by filter based methods as $\varepsilon \to \frac{1}{2}$ is $O(\delta^{-1})$ \cite{zhu2022robust}, albeit for the bounded covariance case. We believe that the approach suggested in \cite{diakonikolas2023algorithmic} achieves similar error.

It remains an intriguing open question if information-theoretically optimal rate of $O\left(\sqrt{\log \frac{1}{\delta}}\right)$ for Gaussians can be achieved in polynomial time.

\subsection{Rounding}\label{sec:rounding}
Our rounding procedure is simple and is the same as the ones used in other SoS based algorithms for robust mean estimation \cite{kothari2018robust, hopkins2018mixture}. Before proving the guarantees we present the high-level algorithm for completeness.

\begin{algorithmbox}[Robust Mean Estimation via SoS]
  \label{algo:mean}
  \mbox{}\\
  \textbf{Input:} $\varepsilon$-corrupted samples $\ip$ from a specific distribution, $\varepsilon$, and other problem parameters.

  \noindent
  \textbf{Output:} $\hat{\mu} \coloneqq \pE[\mu]$ where $\pE$ is an appropriate degree pseudo-expectation satisfying a specific polynomial system.
\end{algorithmbox}

For the Polynomial System $\cA_{w,x}$ (System \ref{eqn:basicsystem}), find \emph{any} degree 6 pseudo-expectation $\pE$ satisfying $\cA_{w,x}$. Return the estimator $\hat{\mu} \coloneqq \pE[\mu]$. Similarly, for the Polynomial System $\cA$ (System \ref{eqn:finalsystem}), find \emph{any} degree $10k$ pseudo-expectation $\pE$ satisfying $\cA$ and return the estimator $\hat{\mu} \coloneqq \pE[\mu]$. Now we have, for both cases,
\begin{align*}
    \vnorm{\hat{\mu} - \mu^*}^2 &= \Vert \pE[\mu] - \mu^* \Vert^2 = \langle \pE[\mu] - \mu^*, \pE[\mu] - \mu^* \rangle \\
    &= \pE[\langle \mu - \mu^*, \pE[\mu] - \mu^* \rangle] \leqslant \frac{1}{2} \cdot \pE \left[ \err^2 + \Vert \pE[\mu] - \mu^* \Vert^2 \right]
\end{align*}
where we used the SoS inequality $(a-b)^2 \geqslant 0$ for each coordinate of the vector. Rearranging, we get,
\begin{align*}
    \Vert \pE[\mu] - \mu^* \Vert^2 \leqslant \pE\left[\err^2\right]
\end{align*}
We are done at this point as we have SoS proofs that bound $\err^2$ by the appropriate function of $\varepsilon$, and pseudo-expectations preserve SoS inequalities.

\subsection{Application: Robust Sparse Mean Estimation}\label{sec:sparse}

In this section we demonstrate the applicability of our approach to the problem of robust sparse mean estimation. We describe the problem below and explain how we can apply Theorem \ref{theorem:general} in a black-box manner to obtain information-theoretically optimal error when $\varepsilon$ is large.

Given $\varepsilon$-corrupted samples from a distribution $\D$ on $\R^d$ with $k$-sparse mean $\mu^* \in \R^d$, where $k \ll d$, the goal in robust sparse mean estimation is to return $\hat{\mu}$ such that $\vnorm{\hat{\mu} - \mu^*}$ is small. We recall that the set of $k$-sparse vectors consists of all vectors in $\R^d$ with at most $k$ non-zero entries.
\begin{definition}[$(M,t)$ certifiably bounded moments in $k$-sparse directions. \cite{diakonikolas2022robust}]\label{defn:sparsebounded}
     For some $M > 0$ and even $t \in \mathbb{N}$, we say that distribution $\D$ with mean $\mu$ satisfies $(M,t)$ certifiably bounded moments in $k$-sparse directions if 
\begin{align*}
    \cA_k \sststile{O(t)}{v, z} \E_{X \sim \D}\left[\langle v, X - \mu \rangle^t\right]^2 \leqslant M^2
\end{align*}
where
\begin{align}\label{eqn:ksparse}
  \cA_k \coloneqq \left\{
    \begin{array}{l}
    \forall i \in [d]: z_i^2 = z_i \ \ \\
    \forall i \in [d]: v_iz_i = v_i \ \ \\
    \sum\limits_{i=1}^d z_i \leqslant k \ \ \\
    \sum\limits_{i=1}^d v_i^2 = 1
\end{array}\right\}
\end{align}
\end{definition}

A vector $v \in \R^d$ is $k$-sparse (with $\Vert v \Vert = 1$) if and only if $ \exists \ z \in \R^d$ such that $v$ and $z$ satisfy $\mathcal{A}_k$. \cite{diakonikolas2022robust} show that sampling preserves $(M, t)$ certifiably bounded moments in $k$-sparse directions for distributions $\D$ satisfying Definition \ref{defn:sparsebounded} and having sufficiently light tails. We will work in this setting, similar to the case of certifiably bounded moments. Our goal will therefore be to robustly estimate the $k$-sparse mean $\mu^*$ from $\varepsilon$-corrupted samples of an $(M,t)$ certifiably bounded distributions in $k$-sparse directions when $\varepsilon$ is large.

We remark here that the optimal error achievable in both $k$-sparse setting and the dense setting is the same for $\varepsilon \in (0, \frac{1}{2})$. In particular, for $\D$ satisfying Definition \ref{defn:sparsebounded}, the information-theoretically optimal dependence on $\varepsilon$ is $O(\varepsilon^{1 - 1/t})$ for $\varepsilon$ small and $O((1 - 2\varepsilon)^{-1/t})$ when $\varepsilon$ is large.  

The focus in the sparse setting is on sample complexity. Recall from the previous section that for concentration of measure arguments we require $\approx d^{O(t)}$ samples assuming certifiably bounded moments up to order $t$. In the sparse case however, the expectation on sample complexity is $\text{poly}(k, \log d)$ samples.
\cite{diakonikolas2022robust} build on the SoS program of \cite{kothari2018robust} for distributions satisfying Definition \ref{defn:sparsebounded} and obtain information-theoretically optimal error of $O\left(\varepsilon^{1-1/t}\right)$ with sample complexity dependence $\text{poly}(k, \log d)$ when $\varepsilon$ is sufficiently small. In particular their proof is not optimized and requires $\varepsilon < 0.003$. 

We demonstrate that the analysis in Theorem \ref{theorem:general} will go through by utilizing the moment upper bounds specific to this case, attaining information-theoretically optimal dependence on $\delta = 1 - 2\varepsilon$ of $O(\delta^{-1/t})$ for $\varepsilon$ close to $\frac{1}{2}$, which is the first for the problem of robust sparse mean estimation. We retain the sample complexity of $\text{poly}(k, \log d)$ since ours is an improved analysis of the program in \cite{diakonikolas2022robust} for large $\varepsilon$. 
We now specify a few useful definitions and facts that will make our goals clear. We recall the definition of the $(2,k)$ norm of a vector:
\begin{align*}
\Vert x \Vert_{2,k} \coloneqq \max_{\Vert v \Vert_2 = 1 \ v: k\text{-sparse}} \langle v, x \rangle .
\end{align*}
We also define the following set: $S_k = \{ v \in \R^d : \Vert v\Vert_2 = 1, v \text{ is }k\text{-sparse} \}$. 
\begin{fact}[Fact 9 from \cite{diakonikolas2022robust}]
 Let $h_k(x): \R^d \to \R^d$  be the function that truncates $x$ to its largest $k$ values and zeros out the rest of the values. For all $a \in \R^d$ that are $k$-sparse, we have that 
\begin{align*}
    \Vert h_k(x) - a \Vert \leqslant 3 \Vert x - a \Vert_{2,k}
\end{align*}   
\end{fact}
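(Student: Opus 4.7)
The plan is to split $h_k(x) - a$ according to the supports of $a$ and $h_k(x)$ and bound each piece in terms of $\Vert x - a \Vert_{2,k}$, which equals $\max_{U : |U| \leq k} \Vert (x-a)|_U \Vert$. Let $S = \supp(a)$ with $|S| \leq k$, and let $T = \supp(h_k(x))$; without loss of generality I would pad $T$ so that $|T| = k$ (if $x$ has fewer than $k$ nonzero entries, any such extension only adds zero coordinates). Since $h_k(x) - a$ is supported on $S \cup T$, I would apply the triangle inequality to split it into pieces supported on $S$ and on $T \setminus S$, both of which are sets of size at most $k$, so that each piece's norm has a chance of being controlled by $\Vert x - a \Vert_{2,k}$.

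Next I would evaluate each piece explicitly. On $T \setminus S$, $a$ vanishes and $h_k(x)$ agrees with $x$, so the piece equals $(x - a)|_{T \setminus S}$ and its norm is at most $\Vert x - a \Vert_{2,k}$. On $S$, $h_k(x)$ agrees with $x$ inside $T$ and equals zero outside $T$, so the piece on $S$ is $(x-a)|_S - x|_{S \setminus T}$; triangle inequality bounds its norm by $\Vert (x-a)|_S \Vert + \Vert x|_{S\setminus T} \Vert \leq \Vert x - a\Vert_{2,k} + \Vert x|_{S \setminus T} \Vert$.

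The main obstacle, and the only place where the definition of $h_k$ (as opposed to any $k$-sparse approximation) is used, is bounding $\Vert x|_{S\setminus T} \Vert$. Here I would exploit the fact that $T$ contains the $k$ coordinates of $x$ of largest magnitude: for every $i \in S \setminus T$ and every $j \in T \setminus S$, $|x_i| \leq |x_j|$. Combined with the size comparison $|S \setminus T| \leq |T \setminus S|$, which follows from $|S| \leq k = |T|$, the coordinate-wise inequality yields $\sum_{i \in S \setminus T} x_i^2 \leq \sum_{j \in T \setminus S} x_j^2$. The right-hand side equals $\Vert (x-a)|_{T \setminus S} \Vert^2$ since $a$ vanishes on $T \setminus S$, and is therefore at most $\Vert x - a \Vert_{2,k}^2$.

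Summing the three contributions delivers $\Vert h_k(x) - a \Vert \leq \Vert x - a \Vert_{2,k} + \Vert x - a \Vert_{2,k} + \Vert x - a \Vert_{2,k} = 3 \Vert x - a \Vert_{2,k}$, which is the claimed bound. I expect no further subtleties: the two triangle-inequality steps are routine, and the only nontrivial ingredient is the magnitude comparison that transfers the mass on $S \setminus T$ to mass on $T \setminus S$.
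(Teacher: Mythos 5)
Your proof is correct. Note that the paper itself does not supply a proof of this statement; it is quoted verbatim as Fact 9 from Diakonikolas, Karmalkar, Pittas and Pensia (2022), so there is no in-paper argument to compare against.

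Your decomposition and the bound on the only nontrivial piece are all sound. The key facts you use are: (i) after padding $T = \supp(h_k(x))$ to size exactly $k$ (with zero coordinates of $x$, which does not change the comparison), every coordinate outside $T$ is dominated in magnitude by every coordinate inside $T$; (ii) $|S| \leq k = |T|$ gives $|S\setminus T|\leq |T\setminus S|$, so an injection from $S\setminus T$ into $T\setminus S$ delivers $\sum_{i\in S\setminus T}x_i^2 \leq \sum_{j\in T\setminus S}x_j^2$; and (iii) on $T\setminus S$ one has $x = x - a$. Combining these transfers the error mass on $S\setminus T$ to a set of size $\leq k$ on which $h_k(x) = x$ and $a = 0$, which is exactly what closes the argument. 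One small observation: since $S$ and $T\setminus S$ are disjoint, the first split is actually Pythagorean, not just a triangle inequality, so the same bookkeeping gives the slightly sharper constant $\sqrt{5}$ in place of $3$; the factor $3$ stated in the fact comes from using the cruder triangle inequality there, as you do.
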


In particular, when  $x = \hat{\mu}$ (our estimator) and $a = \mu^*$ (the true mean of the distribution $\D$) we have estimation error,
\begin{align*}
    \Vert h_k\left(\hat{\mu}\right) - \mu^* \Vert \leqslant 3 \Vert \hat{\mu} - \mu^* \Vert_{2,k}.
\end{align*}
Therefore it is sufficient to bound the $(2,k)$ norm of $ \hat{\mu} - \mu^*$. In particular, it is sufficient to show that the our estimator $\pE[\mu]$ is such that $ \Vert \pE[\mu] - \mu^* \Vert_{2,k}$ is small. Indeed our aim will be to show that
\begin{align*}
   \langle v, \pE[\mu] - \mu^* \rangle = \pE \langle v, \mu - \mu^* \rangle.
\end{align*}
is small for all vectors $v$ that are $k$-sparse. To this end, we will show that
\begin{align*}
     \forall v \in S_k : \left| \pE \langle v, \mu - \mu^* \rangle \right| &\leqslant \left| \pE \langle v, \mu - \mu' \rangle \right| + \left| \pE \langle v, \mu' - \mu^* \rangle \right|\\
     &= \left| \pE \langle v, \mu - \mu' \rangle \right| + \left|\langle v, \mu' - \mu^* \rangle \right|.
\end{align*}
where $\mu'$ is the sample mean of the uncorrupted samples. By concentration of measure arguments (Lemma 14 of \cite{diakonikolas2022robust}) the second term will be small. It will thus suffice to bound the first term. We now define the polynomial system and present the formal statement.

Consider the following polynomial system in variables $\pxi \in \R^d$ and $\pwi \in \R$. Consistent with before, let $\ip$ be an $\varepsilon$-corruption of $\uip$. 

\begin{align}\label{eqn:sparsesystem}
     \cA_{\text{sparse}} \coloneqq \left\{
    \begin{array}{l}
        \forall i \in [n]: \ w_i^2 = w_i   \\
        \forall i \in [n]: \ w_i\cdot (z_i - x_i) = 0\\
        \sn w_i \geqslant (1 - \varepsilon) \\
        \mu = \sn x_i \\
        \forall v \in S_k:  \left[\sn \langle v, x_i - \mu \rangle^t\right]^2 \leqslant M^2
        \end{array}
    \right\}
\end{align}

\begin{theorem}\label{theorem:sparse}
Let $\D$ be a distribution on $\R^d$ with $k$-sparse mean $\mu^*$ and $(M,t)$ certifiably bounded moments in $k$-sparse directions in the sense of Definition \ref{defn:sparsebounded} where $t$ is a power of 2.  
Let $\uip \overset{\mathrm{iid}}{\sim} \D$.  Let $\{z_i\}_{i=1}^n$ be an $\varepsilon$-corruption of $\uip$ where $\varepsilon \in [0,\frac{1}{2})$.
Let $n = \Omega\left(\left( t k \left(\log d\right)\right)^{O(t)} \right)$. Then there is an efficient algorithm based on SoS, running in time $(n\cdot d)^{O(t)}$ that takes as input $\ip,k,M,\varepsilon, t$, and returns a vector $\hat{\mu}$ such that 
\begin{align*}
    \vnorm{\hat{\mu} - \mu^* }_{2,k} \leqslant O\left(  \frac{M^{1/t} \cdot \varepsilon^{1 - 1/t}}{(1 - 2\varepsilon)^{1/t}}\right)
\end{align*}
with probability 0.99.
\end{theorem}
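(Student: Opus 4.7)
The plan is to mirror the proof of \cref{theorem:general} almost step-by-step, with the only substantive change being that every moment bound is now certified \emph{under} the auxiliary constraint system $\cA_k$ that forces the test direction $v$ to be a $k$-sparse unit vector. As in \cref{theorem:general}, I will prove only the large-$\varepsilon$ case, since the small-$\varepsilon$ case is handled by the analysis in \cite{diakonikolas2022robust}, and then combine the two bounds using the elementary fact $\min\!\left(\varepsilon^{1-1/t},\, (1-2\varepsilon)^{-1/t}\right) \leqslant O\!\left(\varepsilon^{1-1/t}/(1-2\varepsilon)^{1/t}\right)$ for $\varepsilon \in [0,\tfrac12)$.

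For the setup, I would overload notation (as in \cref{theorem:general}) by writing $\mu' = \sn x_i^*$ and $\delta = 1 - 2\varepsilon$, and reduce the problem to bounding $\Vert \pE[\mu] - \mu' \Vert_{2,k}$. The sample complexity $n = \Omega\bigl((tk\log d)^{O(t)}\bigr)$ is what propagates the $(M,t)$-certifiably-bounded-moment assumption from $\D$ to the empirical distribution over $\uip$ in an \emph{SoS-certifiable} manner: by (the $k$-sparse version of) Lemma~14 of \cite{diakonikolas2022robust}, we have the SoS inequality $\cA_k \sststile{O(t)}{v,z} \sn \langle v, x_i^* - \mu'\rangle^t \leqslant 2M$. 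The same sample complexity simultaneously guarantees that $\Vert \mu' - \mu^*\Vert_{2,k}$ is dominated by the target error, so that the final triangle inequality is painless.

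For the key SoS step, I would prove
\begin{align*}
\cA_{\mathrm{sparse}} \cup \cA_k \;\sststile{O(t)}{w,x,v,z}\; \delta \cdot \langle \mu - \mu', v\rangle^t \leqslant 2^t \cdot M
\end{align*}
by the same chain of manipulations used in \cref{theorem:general}. First, \cref{lemma:thm} gives $\delta \leqslant \sn w_iw_i^*$, and the booleanity of $w_iw_i^*$ lets me move the factor inside the $t$-th power. Then $w_iw_i^*(z_i - x_i) = 0$ lets me split $w_iw_i^*\langle \mu - \mu', v\rangle = w_iw_i^*\langle \mu - x_i, v\rangle + w_iw_i^*\langle x_i^* - \mu', v\rangle$. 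After applying SoS triangle inequality (using that $t$ is a power of $2$) and dropping $w_iw_i^* \leqslant 1$, I obtain
\begin{align*}
\delta \cdot \langle \mu - \mu', v\rangle^t \leqslant 2^{t-1}\!\left(\sn \langle \mu - x_i, v\rangle^t \;+\; \sn \langle x_i^* - \mu', v\rangle^t\right).
\end{align*}
The first summand is bounded by $M$ via the program's $k$-sparse moment constraint (after taking an SoS square root of $[\cdot]^2 \leqslant M^2$, valid for even $t$), and the second by $2M$ via the concentration statement above — both certifications sit inside $\cA_k$, so the whole inequality is valid under $\cA_{\mathrm{sparse}} \cup \cA_k$.

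For rounding, given any specific $k$-sparse unit $v \in S_k$, I instantiate $z_i = \Ind[v_i \neq 0]$ so that $(v,z)$ satisfies $\cA_k$; since $\pE$ satisfies $\cA_{\mathrm{sparse}}$ and SoS inequalities are preserved by pseudo-expectations, I get $\pE[\langle v, \mu - \mu'\rangle^t] \leqslant 2^t M/\delta$. By the standard pseudo-Jensen inequality for even powers ($\langle v,\pE[\mu] - \mu'\rangle^t = \pE[\langle v, \mu - \mu'\rangle]^t \leqslant \pE[\langle v, \mu - \mu'\rangle^t]$, itself a degree-$t$ SoS consequence of $\cA_{\mathrm{sparse}}$), taking $t$-th roots and the supremum over $v \in S_k$ yields $\Vert \pE[\mu] - \mu'\Vert_{2,k} \leqslant O(M^{1/t}/(1-2\varepsilon)^{1/t})$. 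Combining with the small-$\varepsilon$ bound of \cite{diakonikolas2022robust} and with $\Vert \mu' - \mu^*\Vert_{2,k}$ finishes the proof. The main obstacle is the second step in the key SoS inequality: the bound on the uncorrupted empirical moments must hold as an \emph{SoS inequality uniform in $k$-sparse $v$}, not merely pointwise, and it is precisely this sparse-direction uniform concentration (and the associated covering-based sample complexity) that forces the $(tk\log d)^{O(t)}$ sample bound and that cannot be imported from \cref{theorem:general} without invoking the sparse concentration lemma of \cite{diakonikolas2022robust}.
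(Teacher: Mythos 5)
Your proposal is correct and follows the same route as the paper's proof in Appendix~D: lower-bound $\delta$ by $\sn w_iw_i^*$ via \cref{lemma:thm}, move the factor inside the $t$-th power by booleanity, insert $w_iw_i^*(x_i^*-x_i)$, split via SoS triangle inequality, drop $(w_iw_i^*)^t\le 1$, and then invoke the $(M,t)$ moment certificates. The only divergence is cosmetic: the paper squares both sides, applies the $[\cdot]^2\leqslant M^2$ constraint, and takes one SoS square root at the very end, whereas you take the SoS square root of the constraint up front (getting $\sn\langle v,x_i-\mu\rangle^t\leqslant M$) and close the rounding with pseudo-Jensen for even powers; both are degree-$O(t)$ SoS derivations and yield the same $O(M^{1/t}\delta^{-1/t})$ bound. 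Your closing observation — that the moment bound for the uncorrupted empirical distribution must be certified \emph{uniformly in $k$-sparse $v$} inside $\cA_k$, which is exactly what forces the $(tk\log d)^{O(t)}$ sample complexity and requires the sparse concentration lemma from \cite{diakonikolas2022robust} — is precisely the point the paper relies on.
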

We remark here that Theorem \ref{theorem:sparse} presents a complete picture for robust sparse mean estimation under the assumption of Definition \ref{defn:sparsebounded} for $\varepsilon \in (0,\frac{1}{2})$. As mentioned above, Lemma 14 of \cite{diakonikolas2022robust} demonstrates that for the number of samples we take in Theorem \ref{theorem:sparse}, the sample mean can be made sufficiently close to the true mean, and we therefore omit those details in our theorem statement. We have thus demonstrated a computationally efficient estimator based on SoS that attains the information-theoretically optimal error (up to) constants for the full range of corruption while maintaining sample complexity poly$(k, \log d)$.

We will prove the result for the case when $\varepsilon \to \frac{1}{2}$ inspired by the identifiability proof in Lemma \ref{lemma:generaliden}. Theorem 4 from \cite{diakonikolas2022robust} attains the optimal error for small $\varepsilon$\footnote{We refer the reader to our analysis for optimal breakdown point (Theorem \ref{thm:sparseoptbd}), which also recovers this result.}. Putting together the results for both the regimes of corruption gives us the guarantee of Theorem \ref{theorem:sparse}.
\begin{proof}
We overload notation like we did in previous proofs -- we will use $\mu^* = \sn x_i^*$. Defining $\delta \coloneqq 1 - 2\varepsilon$, we will prove the following: 
\begin{align*}
    \cA_{\text{sparse}} \sststile{O(t)}{w,x,v} \langle v, \mu - \mu^* \rangle \leqslant O\left( M^{1/t} \cdot \delta^{-1/t}\right).
\end{align*}
Then by taking any degree $O(t)$ pseudo-expectation satisfying $\cA_{\text{sparse}}$, we will get the desired guarantee by picking our estimator $\hat{\mu} := \pE[\mu]$ as discussed in Section \ref{sec:rounding}. 

The proof of  $\cA_{\text{sparse}} \sststile{O(t)}{w,x,v} \langle v, \mu - \mu^* \rangle \leqslant O\left( M^{1/t} \cdot \delta^{-1/t}\right)$ is essentially the same as the proof of Theorem \ref{theorem:general} with minor changes and we therefore defer it to Appendix \ref{app:proofs}.
\end{proof}

\newpage
\phantomsection
\addcontentsline{toc}{section}{References}
\bibliographystyle{amsalpha}
\bibliography{refs}

\newcommand{\etalchar}[1]{$^{#1}$}
\providecommand{\bysame}{\leavevmode\hbox to3em{\hrulefill}\thinspace}
\providecommand{\MR}{\relax\ifhmode\unskip\space\fi MR }
\providecommand{\MRhref}[2]{%
  \href{http://www.ams.org/mathscinet-getitem?mr=#1}{#2}
}
\providecommand{\href}[2]{#2}
\begin{thebibliography}{DKK{\etalchar{+}}22}

\bibitem[BDLS17]{balakrishnan2017computationally}
Sivaraman Balakrishnan, Simon~S Du, Jerry Li, and Aarti Singh, \emph{Computationally efficient robust sparse estimation in high dimensions}, Conference on Learning Theory, PMLR, 2017, pp.~169--212.

\bibitem[BH78]{bretagnolle1978estimation}
Jean Bretagnolle and Catherine Huber, \emph{Estimation des densit{\'e}s: risque minimax}, S{\'e}minaire de probabilit{\'e}s de Strasbourg \textbf{12} (1978), 342--363.

\bibitem[BKS15]{barak2015dictionary}
Boaz Barak, Jonathan~A Kelner, and David Steurer, \emph{Dictionary learning and tensor decomposition via the sum-of-squares method}, Proceedings of the forty-seventh annual ACM symposium on Theory of computing, 2015, pp.~143--151.

\bibitem[BS14]{barak2014sum}
Boaz Barak and David Steurer, \emph{Sum-of-squares proofs and the quest toward optimal algorithms}, arXiv preprint arXiv:1404.5236 (2014), nill.

\bibitem[BS16]{barak2016proofs}
\bysame, \emph{Proofs, beliefs, and algorithms through the lens of sum-of-squares}, 2016, Available at \url{https://www.sumofsquares.org/public/index.html},.

\bibitem[Can22]{canonne2022short}
Cl{\'e}ment~L Canonne, \emph{A short note on an inequality between kl and tv}, 2022.

\bibitem[CDG19]{cheng2019high}
Yu~Cheng, Ilias Diakonikolas, and Rong Ge, \emph{High-dimensional robust mean estimation in nearly-linear time}, Proceedings of the thirtieth annual ACM-SIAM symposium on discrete algorithms, SIAM, 2019, pp.~2755--2771.

\bibitem[CSV17]{charikar2017learning}
Moses Charikar, Jacob Steinhardt, and Gregory Valiant, \emph{Learning from untrusted data}, Proceedings of the 49th Annual ACM SIGACT Symposium on Theory of Computing, 2017, pp.~47--60.

\bibitem[DHPT24]{diakonikolas2024sos}
Ilias Diakonikolas, Samuel~B Hopkins, Ankit Pensia, and Stefan Tiegel, \emph{Sos certifiability of subgaussian distributions and its algorithmic applications}, arXiv preprint arXiv:2410.21194 (2024).

\bibitem[DK23]{diakonikolas2023algorithmic}
Ilias Diakonikolas and Daniel~M Kane, \emph{Algorithmic high-dimensional robust statistics}, Cambridge university press, 2023.

\bibitem[DKK{\etalchar{+}}17]{diakonikolas2017being}
Ilias Diakonikolas, Gautam Kamath, Daniel~M Kane, Jerry Li, Ankur Moitra, and Alistair Stewart, \emph{Being robust (in high dimensions) can be practical}, International Conference on Machine Learning, PMLR, 2017, pp.~999--1008.

\bibitem[DKK{\etalchar{+}}18]{diakonikolas2018robustly}
\bysame, \emph{Robustly learning a gaussian: Getting optimal error, efficiently}, Proceedings of the Twenty-Ninth Annual ACM-SIAM Symposium on Discrete Algorithms, SIAM, 2018, pp.~2683--2702.

\bibitem[DKK{\etalchar{+}}19]{diakonikolas2019robust}
\bysame, \emph{Robust estimators in high-dimensions without the computational intractability}, SIAM Journal on Computing \textbf{48} (2019), no.~2, 742--864.

\bibitem[DKK{\etalchar{+}}22]{diakonikolas2022robust}
Ilias Diakonikolas, Daniel~M Kane, Sushrut Karmalkar, Ankit Pensia, and Thanasis Pittas, \emph{Robust sparse mean estimation via sum of squares}, Conference on Learning Theory, PMLR, 2022, pp.~4703--4763.

\bibitem[DKS17]{diakonikolas2017statistical}
Ilias Diakonikolas, Daniel~M Kane, and Alistair Stewart, \emph{Statistical query lower bounds for robust estimation of high-dimensional gaussians and gaussian mixtures}, 2017 IEEE 58th Annual Symposium on Foundations of Computer Science (FOCS), IEEE, 2017, pp.~73--84.

\bibitem[DM22]{dalalyan2022all}
Arnak~S Dalalyan and Arshak Minasyan, \emph{All-in-one robust estimator of the gaussian mean}, The Annals of Statistics \textbf{50} (2022), no.~2, 1193--1219.

\bibitem[FKP{\etalchar{+}}19]{fleming2019semialgebraic}
Noah Fleming, Pravesh Kothari, Toniann Pitassi, et~al., \emph{Semialgebraic proofs and efficient algorithm design}, Foundations and Trends{\textregistered} in Theoretical Computer Science \textbf{14} (2019), no.~1-2, 1--221.

\bibitem[GLS81]{grotschel1981ellipsoid}
Martin Gr{\"o}tschel, L{\'a}szl{\'o} Lov{\'a}sz, and Alexander Schrijver, \emph{The ellipsoid method and its consequences in combinatorial optimization}, Combinatorica \textbf{1} (1981), 169--197.

\bibitem[HL18]{hopkins2018mixture}
Samuel~B Hopkins and Jerry Li, \emph{Mixture models, robustness, and sum of squares proofs}, Proceedings of the 50th Annual ACM SIGACT Symposium on Theory of Computing, 2018, pp.~1021--1034.

\bibitem[HLZ20]{hopkins2020robust}
Sam Hopkins, Jerry Li, and Fred Zhang, \emph{Robust and heavy-tailed mean estimation made simple, via regret minimization}, Advances in Neural Information Processing Systems \textbf{33} (2020), 11902--11912.

\bibitem[Hop18]{hopkins2018clustering}
Sam Hopkins, \emph{Clustering and sum of squares proofs: Six blog posts on unsupervised learning}, 2018.

\bibitem[JP78]{johnson1978densest}
David~S. Johnson and Franco~P Preparata, \emph{The densest hemisphere problem}, Theoretical Computer Science \textbf{6} (1978), no.~1, 93--107.

\bibitem[KMZ22]{kothari2022polynomial}
Pravesh~K Kothari, Peter Manohar, and Brian~Hu Zhang, \emph{Polynomial-time sum-of-squares can robustly estimate mean and covariance of gaussians optimally}, International Conference on Algorithmic Learning Theory, PMLR, 2022, pp.~638--667.

\bibitem[KSS18]{kothari2018robust}
Pravesh~K Kothari, Jacob Steinhardt, and David Steurer, \emph{Robust moment estimation and improved clustering via sum of squares}, Proceedings of the 50th Annual ACM SIGACT Symposium on Theory of Computing, 2018, pp.~1035--1046.

\bibitem[Las01]{lasserre2001new}
Jean~B Lasserre, \emph{New positive semidefinite relaxations for nonconvex quadratic programs}, Advances in Convex Analysis and Global Optimization: Honoring the Memory of C. Caratheodory (1873--1950) (2001), 319--331.

\bibitem[Li19]{li2019lectures}
Jerry~Zheng Li, \emph{Lecture notes on robustness in machine learning}, 2019, Available at \url{https://jerryzli.github.io/robust-ml-fall19.html}.

\bibitem[LRV16]{lai2016agnostic}
Kevin~A Lai, Anup~B Rao, and Santosh Vempala, \emph{Agnostic estimation of mean and covariance}, 2016 IEEE 57th Annual Symposium on Foundations of Computer Science (FOCS), IEEE, 2016, pp.~665--674.

\bibitem[Nes00]{nesterov2000squared}
Yurii Nesterov, \emph{Squared functional systems and optimization problems}, High performance optimization, Springer, 2000, pp.~405--440.

\bibitem[Par00]{parrilo2000structured}
Pablo~A Parrilo, \emph{Structured semidefinite programs and semialgebraic geometry methods in robustness and optimization}, California Institute of Technology, 2000.

\bibitem[Pin64]{pinsker1964information}
Mark~S Pinsker, \emph{Information and information stability of random variables and processes}, Holden-Day (1964), nill.

\bibitem[RL05]{rousseeuw2005robust}
Peter~J Rousseeuw and Annick~M Leroy, \emph{Robust regression and outlier detection}, John wiley \& sons, 2005.

\bibitem[RSS18]{raghavendra2018high}
Prasad Raghavendra, Tselil Schramm, and David Steurer, \emph{High dimensional estimation via sum-of-squares proofs}, Proceedings of the International Congress of Mathematicians: Rio de Janeiro 2018, World Scientific, 2018, pp.~3389--3423.

\bibitem[Sho87]{shor1987quadratic}
Naum~Z Shor, \emph{Quadratic optimization problems}, Soviet Journal of Computer and Systems Sciences \textbf{25} (1987), 1--11.

\bibitem[ST21]{steurer2021sos}
David Steurer and Stefan Tiegel, \emph{Sos degree reduction with applications to clustering and robust moment estimation}, Proceedings of the 2021 ACM-SIAM Symposium on Discrete Algorithms (SODA), SIAM, 2021, pp.~374--393.

\bibitem[Ste18]{steinhardt2018robust}
Jacob Steinhardt, \emph{Robust learning: Information theory and algorithms}, Stanford University, 2018.

\bibitem[Tsy09]{tsybakovnonparametric2009}
Alexandre~B. Tsybakov, \emph{Introduction to nonparametric estimation}, Springer series in statistics, Springer, 2009.

\bibitem[Tuk75]{tukey1975mathematics}
John~W Tukey, \emph{Mathematics and the picturing of data}, Proceedings of the International Congress of Mathematicians, Vancouver, 1975, vol.~2, 1975, pp.~523--531.

\bibitem[ZJS20]{zhu2020does}
Banghua Zhu, Jiantao Jiao, and Jacob Steinhardt, \emph{When does the tukey median work?}, 2020 IEEE International Symposium on Information Theory (ISIT), IEEE, 2020, pp.~1201--1206.

\bibitem[ZJS22a]{zhu2022generalized}
\bysame, \emph{Generalized resilience and robust statistics}, The Annals of Statistics \textbf{50} (2022), no.~4, 2256--2283.

\bibitem[ZJS22b]{zhu2022robust}
\bysame, \emph{Robust estimation via generalized quasi-gradients}, Information and Inference: A Journal of the IMA \textbf{11} (2022), no.~2, 581--636.

\end{thebibliography}
\newpage

\appendix

\section{Information-Theoretic Lower Bounds}\label{sec:lower}

\subsection{Information Theoretic Limits of Robust Estimation}
\begin{lemma}[Limits of Robust Estimation \cite{diakonikolas2023algorithmic}]
    Let $\D_1$ and $\D_2$ be distributions such that $d_{TV}(\D_1, \D_2) \leqslant 2\varepsilon$ for $\varepsilon \in (0,\frac{1}{2})$. Without loss of generality, assume that the true distribution is $\D_1$. Then under $\varepsilon$-corruption, no algorithm can reliably distinguish between $\D_1$ and $\D_2$. 
\end{lemma}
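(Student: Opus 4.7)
The plan is to construct, for each of the two hypotheses, an adversarial $\varepsilon$-corruption strategy so that both strategies produce samples with the same joint distribution. Once this is achieved, no algorithm acting on the corrupted samples can reliably distinguish the two hypotheses, since its input has the same law in either case. The target observed distribution is the midpoint $\D^* \coloneqq \tfrac{1}{2}(\D_1 + \D_2)$, for which the definition of TV distance yields
\begin{align*}
  d_{TV}(\D_1, \D^*) \;=\; d_{TV}(\D_2, \D^*) \;=\; \tfrac{1}{2}\, d_{TV}(\D_1, \D_2) \;\leqslant\; \varepsilon.
\end{align*}
Both adversaries will transform their respective inputs into samples distributed as $\D^*$.

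To construct the adversary starting from $\D_1$, I would invoke the standard decomposition characterization of TV distance: since $d_{TV}(\D_1, \D^*) \leqslant \varepsilon$, there exist distributions $A$, $B$, $C$ on $\R^d$ such that
\begin{align*}
  \D_1 \;=\; (1-\varepsilon)\, A \,+\, \varepsilon\, B \qquad \text{and} \qquad \D^* \;=\; (1-\varepsilon)\, A \,+\, \varepsilon\, C,
\end{align*}
obtained by taking $A$ proportional to $\min(\D_1, \D^*)$ and redistributing any slack into the $\varepsilon$-parts. Given a sample $x \sim \D_1$, the adversary tosses a coin with bias $(1-\varepsilon)\,A(x)/\D_1(x)$: on heads it outputs $x$, and on tails it replaces $x$ by a fresh draw from $C$. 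By construction each output is distributed as $\D^*$, and each sample is marked ``modified'' with probability exactly $\varepsilon$. The symmetric construction starting from $\D_2$ produces the same observed law $\D^*$, so the algorithm's entire input is identically distributed under the two hypotheses, and indistinguishability follows.

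The only technical nuisance, and hence the main obstacle to a clean statement, is that the number of modifications is $\mathrm{Binomial}(n,\varepsilon)$ rather than deterministically at most $\varepsilon n$, whereas \cref{def:epscor} imposes a hard cap of $\lfloor \varepsilon n \rfloor$ corruptions. This is handled by any of the standard conventions: (i) assume the strict inequality $d_{TV}(\D_1,\D_2) < 2\varepsilon$, so that the adversary has concentration slack and Chernoff keeps the number of corruptions below $\varepsilon n$ with probability $1 - o(1)$; (ii) read the lemma asymptotically, allowing an additive $o(1)$ correction to the corruption fraction; or (iii) compose the randomized strategy above with an abort rule on the vanishing-probability event that the Binomial exceeds $\varepsilon n$. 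In each case the total-variation distance between the algorithm's input under the two hypotheses is $o(1)$, so no test can succeed with probability bounded away from $\tfrac{1}{2}$, which is the precise sense in which the two distributions are information-theoretically indistinguishable at corruption level $\varepsilon$.
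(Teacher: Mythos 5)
Your proof is correct and uses exactly the same key idea as the paper: have the adversary push the observed distribution to the midpoint mixture $\tfrac{1}{2}(\D_1+\D_2)$, which is within TV distance $\varepsilon$ of each of $\D_1$ and $\D_2$. The paper states this in one line without spelling out the coupling/decomposition or the binomial-versus-hard-cap technicality; your additional detail and the proposed workarounds for the latter are a valid elaboration of the paper's terse argument rather than a different route.
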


\begin{proof}
    Notice that the adversary can corrupt the samples in a way that the algorithm observes samples from $\widetilde{\D} \coloneqq  \frac{1}{2}\left(\D_1 + \D_2\right)$ which is $\varepsilon$ far in statistical distance from both $\D_1$ and $\D_2$. This makes it information-theoretically impossible to distinguish between $\D_1$ and $\D_2$ in the above setting. 
\end{proof}

\subsection{Lower Bound for Bounded Moments}\label{lb:moments}

\begin{lemma}
    There exist two distributions $\D_1$ and $\D_2$ on $\R$ such that for $\varepsilon \in (0,\frac{1}{2})$
    \begin{enumerate}
        \item $d_{TV}(\D_1, \D_2) \leqslant 2\varepsilon$.
        \item $\D_1$ and $\D_2$ have bounded $k^{th}$ moments for $k \geqslant 2$ in the sense of Definition \ref{defn:certifiablebounded}. 
        \item $\left| \E_{X \sim \D_1}[X] - \E_{X \sim \D_2}[X] \right| \geqslant  \sqrt{k}\cdot \varepsilon^{1 - 1/k}\cdot{(1 - 2\varepsilon)^{-1/k}}$.
    \end{enumerate}
\end{lemma}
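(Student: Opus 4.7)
The plan is to take $\D_1$ and $\D_2$ to be symmetric perturbations of a common base measure, both supported on three points. Fix a parameter $a>0$ to be tuned, and define
\[
\D_1 \coloneqq (1-2\varepsilon)\,\delta_0 + 2\varepsilon\,\delta_{a}, \qquad \D_2 \coloneqq (1-2\varepsilon)\,\delta_0 + 2\varepsilon\,\delta_{-a}.
\]
First I would verify items 1 and 2 directly: the $L^1$-difference of the densities is $4\varepsilon$, hence $d_{TV}(\D_1,\D_2)=2\varepsilon$, and the sum-of-squares certifiability of the moment bound in Definition \ref{defn:certifiablebounded} is trivial in one dimension because the test polynomial $u^s$ is itself a perfect square for every even $s$, so the moment condition reduces to the numerical inequality $\E[(X-\mu)^s]\leq s^{s/2}$.

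The core of the argument is choosing $a$ to saturate the $k$-th moment constraint. By symmetry both distributions have mean $\pm 2\varepsilon a$, and a direct expansion gives, for even $k$,
\[
\E_{\D_1}\bigl[(X-2\varepsilon a)^{k}\bigr] = a^{k}\cdot 2\varepsilon(1-2\varepsilon)\cdot\bigl[(2\varepsilon)^{k-1}+(1-2\varepsilon)^{k-1}\bigr] \leq 4\varepsilon(1-2\varepsilon)\cdot a^{k},
\]
since the bracketed quantity is at most $2$ on $[0,1]$. Setting $a \coloneqq \sqrt{k}\,/\,(4\varepsilon(1-2\varepsilon))^{1/k}$ therefore makes the $k$-th central moment at most $k^{k/2}$, matching Definition \ref{defn:certifiablebounded} with $\|u\|=1$.

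Then the mean gap evaluates to
\[
\bigl|\E_{\D_1}[X]-\E_{\D_2}[X]\bigr| = 4\varepsilon a \;=\; 4^{\,1-1/k}\cdot\sqrt{k}\cdot\frac{\varepsilon^{\,1-1/k}}{(1-2\varepsilon)^{1/k}} \;\geqslant\; \sqrt{k}\cdot\frac{\varepsilon^{\,1-1/k}}{(1-2\varepsilon)^{1/k}},
\]
which is the conclusion of the lemma. Observe that as a sanity check, plugging $k=2$ recovers the bounded covariance lower bound of \cref{lemma:lbbcov} up to absolute constants, and sending $\varepsilon\to\frac12$ produces the $\sqrt{k}/\delta^{1/k}$ blow-up predicted by the identifiability argument of \cref{lemma:generaliden}.

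The one subtlety I anticipate is that Definition \ref{defn:certifiablebounded} also requires the $s$-th moment bound for every even $s<k$, and once $a$ is chosen to saturate the $k$-th moment, a three-atom construction may overshoot the $s^{s/2}$ bound for small $s$ (most notably $s=2$). I would address this by replacing each of the atoms at $\pm a$ by a mildly smoothed version -- e.g.\ convolving with a small rescaled Gaussian or with a Rademacher of appropriate width -- chosen so the lower-order central moments can be independently tuned to lie below their $s^{s/2}$ thresholds, while the leading-order $k$-th moment and the mean gap of $4\varepsilon a$ are preserved up to a factor $1\pm o(1)$. This smoothing is the only non-routine step; the rest of the argument is the direct computation above.
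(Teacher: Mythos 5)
Your construction is a genuinely different (and correct-as-far-as-it-goes) route, so let me compare it to the paper's and then flag a real problem with your proposed fix for the lower-moment issue.

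The paper takes $\D_1 = \delta_0$ (a single point mass) and $\D_2 = (1-2\varepsilon)\,\delta_0 + 2\varepsilon\,\delta_{a^\star}$ with $a^\star = \sqrt{k}\,(2\varepsilon(1-2\varepsilon))^{-1/k}$. Using the point mass for $\D_1$ is a small but real simplification: all central moments of $\D_1$ are identically zero, so the only moment computation is for $\D_2$, and the asymmetric choice yields the factor $(2\varepsilon)^{1-1/k}$ directly rather than your $4^{1-1/k}\varepsilon^{1-1/k}$. Your symmetric reflection trick ($\D_1$ and $\D_2$ mirror images of each other around $0$) doubles the mean gap for free, but costs you an extra moment computation on the second distribution. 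Both reach the same final bound up to absolute constants, and your $k$-th moment calculation is correct.

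The genuine issue is the subtlety you flagged at the end. You are right that Definition \ref{defn:certifiablebounded} requires $\E[(X-\mu)^s]\leq s^{s/2}$ for \emph{every} even $s\leq k$, and that saturating the $k$-th moment at $a \approx \sqrt{k}\,u^{-1/k}$ (with $u\coloneqq\varepsilon(1-2\varepsilon)$) can overshoot the $s=2$ bound: the variance is on the order of $u\,a^2 \approx k\,u^{1-2/k}$, which exceeds $2$ once $u$ is bounded away from $0$ and $k$ is large. But your proposed fix does not work: convolving the atoms with a narrow Gaussian or Rademacher kernel \emph{adds} to every even central moment (the kernel's own central moments are nonnegative and enter additively/positively under convolution), so smoothing can only make the overshoot worse, never better. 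The only ways to actually enforce all low-order moments are to shrink $a$ (which shrinks the mean gap) or to weaken the claim. Worth noting, though: the paper's own proof has exactly the same gap, as it verifies only $\E[(X-\mu_2)^k]\leq k^{k/2}$ and never checks $s<k$. The cleanest resolution for both you and the paper is to observe that the polynomial system $\cA$ in Theorem~\ref{theorem:general} uses \emph{only} the $k$-th moment constraint, so the lower bound only needs the single moment bound $\E[(X-\mu)^k]\leq k^{k/2}$; restating the lemma with that weaker hypothesis makes both constructions correct as written and still matches the algorithmic upper bound.
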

\begin{proof}
    Take $\D_1$ to be the distribution that outputs 0 with probability 1. Take $\D_2$ to be the distribution that outputs samples from $\D_1$ with probability $1 - 2\varepsilon$, and outputs $\sqrt{k}\cdot{\left(2\varepsilon  (1 - 2\varepsilon)\right)^{-1/k}}$  with probability $2\varepsilon$. Notice that $d_{TV}(\D_1, \D_2) \leqslant 2\varepsilon$. Furthermore we have
    \begin{align*}
        \left| \E_{X \sim \D_1}[X] - \E_{X \sim \D_2}[X] \right| = \left|0 - \frac{(2\varepsilon)^{1-1/k}\sqrt{k}}{(1-2\varepsilon)^{1/k}}\right| \geqslant \sqrt{k}\varepsilon^{1 - 1/k}\cdot(1 - 2\varepsilon)^{-1/k}
    \end{align*}
    $\D_1$ clearly has bounded $k^{th}$ central moments. Let $\mu_2$ be the mean of $\D_2$.
    \begin{align*}
     \E_{X \sim \D_2} \left[ \left(X - \mu_2\right)^k\right] &= \mathbb{E} \left[ \left.\left(X - \mu_2\right)^k \right| X = 0\right]\cdot\Pr[X = 0 ] \\
     &+ \mathbb{E}\left[\left.\left(X - \mu_2\right)^k \right| X = \frac{\sqrt{k}}{\left(2\varepsilon  (1 - 2\varepsilon)\right)^{1/k}} \right]\cdot\Pr\left[X =  \frac{\sqrt{k}}{\left(2\varepsilon  (1 - 2\varepsilon)\right)^{1/k}}\right]\\
     &= \frac{k^{k/2}(2\varepsilon)^{k-1}}{(1-2\varepsilon)}\cdot(1 - 2\varepsilon) + \frac{(1-2\varepsilon)^k k^{k/2}}{2\varepsilon(1-2\varepsilon)}\cdot2\varepsilon  \\
     &= k^{k/2}\cdot \left( (2\varepsilon)^{k-1} + (1-2\varepsilon)^{k-1}\right)\\
     &\leqslant k^{k/2}.
    \end{align*}
    where the final inequality follows from the fact that $\left( (2\varepsilon)^{k-1} + (1-2\varepsilon)^{k-1}\right) \leqslant 1 \ \forall \varepsilon \in (0,\frac{1}{2})$. 
\end{proof}

\subsection{Lower Bound for Gaussians}\label{lb:gauss}
\begin{lemma}
    There exists two distributions $\D_1$ and $\D_2$ on $\R$ such that for $C \leqslant \varepsilon < \frac{1}{2}$ for a sufficiently large constant $C > 0$ such that
    \begin{enumerate}
        \item $d_{TV}(\D_1, \D_2) \leqslant 2\varepsilon$.
        \item $\D_1$ and $\D_2$ are Gaussians with unit variance.
        \item $\left| \E_{X \sim \D_1}[X] - \E_{X \sim \D_2}[X] \right| \geqslant \Omega \left(\sqrt{\log \frac{1}{1 - 2\varepsilon}}\right)$.
    \end{enumerate}
\end{lemma}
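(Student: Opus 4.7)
The plan is to realize both $\D_1$ and $\D_2$ as unit-variance Gaussians that differ only in their mean. Concretely, let $\delta \coloneqq 1 - 2\varepsilon$, set $\D_1 \coloneqq \cN(0,1)$, and $\D_2 \coloneqq \cN(\Delta, 1)$ with
\[
\Delta \coloneqq c\cdot \sqrt{\log\tfrac{1}{\delta}}
\]
for a sufficiently small absolute constant $c>0$ to be fixed later. Parts~(2) and~(3) of the conclusion are then immediate from the construction, so everything reduces to verifying part~(1): $d_{TV}(\D_1, \D_2) \leqslant 2\varepsilon$.

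For Gaussians differing only in mean, a direct computation gives
\[
d_{TV}(\D_1,\D_2) \;=\; 2\Phi(\Delta/2) - 1,
\]
where $\Phi$ is the standard normal CDF. Hence the condition $d_{TV}(\D_1,\D_2) \leqslant 2\varepsilon$ is equivalent to
\[
\Pr_{Z\sim \cN(0,1)}\!\bigl[Z > \Delta/2\bigr] \;\geqslant\; \tfrac{\delta}{2}.
\]
So the whole task is to show that the right tail of a standard Gaussian at the point $\Delta/2 = (c/2)\sqrt{\log(1/\delta)}$ is at least $\delta/2$.

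The key step is to invoke the standard Gaussian tail lower bound
\[
\Pr[Z > t] \;\geqslant\; \frac{1}{\sqrt{2\pi}}\cdot \frac{t}{t^2+1}\cdot e^{-t^2/2} \qquad (t\geqslant 1),
\]
applied to $t = \Delta/2 = (c/2)\sqrt{\log(1/\delta)}$. The dominant factor is $e^{-t^2/2} = \delta^{c^2/8}$. Choosing $c$ small enough (any $c < 2\sqrt{2}$ suffices, and one can take e.g.\ $c = 1$ for concreteness) makes the exponent $c^2/8$ strictly smaller than $1$, so $\delta^{c^2/8}$ dominates $\delta$ by a polynomial factor in $1/\delta$, easily absorbing the subleading $t/(t^2+1)$ and $1/\sqrt{2\pi}$ factors once $\delta$ is small enough, i.e.\ once $\varepsilon \geqslant C$ for some constant $C < 1/2$ sufficiently close to $1/2$. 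This verifies $\Pr[Z > \Delta/2] \geqslant \delta/2$, completing part~(1).

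\paragraph{Main obstacle.} There is no serious obstacle; the proof is a one-line tail-bound computation after the right choice of parameters. The only thing to watch is ensuring the constants line up: one must pick the constant $c$ in $\Delta = c\sqrt{\log(1/\delta)}$ strictly smaller than $2\sqrt{2}$ so that the Gaussian tail estimate $e^{-\Delta^2/8}$ yields a quantity of order $\delta^{c^2/8} \gg \delta$, and one must take $\varepsilon$ large enough (equivalently $\delta$ small enough) so that the polylogarithmic subleading factors from the Gaussian tail bound are absorbed into this slack. Both conditions are satisfied for all $\varepsilon \in [C, 1/2)$ with $C$ a sufficiently large constant less than $1/2$, which is exactly the hypothesis of the lemma.
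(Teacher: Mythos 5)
Your proof is correct, but it takes a genuinely different route from the paper's. The paper fixes $\D_2 = \cN(\mu,1)$ so that $d_{TV}(\D_1,\D_2) = 2\varepsilon$ exactly, then uses Tsybakov's version of the Bretagnolle–Huber inequality together with the closed-form KL divergence $\kldiv{\cN(0,1)}{\cN(\mu,1)} = \mu^2/2$ to derive a lower bound $\mu \geqslant \sqrt{\log\frac{1}{1-2\varepsilon} - \log 2}$ directly. You instead choose the mean gap $\Delta = c\sqrt{\log(1/\delta)}$ up front, use the exact identity $d_{TV}(\cN(0,1),\cN(\Delta,1)) = 2\Phi(\Delta/2)-1$ to reduce condition (1) to the Gaussian tail estimate $\Pr[Z > \Delta/2] \geqslant \delta/2$, and then verify that estimate with the classical lower bound $\Pr[Z>t] \geqslant \tfrac{1}{\sqrt{2\pi}}\tfrac{t}{t^2+1}e^{-t^2/2}$, using that for any $c < 2\sqrt{2}$ the leading term $\delta^{c^2/8}$ polynomially dominates $\delta$, so the subleading $\Theta(1/\sqrt{\log(1/\delta)})$ factor is absorbed once $\delta$ is small. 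Both arguments are sound; yours is more elementary and exploits the exact TV formula available for shifted Gaussians, while the paper's is slightly shorter and follows a template (bound TV via a KL-based inequality and invert) that generalizes more readily to non-Gaussian families. One presentational note: the exact TV identity and the tail lower bound you invoke are both standard but not stated in the paper's preliminaries, so a full write-up should cite or derive them; by contrast, the paper's proof leans only on inequalities it has already recorded in Appendix C.1.
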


\begin{proof}
 Take $\D_1$ to be the standard Gaussian $\cN(0,1)$. Take $\D_2$ to be the Gaussian $\cN(\mu,1)$, with $\mu > 0$ such that $d_{TV}(\D_1, \D_2) = 2\varepsilon$. We consider Tsybakov's version of the Bretagnolle–Huber inequality \cite{bretagnolle1978estimation, tsybakovnonparametric2009, canonne2022short} which relates the statistical distance and KL divergence, especially for large KL divergences (See Appendix \ref{app:facts} for more details).
\begin{align*}
    d_{TV}(P,Q) \leqslant 1 - \frac{1}{2}\exp\left(-\kldiv{P}{Q}\right)
\end{align*}
 Via a simple calculation, we have
\begin{align*}
    \kldiv{\cN(\mu_1, \sigma_1^2)}{ \cN(\mu_2, \sigma_2^2
    )} =\log \frac{\sigma_2}{\sigma_1} + \frac{\sigma_1^2 + (\mu_1 - \mu_2)^2}{2\sigma_2^2} - \frac{1}{2}
\end{align*}
Applying this inequality for our choice of $\D_1$ and $\D_2$, we get 
\begin{align*}
   2\varepsilon &\leqslant 1 - \frac{1}{2}\exp\left(-\frac{\mu^2}{2}\right)
   \end{align*}
   which after simplifying yields 
\begin{align*}
    \mu &\geqslant \sqrt{2\cdot \log\left(\frac{1}{2 - 4\varepsilon}\right)} \geqslant \sqrt{\log \left(\frac{1}{1 - 2\varepsilon}\right) - \log 2} 
\end{align*}
From which we conclude that 
\begin{align*}
    \mu \geqslant \Omega\left(\sqrt{\log \frac{1}{1-2\varepsilon}} \right).
\end{align*}
\end{proof}

\section{Preliminaries}
\subsection{Sum-of-Squares Proofs to Algorithms}\label{app:sos}
Let us denote by $X$ a vector or matrix of $n$ indeterminates over $\R$.
Let $p_1(X), p_2(X), \dots p_m(X), q(X) \in \R[X]$ by polynomials with real coefficients defined over these indeterminates. We use the shorthand $\Pp$ to denote the vector consisting of polynomials $p_1, p_2, \dots, p_m$. 

Now let us consider the following set $S \coloneqq \{X \in \R^n : \Pp(X) \geqslant 0 \}$, the set of all feasible solutions of $\Pp(X) \geqslant 0$. As it will become clear, we will be interested in deriving proofs of the following form 
\begin{align*}
    \forall X \in S : q(X) \geqslant 0
\end{align*}
A degree $\ell$ Sum-of-Squares proof of the statement $q(X) \geqslant 0$ is \emph{any} finite decomposition of $q$ in the following form
\begin{align*}
    q(X) = \sum_i s_i(X)^2 \cdot \bar{p}_i(X)
\end{align*}
where $\bar{p}_i(X)$ is a product of a subset of polynomials in $\{p_1, p_2, \dots, p_m\}$ and each term in the above sum has degree at most $\ell$. For the empty set from $\{p_1, p_2, \dots, p_m\}$ we define $\bar{p}_i(X) \coloneqq 1$. 

We use the following notation for Sum-of-Squares proofs.
\begin{align*}
    \left\{\Pp(X) \geqslant 0 \right\} \sststile{\ell}{X} \left\{q(X) \geqslant 0 \right\}
\end{align*}
The above notation says that from the axioms $\Pp(X) \geqslant 0$, we can derive a Sum-of-Squares proof of the statement $q(X) \geqslant 0$ of degree at most $\ell$ in indeterminates $X$. We will use the shorthand SoS to denote Sum-of-Squares.

We have the following standard fact that we require in this paper.
\begin{fact} 
If $\cA \sststile{\ell}{X} p(X) \geqslant 0$ and $\cB \sststile{\ell'}{X} q(X) \geqslant 0$ then
\begin{align*}
    \cA \cup \cB \sststile{\ell + \ell'}{X} \{p(X) + q(X) \geqslant 0\}\ \ \ \cA \cup \cB \sststile{\ell \cdot \ell'}{X} \{p(X)\cdot q(X) \geqslant 0 \}
\end{align*}
\end{fact}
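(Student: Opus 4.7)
The plan is to construct explicit SoS decompositions for $p+q$ and $p \cdot q$ by combining the two decompositions granted by the hypotheses. By assumption we can write $p(X) = \sum_i s_i(X)^2 \bar{p}_i(X)$, where each $\bar{p}_i$ is a product of a subset of polynomials from $\cA$ and every summand has total degree at most $\ell$, and analogously $q(X) = \sum_j t_j(X)^2 \bar{q}_j(X)$ with summands of total degree at most $\ell'$ and each $\bar{q}_j$ a product of a subset from $\cB$. These two decompositions are the only objects I will manipulate.

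For the sum, I would simply concatenate the two representations:
\begin{align*}
    p(X) + q(X) = \sum_i s_i(X)^2 \bar{p}_i(X) + \sum_j t_j(X)^2 \bar{q}_j(X).
\end{align*}
Since any subset of $\cA$ is also a subset of $\cA \cup \cB$, and likewise for $\cB$, every summand on the right is of the form \emph{square times product of a subset of $\cA \cup \cB$}, which is exactly the syntactic shape required by the definition of an SoS derivation from $\cA \cup \cB$. The maximum degree of any individual term is $\max(\ell, \ell') \leqslant \ell + \ell'$, giving the first bound in the Fact.

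For the product, I would multiply the two decompositions term by term and repackage each cross-term using the identity $(s_i t_j)^2 = s_i^2 t_j^2$:
\begin{align*}
    p(X) \cdot q(X) = \sum_{i,j} \bigl( s_i(X) t_j(X) \bigr)^2 \cdot \bigl( \bar{p}_i(X) \bar{q}_j(X) \bigr).
\end{align*}
The concatenation $\bar{p}_i(X) \bar{q}_j(X)$ is a product of a subset of $\cA \cup \cB$ (the union of the subsets used by $\bar{p}_i$ and $\bar{q}_j$), so the right-hand side is a valid SoS derivation from $\cA \cup \cB$. The total degree of the $(i,j)$ term is at most $\ell + \ell'$, which is itself at most $\ell \cdot \ell'$ in the regime $\ell, \ell' \geqslant 2$ that is relevant in this paper; the Fact only asks for this (looser) upper bound.

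The only delicate aspect is purely bookkeeping: I must check that a subset of $\cA$ together with a subset of $\cB$ remains a subset of $\cA \cup \cB$, so products of axioms survive the union, and that degrees of products add. There is no substantive mathematical obstacle; the whole content of the Fact is simply that the SoS proof system is closed under the natural compositions corresponding to addition and multiplication of polynomials.
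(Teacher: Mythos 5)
Your proof is correct, and it is the standard argument: the paper states this as a ``standard fact'' and does not give a proof, so there is nothing to compare against, but both the sum and product constructions are exactly what one would write down from the definition of an SoS refutation. Two small notes worth being explicit about. First, your degree bound for the product is actually $\ell + \ell'$, which is \emph{stronger} than the $\ell \cdot \ell'$ the Fact claims; as you observe, $\ell + \ell' \leqslant \ell \cdot \ell'$ only when $\ell, \ell' \geqslant 2$, so the Fact as stated implicitly assumes that regime (harmless here, since all degrees in the paper are at least $2$). Second, when you form $\bar p_i(X)\bar q_j(X)$ you may produce a repeated axiom if $\cA$ and $\cB$ share polynomials; any such repeated factor $r(X)^2$ should be absorbed into the square coefficient via $(s_i t_j)^2 r^2 = (s_i t_j r)^2$ so that the remaining product is over a genuine \emph{subset} of $\cA \cup \cB$, matching the paper's definition. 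This is the ``purely bookkeeping'' step you gesture at and it does go through.
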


The key connection between SoS proofs and efficient algorithms is due to the pseudo-expectation operator.

\begin{definition}[pseudo-expectation] A degree $\ell$ pseudo-expectation satisfying $\Pp$ is a linear functional $\pE : \R[X]_{\leqslant \ell} \to \R$ over polynomials of degree at most $\ell$ such that:
\begin{enumerate}
    \item Normalization: $\pE [1] = 1$.
    \item Satisifiability: $ \forall i \in [m] : \pE[p_i(X) \cdot s^2(X)] \geqslant 0$ for all polynomials $s$ such that degree$(p_i(X) \cdot s^2(x)) \leqslant \ell$.
    \item Non-negativity of Squares: $\pE[q^2(x)] \geqslant 0$ for all polynomials such that degree$(q^2) \leqslant \ell$.
\end{enumerate}
\end{definition}

\begin{fact}\label{fact:pseudo}
    If $\cA \sststile{\ell}{X} \{ q(X) \geqslant 0\}$ and $\pE[.]$ is a degree $\ell$ pseudo-expectation satisfying $\cA$, then $\pE[q(X)] \geqslant 0$.
\end{fact}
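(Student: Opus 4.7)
The plan is to unfold the definition of an SoS proof and then use linearity of the pseudo-expectation operator to reduce the claim to verifying nonnegativity on each ``atomic'' term in the SoS decomposition. By hypothesis we have a finite decomposition
\[
q(X) \;=\; \sum_i s_i(X)^2 \, \bar p_i(X),
\]
where each $\bar p_i$ is the product of some (possibly empty) subset of the axiom polynomials $\{p_1,\dots,p_m\}$, the empty product being the constant $1$, and each summand $s_i(X)^2 \bar p_i(X)$ has total degree at most $\ell$. Since $\pE$ is a linear functional on $\R[X]_{\le \ell}$, applying it termwise gives
\[
\pE[q(X)] \;=\; \sum_i \pE\!\left[ s_i(X)^2 \bar p_i(X) \right],
\]
so it suffices to prove that every summand on the right-hand side is nonnegative.

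Next I would split into cases on the shape of $\bar p_i$. When $\bar p_i \equiv 1$, the summand is $\pE[s_i(X)^2]$, which is nonnegative by the non-negativity of squares clause of the pseudo-expectation definition; the degree bound on the decomposition ensures $\deg(s_i^2) \le \ell$, so this clause applies. When $\bar p_i = p_{j_1} p_{j_2} \cdots p_{j_r}$ with $r \ge 1$, the summand equals $\pE[\,p_{j_1} \cdot (p_{j_2}\cdots p_{j_r})\cdot s_i^2\,]$, which I would argue is nonnegative by the satisfiability clause, reading $(p_{j_2}\cdots p_{j_r})\cdot s_i^2$ as the ``polynomial weight'' attached to the axiom $p_{j_1}$.

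The one real subtlety, and what I expect to be the main obstacle, is that the satisfiability clause as stated in the excerpt only covers $\pE[p_i(X) s^2(X)] \geqslant 0$ for a single axiom $p_i$ times a single square $s^2$, whereas the SoS proof produces products of several axioms against a square. The standard resolution, which I would invoke, is the (equivalent) strengthening of the pseudo-expectation definition in which one requires $\pE[\,p_{j_1}\cdots p_{j_r} \cdot s^2\,] \geqslant 0$ for every product of axioms of total degree at most $\ell$; this is the conventional reading when a pseudo-expectation is said to ``satisfy'' a constraint system, and it matches exactly what an SoS proof from $\cA$ manipulates. With this reading in place, each $\pE[s_i^2 \bar p_i]$ is nonnegative by the appropriate clause (squares-only or product-of-axioms-times-square), the sum of nonnegatives is nonnegative, and the conclusion $\pE[q(X)] \geqslant 0$ follows. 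Once the definitional matter is pinned down, the proof is purely an unfolding of the SoS decomposition followed by linearity of $\pE$.
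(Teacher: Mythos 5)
Your argument is correct, and there is nothing in the paper to compare it against line by line: the paper states this Fact without proof, treating it as standard background from the cited SoS literature, so the canonical duality argument you give is exactly the intended one. Unfolding the certificate $q(X)=\sum_i s_i(X)^2\,\bar p_i(X)$ and applying linearity of $\pE$ termwise is the right reduction, and the subtlety you flag is genuine rather than pedantic: as literally written, the paper's satisfiability clause only guarantees $\pE[p_j(X)\,s^2(X)]\geqslant 0$ for a \emph{single} axiom $p_j$, and this weaker condition does not in general imply $\pE[p_{j_1}\cdots p_{j_r}\, s^2]\geqslant 0$, which is what the summands with $r\geqslant 2$ require. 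The conventional resolution is the one you adopt: ``satisfying $\cA$'' is taken to mean $\pE[\bar p\cdot s^2]\geqslant 0$ for every product $\bar p$ of a subset of the axioms with $\deg(\bar p\, s^2)\leqslant \ell$ (this is the definition in the standard treatments the paper cites, and it is what the semidefinite-programming relaxation actually enforces); with that reading every summand is nonnegative by either the squares clause (empty product) or the satisfiability clause (nonempty product), and $\pE[q(X)]\geqslant 0$ follows. A minor additional point worth recording: the paper's constraint systems consist largely of equalities (e.g.\ $w_i^2=w_i$), which are handled by including both $p$ and $-p$ as axioms, so the same termwise argument covers them without change.
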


\begin{fact}[SoS Algorithm]\label{alg:sos}
 \cite{shor1987quadratic, parrilo2000structured, nesterov2000squared,lasserre2001new} (Informal) Given a feasible system $\Pp$ over $\R^n$ with bit-complexity at most $(m+n)^{O(1)}$, using the ellipsoid method \cite{grotschel1981ellipsoid}, a degree $\ell$ pseudo-expectation satisfying $\Pp$ can be found in time $(m + n)^{O(\ell)}$.
\end{fact}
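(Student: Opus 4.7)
The plan is to reformulate the search for a degree-$\ell$ pseudo-expectation as a semidefinite feasibility problem over a convex set whose explicit description has size $(m+n)^{O(\ell)}$, and then invoke the standard complexity analysis of the ellipsoid method of \cite{grotschel1981ellipsoid}. First I would identify a pseudo-expectation $\pE$ with its \emph{moment vector} $\Big(\pE[X^\alpha]\Big)_{|\alpha|\le \ell}$ indexed by monomials in $X_1,\dots,X_n$ of degree at most $\ell$; the number of such monomials is $\binom{n+\ell}{\ell} = (n+1)^{O(\ell)}$, so the unknown has dimension $N = (m+n)^{O(\ell)}$. This step is purely bookkeeping but fixes the ambient variable space for the SDP.

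Next I would translate the three defining axioms of a degree-$\ell$ pseudo-expectation into affine and PSD constraints on this moment vector. Normalization $\pE[1]=1$ is a single linear equation. Non-negativity of squares, $\pE[q(X)^2]\ge 0$ for all $\deg(q^2)\le \ell$, is equivalent to PSD-ness of the moment matrix $M_{\lfloor \ell/2\rfloor}$ whose $(\alpha,\beta)$ entry is $\pE[X^{\alpha+\beta}]$ (this uses linearity of $\pE$). Satisfiability with respect to each axiom $p_i(X)\ge 0$ becomes PSD-ness of the localizing matrix $M_{\lfloor(\ell-\deg p_i)/2\rfloor}(p_i\cdot \pE)$, whose entries are linear in the moment variables. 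Thus the set of valid pseudo-expectations is the intersection of an affine subspace with the cone of PSD matrices of total size $(m+n)^{O(\ell)}$: a semidefinite program. Feasibility of this SDP follows from the hypothesis that $\Pp$ has a classical solution $x^\star\in\R^n$, since then the linear functional $q\mapsto q(x^\star)$ trivially satisfies all three axioms.

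The remaining work is complexity: the ellipsoid method finds a feasible point of any convex set $K\subseteq \R^N$ in time $\poly(N,L)$ provided one has (i) a polynomial-time weak separation oracle and (ii) a priori bounds $r,R,L$ with $B(x_0,r)\subseteq K\subseteq B(0,R)$ of bit-length $L=\poly(N)$. Separation reduces to computing the minimum eigenvalue of a symmetric matrix whose entries are affine functions of the current iterate, so it is polynomial-time in $N$ and the input bit-complexity. To exhibit an outer radius, I would observe that the axioms $p_i\ge 0$ together with the bit-complexity bound on $\Pp$ localize the classical feasible set to a box of bit-length $(m+n)^{O(1)}$, which in turn bounds the magnitudes of the moments $\pE[X^\alpha]$ by $(m+n)^{O(\ell)}$. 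For an inner radius (needed so the ellipsoid method does not have to certify $0$-volume feasibility), one can either perturb the SDP slightly and argue via a convex combination with a ``dummy'' pseudo-expectation, or appeal to the standard fact that the weak feasibility variant of SDP suffices here and only requires an $\varepsilon$-approximate feasible point, which one can round back into $K$ using the satisfiability slack.

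The main obstacle is precisely this last step: handling degeneracy and bit-complexity of the output so that the ellipsoid method returns an exactly (or approximately) feasible pseudo-expectation with polynomial-size description, since $K$ can be lower-dimensional (e.g., due to equality axioms such as $w_i^2=w_i$ in our applications). The standard resolution, which I would cite rather than reprove, is to work with the weak-feasibility SDP and then project onto the affine hull of equality constraints, or equivalently to pass to a full-dimensional slice using Gaussian elimination on the affine part before invoking the ellipsoid method; this is the content of the classical references \cite{shor1987quadratic, parrilo2000structured, nesterov2000squared, lasserre2001new} and is what justifies stating the result as ``informal.'' Everything else is a routine assembly of dimension counts and the Gr\"otschel--Lov\'asz--Schrijver machinery.
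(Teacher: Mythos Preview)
The paper does not prove this statement at all: it is stated as a cited \textbf{Fact} (marked ``Informal'') with references to \cite{shor1987quadratic, parrilo2000structured, nesterov2000squared, lasserre2001new} and \cite{grotschel1981ellipsoid}, and no argument is given in the text. Your sketch is the standard reduction (moment vector $\to$ moment/localizing PSD constraints $\to$ ellipsoid on an SDP of dimension $(m+n)^{O(\ell)}$) that underlies those references, so it is consistent with what the paper is citing rather than something to compare against a proof in the paper.

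One caveat worth flagging: your handling of the inner-radius/degeneracy issue is the genuinely delicate point, and ``round back into $K$ using the satisfiability slack'' is not quite right in general, since equality axioms like $w_i^2=w_i$ force the feasible set to be lower-dimensional and there is no slack to exploit. The honest resolution is exactly what the paper signals by writing ``Informal'': one works with approximate pseudo-expectations (satisfying the constraints up to additive $2^{-n^{O(\ell)}}$) and verifies separately that the downstream SoS proofs are robust to this slack, or one appeals to the more careful numerical analyses in the literature. Your acknowledgment that this step is ``the main obstacle'' and should be cited rather than reproved is the appropriate stance here.
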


We remark here that by using Fact \ref{alg:sos}, we can compute a degree $\ell$ pseudo-expectation satisfying $\Pp$ in time $(m + n)^{O(\ell)}$. For our purposes, we will use this $\pE[.]$ on monomials from our polynomial system (recall that $\pE[.]$ maps polynomials to reals), and prove that the real vector obtained after applying $\pE[.]$ to the monomials will be close to the true parameter.

The key idea for statistical estimation will therefore be to have a program variable act as an estimator and relate it to the true parameter using the polynomial $q(.)$ in Fact \ref{fact:pseudo}. This will allow us to then work with $\pE[.]$. We refer the reader to \cite{raghavendra2018high} for a detailed exposition on bringing the above facts together to design efficient algorithms from SoS proofs.

For our algorithms to have polynomial running time or equivalently remain efficient, we require that $(m + n)^{O(\ell)}$ remains a polynomial in $m$ and $n$. In particular we refer to an SoS proof as \emph{low-degree} if $\ell$ does not grow with $m, n$. For all our efficient algorithms $\ell$ is a constant.

We refer the reader to \cite{barak2014sum, barak2016proofs} for a thorough overview of SoS, pseudo-expectations and their parent object of pseudo-distributions.

\subsection{Sum-of-Squares Toolkit}\label{app:sostool}
In this section we state the different SoS proofs we used in the paper. We also provide the proofs for SoS Cancellation and SoS Square Root for completeness. We note that the other proofs are well known - See \cite{barak2014sum, barak2016proofs}.\\
\begin{itemize}
    \item \textbf{Squares are non-negative}
\begin{align*}
    &\sststile{2}{a,b} 2 \cdot a \cdot b \leqslant  \left(a^2 + b^2\right) 
\end{align*}
\item \textbf{SoS Cauchy Schwarz}\\
For vectors of variables $a = (a_1, a_2 \dots a_n)$ and $b = (b_1, b_2, \dots, b_n)$, 
\begin{align*}
    \sststile{4}{a,b} \langle a, b \rangle^2 \leqslant \Vert a \Vert^2 \cdot \Vert b \Vert^2
\end{align*}
\item \textbf{SoS \Hoelder Inequality} \cite{hopkins2018mixture}\\
Let $k$ be a power of 2. For vectors of variables $w = (w_1, w_2, \dots, w_n)$ and $b = (b_1, b_2, \dots, b_n)$, 
\begin{align*}
    \left\{ \forall i \in [n]: w_i^2 = w_i \right\} \sststile{O(k)}{w, b} \left(\sum\limits_{i=1}^n w_i b_i \right)^k \leqslant \left(\sum\limits_{i=1}^n w_i \right)^{k-1} \cdot \left(\sum\limits_{i=1}^n b_i^k \right)
\end{align*}

\item \textbf{SoS AM-GM Inequality} \cite{barak2015dictionary}\\
Let $w_1, w_2, w_3, \dots, w_t$ be SoS polynomials. Then we have
\begin{align*}
    \sststile{t}{w_1, w_2, \dots, w_t} \prod_{i=1}^t w_i \leqslant \frac{1}{t}\sum_{i=1}^t w_i^t
\end{align*}
\item \textbf{SoS Triangle Inequality} \cite{kothari2018robust}\\
For even $t$
\begin{align*}
    \sststile{t}{a, b} (a+b)^t \leqslant 2^{t-1} \cdot \left( a^t + b^t \right)
\end{align*}
\item \textbf{SoS Cancellation}
For every $C > 0$,
\begin{align*}
   \left\{ X^2 \leqslant C \cdot X \right\} \sststile{2}{X} \left\{X \leqslant C \right\}
\end{align*}
\begin{proof}
\begin{align*}
        X = \frac{X}{\sqrt{C}} \cdot \sqrt{C} \leqslant \frac{1}{2} \cdot \left(\frac{X^2}{C} + C \right) \leqslant \frac{X}{2} + \frac{C}{2}
\end{align*}
where the first inequality is due to the fact that Squares are non-negative and the second inequality is because of the axioms.
Rearranging, and cancelling a factor of $\frac{1}{2}$, we get the desired result. 
\end{proof}
\item \textbf{SoS Square Root}\\
Let $k$ be a power of 2. Then for every $C > 0$,
\begin{align*}
   \left\{ X^k \leqslant C \right\} \sststile{k}{X} \left\{X \leqslant C^{1/k} \right\}
\end{align*}
\begin{proof}
Following a similar approach as the proof of SoS cancellation, we have
\begin{align*}
    X^{k/2} = \frac{X^{k/2}}{C^{1/4}} \cdot C^{1/4} \leqslant \frac{1}{2} \cdot \left(\frac{X^k}{\sqrt{C}} + \sqrt{C} \right) \leqslant \sqrt{C}
\end{align*}
Now replacing $X^{k/2}$ by $X^{k/4}$ and $C^{1/4}$ by $C^{1/8}$, and using the above derivation that $X^{k/2} \leqslant \sqrt{C}$, we can derive in SoS that $X^{k/4} \leqslant C^{1/4}$. Repeating this procedure $\log_2 k$ many times will give us the result inductively. 
\end{proof}
\end{itemize}

\section{Useful Facts and Additional Discussion}\label{app:facts}

\subsection{Useful Inequalities}
\begin{fact}
Pinkser's Inequality \cite{pinsker1964information}\\
For any two distributions $P, Q$ on a measurable space, we have
\begin{align*}
    d_{TV}(P, Q) \leqslant \sqrt{\frac{1}{2} \cdot \kldiv{P}{Q}}
\end{align*}
\end{fact}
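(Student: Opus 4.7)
The plan is to prove Pinsker's inequality by reducing the general measure-theoretic case to a scalar (Bernoulli) inequality via the data-processing inequality for KL divergence, then dispatching the scalar case by elementary calculus.

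For the reduction, I would let $\mu$ be any common dominating measure (for instance $\mu = P + Q$) and write $p = dP/d\mu$, $q = dQ/d\mu$. Define the set $A = \{x : p(x) \geq q(x)\}$. A standard characterization of total variation gives $d_{TV}(P,Q) = P(A) - Q(A)$. Consider the Bernoulli pushforwards $P' = \mathrm{Ber}(P(A))$ and $Q' = \mathrm{Ber}(Q(A))$ under the coarsening $x \mapsto \mathbf{1}[x \in A]$. By the data-processing inequality (equivalently, the log-sum inequality applied to the partition $\{A, A^c\}$), $\kldiv{P}{Q} \geq \kldiv{P'}{Q'}$, while $d_{TV}(P', Q') = |P(A) - Q(A)| = d_{TV}(P, Q)$. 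So it is enough to establish the scalar inequality $2(r-s)^2 \leq r \log(r/s) + (1-r)\log((1-r)/(1-s))$ for $r, s \in [0,1]$.

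For the scalar case, fix $r \in (0,1)$ and define $g(s) := r \log(r/s) + (1-r)\log((1-r)/(1-s)) - 2(r-s)^2$. A direct differentiation yields
\begin{align*}
g'(s) = (s - r)\left[\frac{1}{s(1-s)} - 4\right].
\end{align*}
Since $s(1-s) \leq 1/4$ on $(0,1)$, the bracketed factor is non-negative, so $g'(s)$ has the same sign as $s - r$. Hence $g$ attains its minimum at $s = r$, where $g(r) = 0$, and therefore $g(s) \geq 0$ for all $s \in (0,1)$. Taking square roots of the resulting binary bound and chaining it through the reduction step gives Pinsker's inequality. Boundary cases ($s \in \{0,1\}$, or $P$ not absolutely continuous with respect to $Q$) are handled by continuity together with the conventions $0 \log 0 = 0$ and $r \log(r/0) = +\infty$ for $r > 0$: the KL divergence is then infinite and the inequality is trivial.

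The main obstacle is the measure-theoretic reduction: one has to argue carefully that coarsening to the binary partition preserves total variation (which follows from the supremum definition $d_{TV}(P,Q) = \sup_B |P(B) - Q(B)|$ and the optimality of $B = A$) while not increasing KL divergence (the data-processing direction). Once this reduction is justified, the calculus step is routine and the bound is immediate.
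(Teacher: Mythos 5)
The paper does not prove Pinsker's inequality; it is stated as a citation-only fact (Appendix~\ref{app:facts}, credited to Pinsker), so there is no in-paper argument to compare against. Your proof is correct and is one of the standard modern routes: coarsen to the partition $\{A, A^c\}$ with $A = \{p \geq q\}$, which preserves total variation (since that $A$ attains the supremum in $d_{TV}(P,Q) = \sup_B (P(B)-Q(B))$) while the log-sum inequality guarantees the KL divergence cannot increase; then verify the two-point inequality $2(r-s)^2 \leq r\log\frac{r}{s} + (1-r)\log\frac{1-r}{1-s}$ by the derivative computation $g'(s) = (s-r)\bigl[\tfrac{1}{s(1-s)} - 4\bigr]$, which indeed vanishes only at $s=r$ and has the sign of $s-r$ because $s(1-s) \le \tfrac14$. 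Your treatment of the boundary and non-absolutely-continuous cases (KL becomes $+\infty$, inequality vacuous) is also the right disposal. The only tiny caveat is that you fix $r \in (0,1)$ in the calculus step and sweep $r \in \{0,1\}$ into the ``boundary cases'' remark; those cases are in fact finite (not $+\infty$) when $s \in (0,1)$, so strictly speaking they need the continuity-in-$r$ argument you gesture at rather than the infinite-KL dismissal, but that is a one-line fix and does not affect the substance.
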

\begin{fact}
    Bretagnolle–Huber (BH) Inequality \cite{bretagnolle1978estimation}\\
For any two distributions $P, Q$ on a measurable space, we have
\begin{align*}
    d_{TV}(P, Q) \leqslant \sqrt{1 - \exp\left(- \kldiv{P}{Q} \right)}
\end{align*}
\end{fact}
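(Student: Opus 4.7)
The plan is to route through the Hellinger affinity $\rho(P,Q) := \int \sqrt{pq}\, d\mu$ (where $p, q$ are densities with respect to a common dominating measure $\mu$), and establish the two inequalities
\[
d_{TV}(P,Q)^2 \leqslant 1 - \rho(P,Q)^2 \quad\text{and}\quad \rho(P,Q)^2 \geqslant \exp(-\kldiv{P}{Q}),
\]
which can then be chained to obtain the desired bound $d_{TV}(P,Q) \leqslant \sqrt{1 - \exp(-\kldiv{P}{Q})}$.

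For the first inequality I would use Cauchy--Schwarz. Factoring $p - q = (\sqrt{p} - \sqrt{q})(\sqrt{p} + \sqrt{q})$ inside the integral $2\, d_{TV}(P,Q) = \int |p - q|\, d\mu$ and applying Cauchy--Schwarz produces a product of $\int (\sqrt{p} - \sqrt{q})^2\, d\mu = 2(1 - \rho)$ and $\int (\sqrt{p} + \sqrt{q})^2\, d\mu = 2(1 + \rho)$, so that $\bigl(2\, d_{TV}(P,Q)\bigr)^2 \leqslant 4(1 - \rho^2)$, giving $d_{TV}(P,Q) \leqslant \sqrt{1 - \rho(P,Q)^2}$.

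For the second inequality I would apply Jensen's inequality to the concave function $\log$. Writing $\rho(P,Q) = \E_{X \sim P}\!\left[\sqrt{q(X)/p(X)}\right]$ on the support of $p$, Jensen's inequality gives
\[
\log \rho(P,Q) \geqslant \tfrac{1}{2}\, \E_{X \sim P}\!\left[\log (q(X)/p(X))\right] = -\tfrac{1}{2}\, \kldiv{P}{Q},
\]
from which $\rho(P,Q)^2 \geqslant \exp(-\kldiv{P}{Q})$ follows by exponentiating and squaring.

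I do not anticipate a real obstacle since this is a classical argument. The only minor care is selecting a common dominating measure $\mu$ and handling the set $\{p = 0\}$, on which either $q$ also vanishes (in which case this set contributes nothing to the integrals in the Cauchy--Schwarz step or to the KL integral) or else $\kldiv{P}{Q} = \infty$ and the inequality becomes trivial.
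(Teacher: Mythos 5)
The paper does not prove this fact; it states it as a cited result (referencing \cite{bretagnolle1978estimation} and pointing to \cite{canonne2022short} for further discussion), so there is no in-paper proof to compare against. Your proposal is the standard and correct derivation via the Hellinger affinity: the chain
\[
d_{TV}(P,Q) \leqslant \sqrt{1-\rho(P,Q)^2}, \qquad \rho(P,Q)^2 \geqslant \exp\bigparen{-\kldiv{P}{Q}}
\]
with the first step from Cauchy--Schwarz applied to $|p-q| = |\sqrt p - \sqrt q|\,(\sqrt p + \sqrt q)$ and the second from Jensen applied to $\log$, is exactly the classical argument, and both steps are carried out correctly.

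One small inaccuracy worth fixing in your edge-case remark: the set that causes trouble for $\kldiv{P}{Q}$ is $\{q=0\} \cap \{p>0\}$, not $\{p=0\}$. The set $\{p=0\}$ has $P$-measure zero and is irrelevant both to the KL integral and to the representation $\rho(P,Q) = \E_{X\sim P}\bigbrac{\sqrt{q(X)/p(X)}}$ (on $\{p=0\}$ the integrand $\sqrt{pq}$ is zero regardless of $q$). The case that makes the argument trivial is $P\not\ll Q$, i.e.\ $\mu\bigparen{\{q=0, p>0\}}>0$, in which case $\kldiv{P}{Q}=\infty$ and the right-hand side equals $1$. Once that case is set aside, $q/p>0$ holds $P$-almost surely and Jensen applies cleanly; with this correction the proof is complete.
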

\begin{fact}
Tsybakov's version of the BH Inequality \cite{tsybakovnonparametric2009}\\
For any two distributions $P, Q$ on a measurable space, we have
\begin{align*}
    d_{TV}(P, Q) \leqslant 1 - \frac{1}{2} \cdot \exp\left(- \kldiv{P}{Q} \right)
\end{align*} 
\end{fact}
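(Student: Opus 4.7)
The plan is to route the inequality through the Bhattacharyya--Hellinger affinity
\[
\rho(P,Q) \coloneqq \int \sqrt{p\, q}\, d\mu,
\]
which serves as a natural bridge between $d_{TV}(P,Q)$ and $\kldiv{P}{Q}$: it is controlled from above by the overlap integral $\int \min(p,q)\,d\mu = 1 - d_{TV}(P,Q)$, and from below by $\exp(-\tfrac{1}{2}\kldiv{P}{Q})$. Here $p,q$ denote densities with respect to any dominating measure $\mu$, and I may assume $P \ll Q$ since otherwise $\kldiv{P}{Q} = \infty$ and the claim is vacuous.

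First I would use the standard identity $d_{TV}(P,Q) = 1 - \int \min(p,q)\, d\mu$, so the goal reduces to showing $\int \min(p,q)\, d\mu \geq \tfrac{1}{2}\exp(-\kldiv{P}{Q})$. Second, I would lower bound the overlap integral by $\tfrac{1}{2}\rho(P,Q)^2$ using Cauchy--Schwarz: writing $\sqrt{pq} = \sqrt{\min(p,q)}\cdot \sqrt{\max(p,q)}$ and applying Cauchy--Schwarz yields
\[
\rho(P,Q)^2 \leq \left(\int \min(p,q)\, d\mu\right)\left(\int \max(p,q)\, d\mu\right),
\]
and since $\min(p,q) + \max(p,q) = p+q$ we get $\int \max(p,q)\, d\mu \leq 2$, so $\int \min(p,q)\, d\mu \geq \tfrac{1}{2}\rho(P,Q)^2$.

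Third, I would lower bound $\rho$ in terms of KL divergence by Jensen's inequality. Since $\log$ is concave,
\[
\log \rho(P,Q) = \log \int p \cdot \sqrt{q/p}\, d\mu \geq \int p \cdot \log\sqrt{q/p}\, d\mu = -\tfrac{1}{2}\kldiv{P}{Q},
\]
which gives $\rho(P,Q)^2 \geq \exp(-\kldiv{P}{Q})$. Chaining the three steps produces $d_{TV}(P,Q) \leq 1 - \tfrac{1}{2}\rho(P,Q)^2 \leq 1 - \tfrac{1}{2}\exp(-\kldiv{P}{Q})$, which is the desired bound.

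The main obstacle is conceptual rather than technical: identifying the Bhattacharyya affinity as the right intermediate quantity. The standard Bretagnolle--Huber bound $d_{TV} \leq \sqrt{1 - \exp(-\kldiv{P}{Q})}$ is weaker when $\kldiv{P}{Q}$ is large, and the sharper Tsybakov form one wants precisely exploits that the Cauchy--Schwarz step above loses only a factor of $2$ in the overlap (rather than squaring it); recognizing that the square comes from the affinity and not from $d_{TV}$ itself is the key insight. Once this is in place, each of the three steps is only a couple of lines.
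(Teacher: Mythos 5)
Your proof is correct, and it is essentially the standard derivation found in Tsybakov's book (Lemma 2.6 there); the paper itself only cites the inequality as a known fact and does not supply a proof, so there is nothing to compare against in the source. The three-step chain — the overlap identity $d_{TV}(P,Q) = 1 - \int \min(p,q)\,d\mu$, the Cauchy--Schwarz bound $\int \min(p,q)\,d\mu \ge \tfrac{1}{2}\rho(P,Q)^2$ obtained via $\min \cdot \max = pq$ and $\int \max(p,q)\,d\mu \le 2$, and the Jensen bound $\rho(P,Q) \ge \exp\bigl(-\tfrac{1}{2}\kldiv{P}{Q}\bigr)$ — is exactly right, and you correctly dispose of the case $P \not\ll Q$ by noting the claim is vacuous when $\kldiv{P}{Q} = \infty$. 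Your closing remark is also well taken: the factor-of-two loss from Cauchy--Schwarz on the affinity is what makes this form tighter than the classical Bretagnolle--Huber bound $d_{TV} \le \sqrt{1 - \exp(-\kldiv{P}{Q})}$ for large divergence, which is precisely the regime the paper needs for the Gaussian lower bound as $\varepsilon \to \tfrac{1}{2}$.
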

Pinkser's inequality is vacuous for $\kldiv{P}{Q} \geqslant  2 $. Whenever $\kldiv{P}{Q} \geqslant 2$, BH inequality and Tsybakov's version of the BH inequality \cite{tsybakovnonparametric2009} are much more sharper and also have the right asymptotic behavior (i.e., when $\kldiv{P}{Q} \to \infty$). We refer the reader to \cite{canonne2022short} for a detailed discussion.\\

\subsection{Statistical Distance and Overlap}\label{app:factsoltv}
For any two distributions $P, Q$ on $\R^d$, we have
\begin{align*}
     d_{TV}(P, Q) &= \sup_{A \subseteq \mathbb{R}^d} \Pr_{X \sim P}(X \in A) - \Pr_{X \sim Q}(X \in A) = P(S^+) - Q(S^+) \\&= \sup_{A \subseteq \mathbb{R}^d} \Pr_{X \sim Q}(X \in A) - \Pr_{X \sim P}(X \in A) = Q(S^-) - P(S^-)
\end{align*}
where $S^+ = \{x \in \mathbb{R}^d: p(x) > q(x)\}$ and $S^- = \mathbb{R}^d - S^+.$ 

Consider $r(x) = \min(p(x), q(x))$. Then we have overlap $\delta$
\begin{align*}
    \delta \coloneqq  \int_x r(x) dx &= \int_{x: p(x) < q(x)}p(x)dx + \int_{x: p(x) \geqslant q(x)}q(x)dx \\ &= 1 - \int_{x : p(x) \geqslant q(x)}(p(x) - q(x))dx = 1 - d_{TV}(P, Q)
\end{align*}
The consequence is that an overlap of $\delta$ between two densities is the same as a separation in statistical distance of $1 - \delta$, since overlap is precisely the intersection region between the two densities.\\

\subsection{Modeling Universal Constraints}
In the polynomial system $\cA$ (System \ref{eqn:finalsystem}), we utilized universal quantifiers within the system. In particular we used the following in our system 
\begin{align*}
    \forall v \in \R^d : \sn \langle x_i - \mu, v \rangle^k \leqslant \vnorm{v}^k \cdot k^{k/2}
\end{align*}
We note here that we assumed that the distributions we considered have \emph{certifiably bounded moments}. That is to say we have an SoS proof of the above inequality as well in variables $v$: 
\begin{align*}
    \sststile{O(k)}{v} \vnorm{v}^k \cdot k^{k/2} - \sn \langle x_i - \mu, v \rangle^k \geqslant 0
\end{align*}
We note that this is equivalent to searching for a positive semi-definite matrix $M \in \R^{(d+1)^{k/2} \times (d+1)^{k/2}}$ (in variables $\pxi$) such that 
\begin{align*}
    \vnorm{v}^k \cdot k^{k/2} - \sn \langle x_i - \mu, v \rangle^k = \left\langle (1, v)^{\otimes k/2}, M (1, v)^{\otimes k/2} \right\rangle
\end{align*}
where $(1,v)$ is the vector $v$ with 1 appended to it at the beginning. The above equality allows us to eliminate the variables $v$, similar to the bounded covariance case, and simply asks to search for a PSD matrix $M$ which only depends on variables $\pxi$. We can therefore add this positive semi-definite constraint to the program when computing the pseudo-expectation satisfying $\cA$. We refer the reader to \cite{fleming2019semialgebraic, steurer2021sos} for more information on transforming universal quantifiers into existential quantifiers in SoS.

\subsection{High Probability Statements}\label{app:hp}
Let $\uip \overset{\mathrm{iid}}{\sim} \D \subset \R^d$ and let $\D$ have mean $\mu^*$ and covariance $\Sigma$. We then have by cyclicity and linearity of Trace that
\begin{align*}
    \E \vnorm{\sn x_i^* - \mu^*}^2 &= \E \left[\Tr\left(\sn x_i^* - \mu^* \right) \left( \snj x_j^* - \mu^* \right)^T \right]\\
    &= \Tr \left[ \E \left[\frac{1}{n^2}\sum_{i,j}x_i^* {x_j^*}^T - \sn x_i^* {\mu^*}^T - \mu^* \left(\snj x_j^*\right)^T + \mu^* {\mu^*}^T \right] \right]\\
    &= \Tr\left[\frac{\E[xx^T]}{n} + \frac{n(n-1)}{n^2}\mu^* {\mu^*}^T - \mu^* {\mu^*}^T \right]\\
    &= \frac{1}{n} \Tr\left[\E[xx^T] - \mu^* {\mu^*}^T\right]= \frac{\Tr(\Sigma)}{n}
\end{align*}
For the different assumptions that we consider, we have the above quantity bounded by $O\left(\frac{d}{n}\right)$. We can thus use standard concentration inequalities to bound the closeness of sample mean to the true mean with probability at least $1 - \tau$. For our theorem statements we set $\tau = 0.01$. We refer the reader to \cite{zhu2022generalized} (Table 1 and corresponding proofs in the appendices therein) for precise statements in this regard.
We do not concern ourselves with the exact dependence of the sampling error on the confidence parameter $\tau$ -- there is a line of research that concerns itself with optimizing the dependence on $\tau$ and that is not the primary focus of this paper. We refer the reader to \cite{diakonikolas2023algorithmic} for more information in this regard.

\subsubsection{Bounded Covariance Case} 
\begin{lemma}[\cite{diakonikolas2017being, li2019lectures}]\label{lemma:boundedcovconc}
Let $\D$ be a distribution with mean $\mu$ and covariance $\Sigma$. Suppose $\Sigma \preccurlyeq I_d$. Let $\pxi \overset{\mathrm{iid}}{\sim} \D$. Let $\alpha \in [0, \frac{1}{2})$ and let $\tau > 0$. Then there exists universal constants $c, c'$ such that with probability $1 - \tau - \exp(-\Omega(\alpha \cdot n))$ such that there is a subset $S \subset [n]$, such that $|S| \geqslant (1 - \alpha) \cdot n$ and
\begin{align*}
    \vnorm{\hat{\mu} - \mu} \leqslant \frac{1}{1 - \alpha} \left(\sqrt{\frac{d}{n \tau}} + \sqrt{c \alpha} \right)
\end{align*}
\begin{align*}
    \vnorm{ \frac{1}{|S|} \sum_{i \in S}\left(x_i - \hat{\mu}\right)\left(x_i - \hat{\mu}\right)^T} \leqslant \frac{1}{1 - \alpha} \cdot \frac{d \log d + \log 2/ \tau}{nc'\alpha}
\end{align*}
where $\hat{\mu} = \frac{1}{|S|}\sum_{i \in S} x_i $
\end{lemma}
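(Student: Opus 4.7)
The plan is to view this as the standard ``good subset'' resilience lemma for a bounded covariance distribution, proved by combining Chebyshev's inequality for the global empirical mean, a Cauchy--Schwarz trimming argument for the mean shift, and matrix Bernstein on truncated samples for the covariance control. A single subset $S$, obtained by discarding the samples that lie too far from $\bar{\mu}=\sn x_i$, will simultaneously satisfy both conclusions.

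\textbf{Mean bound.} I would first control the global empirical mean: by the variance computation in Section~\ref{app:hp}, $\E\Snorm{\bar{\mu}-\mu}=\Tr(\Sigma)/n\leqslant d/n$, and Chebyshev's inequality yields $\Norm{\bar{\mu}-\mu}\leqslant \sqrt{d/(n\tau)}$ with probability at least $1-\tau$. The mean bound will then hold for \emph{any} $S\sse[n]$ with $|S|\geqslant(1-\alpha)n$ by the following identity and Cauchy--Schwarz step: for any unit vector $v$,
\begin{align*}
\Iprod{v,\hat{\mu}-\bar{\mu}}= -\tfrac{1}{|S|}\sum\nolimits_{i\notin S}\Iprod{v,x_i-\bar{\mu}}, \quad \Abs{\Iprod{v,\hat{\mu}-\bar{\mu}}}\leqslant \tfrac{\sqrt{\alpha n}}{|S|}\Bigparen{\sum\nolimits_{i=1}^n\Iprod{v,x_i-\bar{\mu}}^2}^{1/2}.
\end{align*}
The right-hand side is at most $\tfrac{1}{1-\alpha}\sqrt{\alpha\,\Normop{\hat{\Sigma}_{[n]}}}$; combined with the empirical covariance bound (established next) this gives $\Norm{\hat{\mu}-\bar{\mu}}\leqslant O(\sqrt{\alpha})/(1-\alpha)$, and the triangle inequality yields the advertised mean bound.

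\textbf{Covariance bound.} Set $r=\Theta(\sqrt{d/\alpha})$ and $S=\set{i:\Norm{x_i-\bar{\mu}}\leqslant r}$. Markov's inequality gives $\Pr(\Norm{X-\mu}>r)\leqslant d/r^2\leqslant \alpha/2$, and a Chernoff bound then produces $\card{[n]\setminus S}\leqslant \alpha n$ with probability $1-\exp(-\Omega(\alpha n))$ (after using the already-established $\Norm{\bar{\mu}-\mu}\ll r$ to swap the truncation center from $\mu$ to $\bar{\mu}$). Next, apply matrix Bernstein to the truncated outer products $Y_i=(x_i-\mu)(x_i-\mu)^T\one\{\Norm{x_i-\mu}\leqslant r\}$, which satisfy $\Normop{Y_i}\leqslant r^2$ and $\E[Y_i]\sle I_d$:
\begin{align*}
\Normop{\tfrac{1}{n}\textstyle\sum_i Y_i - \E[Y_i]} \lesssim \sqrt{\tfrac{\log d +\log(1/\tau)}{n}} + \tfrac{r^2(\log d +\log(1/\tau))}{n}
\end{align*}
with probability at least $1-\tau$. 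Substituting $r^2=d/\alpha$ matches the advertised $\tfrac{d\log d+\log(2/\tau)}{n\alpha}$ scaling. The passage from $\mu$-centered to $\hat{\mu}$-centered sums uses $\sum_{i\in S}(x_i-\hat{\mu})(x_i-\hat{\mu})^T=\sum_{i\in S}(x_i-\mu)(x_i-\mu)^T - |S|(\hat{\mu}-\mu)(\hat{\mu}-\mu)^T$, with the rank-one correction absorbed via the mean bound; the prefactor $\tfrac{1}{1-\alpha}$ arises from switching $\tfrac{1}{n}$-averaging to $\tfrac{1}{|S|}$-averaging.

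\textbf{Main obstacle.} The central technical difficulty is the simultaneous balancing of the truncation radius $r$: it must be small enough that Markov's inequality yields $|S|\geqslant(1-\alpha)n$, yet large enough that the matrix Bernstein tail term $r^2\log d/n$ does not dominate. The choice $r\asymp\sqrt{d/\alpha}$ is the essentially unique balance point and is what ultimately forces the sample complexity $n=\Omega(d\log d/\alpha)$ visible in both conclusions. A secondary subtlety is the apparent circular dependence between the mean bound (which invokes the full-set empirical covariance) and the covariance bound (whose $\hat{\mu}$-to-$\mu$ transition uses the mean bound); this is broken by first establishing the $\mu$-centered truncated covariance estimate via matrix Bernstein, then deducing the mean bound, and finally converting the intermediate covariance estimate to the $\hat{\mu}$-centered form.
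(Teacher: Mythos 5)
The paper does not prove this lemma -- it is cited to \cite{diakonikolas2017being, li2019lectures} and used only as a black box -- so there is no internal proof to compare against. Your overall route (truncate at radius $r\asymp\sqrt{d/\alpha}$, Markov plus Chernoff for $|S|\geqslant(1-\alpha)n$, matrix Bernstein on truncated outer products, a decoupling step for the mean) is the standard one for this type of stability lemma, and the covariance half is essentially on the right track.

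The mean-bound argument, however, has a genuine gap. After your Cauchy--Schwarz step you are left with a factor of $\sqrt{\|\hat{\Sigma}_{[n]}\|_{\mathrm{op}}}$, the spectral norm of the \emph{untruncated} empirical covariance of all $n$ samples. Under the hypothesis $\Sigma\preccurlyeq I_d$ alone this quantity is not controlled: a bounded-covariance distribution may have infinite fourth moments, in which case $(x_i-\mu)(x_i-\mu)^\top$ has unbounded operator norm and no Bernstein-type inequality applies to it. The matrix Bernstein bound you prove controls only the truncated sum $\tfrac{1}{n}\sum Y_i$, and the indices $i\notin S$ appearing in your identity $\hat{\mu}-\bar{\mu}=-\tfrac{1}{|S|}\sum_{i\notin S}(x_i-\bar{\mu})$ are precisely those that escape the truncation -- so neither quantity you control bounds that sum, and the circularity you flagged is not actually broken. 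The standard fix replaces the Cauchy--Schwarz step entirely: analyze the truncated vector mean $\tfrac{1}{n}\sum_i(x_i-\mu)\one\{\|x_i-\mu\|\leqslant r\}$ directly via bias plus concentration. The bias is $\big\|\E[(X-\mu)\one\{\|X-\mu\|>r\}]\big\| = \sup_{\|v\|=1}\E\big[\Iprod{v,X-\mu}\one\{\|X-\mu\|>r\}\big] \leqslant \sup_v\sqrt{\E\Iprod{v,X-\mu}^2}\cdot\sqrt{\Pr(\|X-\mu\|>r)}\leqslant\sqrt{\alpha}$ (crucially using $v^\top\Sigma v\leqslant 1$, not $\Tr\Sigma\leqslant d$), and the sampling fluctuation of the bounded truncated summands is $O(\sqrt{d/(n\tau)})$; together these give the stated $\sqrt{c\alpha}+\sqrt{d/(n\tau)}$.

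One further observation in your favor: your covariance argument actually yields $\|\hat{\Sigma}_S\|\leqslant O(1)+O\big(\tfrac{d\log d+\log(1/\tau)}{n\alpha}\big)$, with the $O(1)$ coming from $\E[Y_i]\preccurlyeq I_d$, and that additive $O(1)$ cannot be removed (for $\Sigma=I_d$ and $\alpha$ a fixed constant, any subset of $(1-\alpha)n$ samples has empirical covariance of spectral norm $\Theta(1)$ as $n\to\infty$). The lemma as printed in the paper appears to be missing this term; the way the paper subsequently uses it -- ``the largest eigenvalue of the empirical covariance over the uncorrupted samples is bounded by a constant'' -- is consistent with your corrected reading rather than with the literal inequality.
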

By appropriately choosing $c, c', \alpha, \tau$, with probability 0.99, we can
\begin{itemize}
    \item Bound the error of the sample mean of points in $S$ in estimating the true mean by $O\left(\sqrt{\frac{d}{n}}\right)$, and
    \item Ensure that the sample covariance over the set $S$ has largest eigenvalue at most $O\left({\frac{d \log d}{n}}\right)$
\end{itemize} 
whenever $d$ is sufficiently large. In particular, in our SoS proofs (especially the proof of Theorem \ref{thm:boundedcov}), we use $n = \Omega(d \log d)$ and assume that the largest eigenvalue of the empirical covariance over the uncorrupted samples is bounded by a constant.

\subsection{Sample Complexity Dependence on Corruption Rate}
It is common practice in robust mean estimation papers to report the sample complexity dependence as a polynomial in $d$ and  $\varepsilon^{-1}$. The polynomial in $\varepsilon^{-1}$ is taken in such a way that the sampling error and robust error are the same up to constant factors \footnote{The robust error is the error term that depends only on $\varepsilon$. The other term is sampling error.} This lends itself to the interpretation that in order to achieve vanishing estimation error, the number of samples required tends to infinity (since $\varepsilon^{-1} \to \infty$ as $\varepsilon \to 0$).
Throughout this paper we instead explicitly state both the robust and the sampling error terms and state sample complexity to reflect the number of samples required for our distributional assumptions to hold to the uniform distribution  over the uncorrupted samples.

\subsection{On taking moments that are a power of 2} We remark here that we can relax certain conditions that we make in our theorem statements. In particular, we have for indeterminate $v$ and $C > 0$ the following proof in SoS.
\begin{align*}
    \| v \|^2 = \frac{1}{C^{k-2}}.\|v \|^2 \cdot C^{k-2} = \frac{1}{C^{k-2}}.\|v \|^2 \cdot \left(C^2\right)^{(k-2)/2} \leqslant \frac{2}{k \cdot C^{k-2}}\left( \Vert v \Vert^k + \left( \frac{k}{2} - 1\right)\cdot C^k \right)
\end{align*}
where the inequality is due to SoS AM-GM with $t = \frac{k}{2}$. Picking $v = \del$, we get
\begin{align*}
    \err^2 \leqslant \frac{2}{k} \cdot \left( \frac{\err^k }{C^{k-2}} + \frac{k-2}{k} \cdot C^2\right)
\end{align*}
Finally, picking $C^k = k^{k/2}\cdot 2^k \cdot \delta^{-1}$ and using the bound on $\err^k$ in our SoS proof, we get the desired result for any $k$ that is a multiple of 4.

\subsection{Subgaussian Distributions}
Recently, \cite{diakonikolas2024sos} showed that all subgaussian distributions are in fact certifiably subgaussian. As a consequence, we get the optimal $O\left(\sqrt{\log \frac{1}{1-2\varepsilon}}\right)$ error even for subgaussian distributions in quasi-polynomial time.
\subsection{Comparison between Lemma \ref{lemma:ineff} and Theorem \ref{theorem:general}}\label{app:factscompare}

We have the following from Lemma \ref{lemma:ineff}:
\begin{align*}
    \cA \sststile{O(k)}{w,x,v} \left(\sn w_iw_i^* \right)^{k} \cdot \err^{2k} \leqslant 2^k \cdot k^{k/2} \cdot \left(\sn w_iw_i^* \right)^{k-1} \cdot \err^k
\end{align*}
The above statement effectively certifies the non-negativity of the following polynomial in SoS:
\begin{align*}
   \underbrace{\left(\sn w_iw_i^* \right)^{k-1}}_{(A)} \cdot \underbrace{\left( 2^k \cdot k^{k/2} \cdot \err^k - \left(\sn w_iw_i^* \right) \cdot \err^{2k} \right)}_{(B)} \geqslant 0
\end{align*}
Now, if we observe carefully, $(B)$ is precisely the polynomial whose non-negativity we certify in SoS in Theorem \ref{theorem:general}! By Lemma \ref{lemma:thm}, we already certify the non-negativity of $(A)$ above. 

In summary, the SoS proof of Lemma \ref{lemma:ineff} has a factorization of two terms, and each individual term has an SoS certificate of non-negativity. We remark here that it is generally \emph{not} the case that a SoS proof also has factors that are SoS themselves -- making the above polynomials quite interesting in their own right.

\section{Theorems and Proofs}\label{app:lemmaproof}
\subsection{Theorems demonstrating optimal breakdown point}

\begingroup 
\allowdisplaybreaks
\begin{theorem}[Optimal Breakdown for Bounded Covariance]\label{thm:optbdbcov}
Let $\uip \subset \R^d$ such that $\mus = \sn \xs_i$  and  $\Ss = \sn (\xs_i - \mus)(\xs_i - \mus)^T \preccurlyeq I_d$. 
Let $\{z_i\}_{i=1}^n$ be an $\varepsilon$-corruption of $\uip$ where $\varepsilon \in [0, \frac{1}{2})$. Let $\delta = 1 - 2\varepsilon$.  
Then
  \begin{align*}
      \cA_{w,x} \sststile{4}{w, x} \vnorm{\mu - \mu^*}^2 \leqslant O\left(   \frac{\varepsilon}{(1 - 2\varepsilon)^2}\right) = O\left( \frac{\varepsilon}{\delta^2}\right)
  \end{align*}
\end{theorem}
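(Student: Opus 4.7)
The plan is to work at SoS degree $4$ starting from the degree-two polynomial $\err^2$, in contrast to the proof of Theorem~\ref{thm:boundedcov} which starts from $\err^4$. Using Lemma~\ref{lemma:thm} together with $\mu = \sn x_i$ and $w_iw_i^*(x_i - x_i^*) = 0$, the opening identity is
\begin{equation*}
\err^2 \;=\; \sn (1 - w_iw_i^*)\,\langle x_i - x_i^*,\, \mu - \mu^*\rangle.
\end{equation*}
Splitting $x_i - x_i^* = (x_i - \mu) + (\mu - \mu^*) + (\mu^* - x_i^*)$ and collecting the diagonal $\err^2 \cdot \sn(1 - w_iw_i^*)$ on the left yields the key reformulation
\begin{equation*}
\err^2 \cdot \sn w_iw_i^* \;=\; \sn (1 - w_iw_i^*)\,(u_i + v_i),
\end{equation*}
where $u_i \coloneqq \langle x_i - \mu, \mu - \mu^*\rangle$ and $v_i \coloneqq \langle \mu^* - x_i^*, \mu - \mu^*\rangle$.

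The next step is to control the right-hand side with the weighted scalar AM--GM inequality $2ab \leqslant \alpha a^2 + b^2/\alpha$, applied pointwise with $a = 1 - w_iw_i^*$ and $b = u_i + v_i$ for a free parameter $\alpha > 0$. Summing, using $(1 - w_iw_i^*)^2 = 1 - w_iw_i^*$ and $\sn(1 - w_iw_i^*) \leqslant 2\varepsilon$ from Lemma~\ref{lemma:thm}, and combining with the bounded-covariance axiom $\Sigma \preccurlyeq I_d$ (which gives $\sn u_i^2 \leqslant \err^2$) together with its analogue for the uncorrupted empirical covariance from Lemma~\ref{lemma:boundedcovconc} (which gives $\sn v_i^2 \leqslant \err^2$), I would derive in SoS of degree at most $4$ the inequality
\begin{equation*}
\err^2 \cdot \sn w_iw_i^* \;\leqslant\; \alpha \varepsilon + \frac{2\,\err^2}{\alpha}.
\end{equation*}
Lemma~\ref{lemma:thm} also supplies $\sn w_iw_i^* \geqslant 1 - 2\varepsilon$, so after rearranging and choosing $\alpha = 4/(1 - 2\varepsilon)$ to balance the two contributions, the coefficient on the left becomes $(1 - 2\varepsilon)/2$ and the conclusion $\err^2 \leqslant 8\varepsilon/(1 - 2\varepsilon)^2$ follows. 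Rounding via the pseudo-expectation as in Section~\ref{sec:rounding} then delivers the theorem.

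The principal obstacle is holding the proof at degree $4$ while still recovering the full breakdown point of $\frac{1}{2}$. The naive move of squaring the identity $\err^2 \cdot \sn w_iw_i^* = \sn (1 - w_iw_i^*)(u_i + v_i)$ and invoking SoS Cauchy--Schwarz would reproduce the stronger rate $O(\varepsilon/(1 - 2\varepsilon))$ from Theorem~\ref{thm:boundedcov}, but squaring the $\sn w_iw_i^*$ factor forces $(u_i + v_i)^2$ into the proof and inflates the witness degree beyond $4$. The weighted AM--GM trick sidesteps this by absorbing one factor of $(1 - 2\varepsilon)^{-1}$ into the choice of $\alpha$ rather than into the squaring, which is exactly what converts the $(1 - 2\varepsilon)^{-1}$ dependence into the weaker $(1 - 2\varepsilon)^{-2}$ stated in the theorem.
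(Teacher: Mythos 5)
Your derivation is correct and delivers the same bound $\err^2 \leq 8\varepsilon/(1-2\varepsilon)^2$. It is, however, a close cousin of the paper's own proof rather than a genuinely different route. Both arguments pivot on the identical polynomial identity
\begin{equation*}
(1-2\varepsilon)\,\err^2 \;\leq\; \err^2\cdot\sn w_iw_i^* \;=\; \sn (1-w_iw_i^*)\,\bigl(\langle x_i - \mu,\del\rangle - \langle x_i^* - \mu^*,\del\rangle\bigr).
\end{equation*}
The paper then squares this, separates the right-hand side via SoS triangle inequality, applies SoS Cauchy--Schwarz to each of the two resulting terms, arrives at $(1-2\varepsilon)^2\err^4 \leq 8\varepsilon\,\err^2$, and finishes with SoS cancellation. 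Your pointwise weighted AM--GM with a free parameter $\alpha$ is Young's inequality, which is the very same inequality Cauchy--Schwarz and SoS cancellation jointly encode; the choice $\alpha = 4/(1-2\varepsilon)$ precisely mirrors what the cancellation lemma does at the final step. The net effect is that you never explicitly form $\err^4$ and never invoke the cancellation lemma, which is a slightly cleaner bookkeeping but not a different idea; the constants come out identical.

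One substantive misstatement appears in your closing paragraph. You claim that squaring the identity and invoking SoS Cauchy--Schwarz ``would reproduce the stronger rate $O(\varepsilon/(1-2\varepsilon))$ from Theorem~\ref{thm:boundedcov}'' and that the AM--GM route is needed to avoid inflating the degree. Neither is accurate: squaring the identity and applying Cauchy--Schwarz is exactly what the paper's proof of Theorem~\ref{thm:optbdbcov} does, and it lands at $O(\varepsilon/(1-2\varepsilon)^2)$ at the same claimed SoS degree~$4$, the same as your argument. The sharper rate in Theorem~\ref{thm:boundedcov} does not come from squaring a degree-two inequality; it comes from starting directly with $\err^4 \leq 2\varepsilon\cdot\sn\langle x_i - x_i^*,\del\rangle^2$, fully expanding $\langle x_i - x_i^*,\del\rangle^2$ as $\bigl(\langle x_i-\mu,\del\rangle - \langle x_i^*-\mu^*,\del\rangle + \err^2\bigr)^2$, and exploiting the fact that two of the cross terms vanish identically because $\sn(x_i-\mu)=0$ and $\sn(x_i^*-\mu^*)=0$. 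That cancellation yields only a single factor of $(1-2\varepsilon)$ in the final rearrangement, which squaring the degree-two identity cannot recover.
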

We remark that Theorem \ref{thm:optbdbcov} implies a breakdown point of $\frac{1}{2}$ for the estimator $\hat{\mu} = \pE[\mu]$ for a constant degree pseudo-expectation $\pE$ satisfying $\cA_{w,x}$ (See Section \ref{sec:rounding}). As $\varepsilon \to \frac{1}{2}$, the error \emph{diverges} and is $O\left(\delta^{-1}\right)$ (after taking square root). 
When $\varepsilon$ is sufficiently small, the error is optimal up to constant factors while the same is not true when $\varepsilon \to \frac{1}{2}$ (respectively $\delta \to 0$).

We now prove Theorem \ref{thm:optbdbcov}. 
\begin{proof}
\begin{align*}
       \vnorm{\mu - \mu^*}^2 &= \langle \mu - \mu^*, \mu - \mu^* \rangle = \sn \langle x_i - x_i^*, \mu - \mu^* \rangle 
       = \sn (1 - w_iw_i^*) \langle x_i - x_i^*, \mu - \mu^* \rangle  \\
       &= \sn(1-w_iw_i^*) \langle x_i - \mu - x_i^* + \mu^* + \mu - \mu^*, \mu - \mu^* \rangle \\
       &= \sn(1-w_iw_i^*)\langle x_i - \mu, \mu - \mu^* \rangle - \sn (1 - w_iw_i^*) \langle x_i^* - \mu^*, \mu - \mu^* \rangle\\
       &+ \sn(1-w_iw_i^*) \Vert \mu - \mu^*\Vert^2 \\
       & \leqslant \sn(1-w_iw_i^*)\langle x_i - \mu, \mu - \mu^* \rangle - \sn (1 - w_iw_i^*) \langle x_i^* - \mu^*, \mu - \mu^* \rangle\\
       &+ 2\varepsilon \vnorm{\mu - \mu^*}^2 
\end{align*}
Above, we used Lemma \ref{lemma:thm} for the third equality and Lemma \ref{lemma:thm} for the inequality.
 Rearranging and squaring we have
 \begin{align*}
     (1-2\varepsilon)^2 \Vert \mu - \mu^* \Vert^4 &\leqslant \left(\sn(1-w_iw_i^*)\langle x_i - \mu, \mu - \mu^* \rangle - \sn (1 - w_iw_i^*) \langle x_i^* - \mu^*, \mu - \mu^* \rangle \right)^2
 \end{align*}
Notice that we have already optimized the proof for optimal breakdown point above. This is a general approach that we use in both Theorems \ref{thm:optbdmom} and \ref{thm:sparseoptbd}. The right hand side can be bounded above by
 \begin{align*}
     &\left(\sn(1-w_iw_i^*)\langle x_i - \mu, \mu - \mu^* \rangle - \sn (1 - w_iw_i^*) \langle x_i^* - \mu^*, \mu - \mu^* \rangle \right)^2 \\ 
     &\leqslant 2 \left(\sn(1-w_iw_i^*)\langle x_i - \mu, \mu - \mu^* \rangle\right)^2 + 2 \left( \sn (1 - w_iw_i^*) \langle x_i^* - \mu^*, \mu - \mu^* \rangle \right)^2
 \end{align*}
 by SoS triangle inequality. 
 We now focus on the first term on the right hand side of the above equation. By SoS Cauchy-Schwarz, we have 
\begin{align*}
    \left(\sn(1-w_iw_i^*)\langle x_i - \mu, \mu - \mu^* \rangle\right)^2 &\leqslant \left(\sn (1 - w_iw_i^*)^2\right)\left(\sn \langle x_i - \mu, \mu - \mu^* \rangle^2 \right) \\
    &= \left(\sn (1 - w_iw_i^*)\right)\left(\sn \langle x_i - \mu, \mu - \mu^* \rangle^2 \right)\\
    &\leqslant 2\varepsilon \cdot \vnorm{\mu - \mu^*}^2 
\end{align*}
The equality above is due to Lemma \ref{lemma:thm} and the final inequality follows from bounded covariance constraint. Similarly, we have
\begin{align*}
    \left( \sn (1 - w_iw_i^*) \langle x_i^* - \mu^*, \mu - \mu^* \rangle \right)^2 \leqslant 2\varepsilon \cdot \vnorm{\mu - \mu^*}^2 
\end{align*}
Putting all the pieces together, we have 
\begin{align*}
    (1 - 2\varepsilon)^2 \err^4 \leqslant 8 \varepsilon \cdot \err^2
\end{align*}
By SoS Cancellation we get that 
\begin{align*}
    \err^2 \leqslant \frac{8 \varepsilon}{(1 - 2 \varepsilon)^2}
\end{align*}
which proves the Theorem. 
\end{proof}
\endgroup

\textbf{Remark}: Prior works based on SoS focus on the regime when $\varepsilon$ is sufficiently small and bounded away from $\frac{1}{2}$ and their proofs are inherently not optimized when $\varepsilon$ is large and close to $\frac{1}{2}$. In the above proof, we already optimized it to demonstrate optimal breakdown point. To this end, it was sufficient to get a factor of $2\varepsilon$ on the right hand side in the steps before rearranging and squaring. We similarly replicate this proof for the other cases with different distributional assumptions such as certifiably bounded higher moments and for the problem of sparse mean estimation in Theorems \ref{thm:optbdmom} and \ref{thm:sparseoptbd}.

\begin{theorem}[Optimal Breakdown Point for Bounded Moments]\label{thm:optbdmom}
      Under the conditions of Theorem \ref{theorem:general}, and for $\delta \coloneqq 1 - 2\varepsilon$ we have
    \begin{align*}
        \Vert \hat{\mu} - \mus \Vert \leqslant O\left( \frac{{\varepsilon}^{1-1/k} \cdot \sqrt{k}}{\delta}\right)
    \end{align*}
    with probability 0.99.
\end{theorem}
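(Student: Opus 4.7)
The plan is to mirror the structure of the proof of Theorem \ref{thm:optbdbcov}, but to replace the SoS Cauchy--Schwarz step by SoS \Hoelder so that we can invoke the $k$-th moment certificate rather than just the bounded covariance one. This is the natural analogue for distributions with certifiably bounded $k$-th moments, and it should yield the claimed error $O(\varepsilon^{1-1/k}\sqrt{k}/\delta)$.

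First, I would begin exactly as in the bounded covariance case: use Lemma \ref{lemma:thm} to write
\begin{align*}
\err^2 \;=\; \sn \langle x_i - x_i^*, \del\rangle \;=\; \sn (1-w_iw_i^*)\langle x_i - x_i^*, \del\rangle,
\end{align*}
expand $x_i - x_i^* = (x_i - \mu) - (x_i^* - \mu^*) + (\mu - \mu^*)$, and use $\sn(1-w_iw_i^*) \leq 2\varepsilon$ to absorb the $\err^2$ contribution into the left-hand side. This gives $\delta \cdot \err^2 \leq \sn(1 - w_iw_i^*)\langle x_i - \mu, \del\rangle - \sn(1-w_iw_i^*)\langle x_i^* - \mu^*, \del\rangle$.

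Second, instead of squaring both sides as in Theorem~\ref{thm:optbdbcov}, I would raise both sides to the $k$-th power (so that we can eventually invoke the $k$-th moment certificate). Applying SoS triangle inequality gives
\begin{align*}
\delta^k \cdot \err^{2k} \;\leq\; 2^{k-1}\Bigl[\bigl(\sn (1-w_iw_i^*)\langle x_i - \mu, \del\rangle\bigr)^k + \bigl(\sn(1-w_iw_i^*)\langle x_i^* - \mu^*, \del\rangle\bigr)^k\Bigr].
\end{align*}
Now on each of the two terms I would use SoS \Hoelder together with the booleanity $(1-w_iw_i^*)^2 = (1-w_iw_i^*)$ from Lemma \ref{lemma:thm}, pulling out $\bigl(\sn (1-w_iw_i^*)\bigr)^{k-1} \leq (2\varepsilon)^{k-1}$ and leaving $\sn \langle x_i - \mu, \del\rangle^k$ (respectively with $x_i^*, \mu^*$) on the inside. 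The certifiable $k$-th moment bound for both the program variables (axiom of $\cA$) and for the uncorrupted samples (from the sample complexity assumption) then yields an upper bound of $k^{k/2} \err^k$ for each such inner sum.

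Combining these, we obtain $\delta^k \cdot \err^{2k} \leq 2^k \cdot (2\varepsilon)^{k-1} \cdot k^{k/2} \cdot \err^k$, and then SoS cancellation followed by SoS square root (which is valid since $k$ is a power of $2$) gives the desired $\err \leq O(\varepsilon^{1-1/k}\sqrt{k}/\delta)$. The main subtlety I expect is purely bookkeeping: one must be careful to track the booleanity step when pulling the $(1-w_iw_i^*)^{k-1}$ factor out via SoS \Hoelder, which requires the axiom $w_iw_i^* \leq 1$ (proved inside Lemma \ref{lemma:thm}); everything else is a direct $k$-th power analogue of the degree-$4$ argument in Theorem \ref{thm:optbdbcov}. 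Finally, the rounding argument from Section \ref{sec:rounding} transfers the SoS bound to the estimator $\hat{\mu} = \pE[\mu]$, and a triangle inequality with the sampling error from Appendix \ref{app:hp} converts the bound against the sample mean into a bound against the true mean $\mu^*$.
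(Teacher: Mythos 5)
Your proposal is correct and follows essentially the same route as the paper's proof: you rearrange to obtain $\delta\cdot\err^2$ bounded by the two cross-terms, raise to the $k$-th power, apply SoS triangle then SoS H\"older (using the booleanity of $1-w_iw_i^*$) to extract the $(2\varepsilon)^{k-1}$ factor and the $k^{k/2}$ moment bound, and finish with SoS cancellation and square root. The intermediate bound $\delta^k\cdot\err^{2k}\leq 2^k(2\varepsilon)^{k-1}k^{k/2}\err^k$ matches the paper's exactly.
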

\begin{proof}We overload notation like we did in the proof of Theorem \ref{thm:boundedcov} -- we will use $\mu^* = \sn x_i^*$.
    As with the proofs from before, it will be enough to show that 
    \begin{align*}
       \cA \sststile{O(k)}{w,x} \Vert \mu - \mu^* \Vert^2  \leqslant O\left( \frac{{\varepsilon}^{2-2/k} \cdot k}{\delta^2}\right)
    \end{align*}
    Using the same idea as in Theorem \ref{thm:optbdbcov}, we have
    \begin{align*}
     (1-2\varepsilon) \Vert \mu - \mu^* \Vert^2 &\leqslant \left(\sn(1-w_iw_i^*)\langle x_i - \mu, \mu - \mu^* \rangle - \sn (1 - w_iw_i^*) \langle x_i^* - \mu^*, \mu - \mu^* \rangle \right)
 \end{align*}
 Raising both sides to the power of $k$ we have
 \begin{align*}
     (1-2\varepsilon)^{k} \Vert \mu - \mu^* \Vert^{2k} &\leqslant \left(\sn(1-w_iw_i^*)\langle x_i - \mu, \mu - \mu^* \rangle - \sn (1 - w_iw_i^*) \langle x_i^* - \mu^*, \mu - \mu^* \rangle \right)^k
 \end{align*}
By applying SoS Triangle Inequality on the RHS we get
\begin{align*}
    &\left(\sn(1-w_iw_i^*)\langle x_i - \mu, \mu - \mu^* \rangle - \sn (1 - w_iw_i^*) \langle x_i^* - \mu^*, \mu - \mu^* \rangle \right)^k \leqslant \\
    & \ 2^{k-1}  \cdot \left(\left(\sn(1-w_iw_i^*)\langle x_i - \mu, \mu - \mu^* \rangle \right)^k + \left(\sn (1 - w_iw_i^*) \langle x_i^* - \mu^*, \mu - \mu^* \rangle \right)^k \right)
\end{align*}
We will focus on the first term in the RHS above, as the proof for the second term is analogous. From SoS Holder and Lemma \ref{lemma:thm}, we have
\begin{align*}
   \left(\sn(1-w_iw_i^*)\langle x_i - \mu, \mu - \mu^* \rangle \right)^k &\leqslant \left( \sn(1-w_iw_i^* \right)^{k-1} \left( \sn \langle x_i - \mu, \del \rangle^k \right)\\
   &\leqslant (2\varepsilon)^{k-1} \cdot \Vert \del \Vert^k \cdot k^{k/2}
\end{align*}
where the final inequality is because of our assumptions on the moments and Lemma \ref{lemma:thm}.

Putting it together we get
\begin{align*}
    (1-2\varepsilon)^{k} \Vert \mu - \mu^* \Vert^{2k} \leqslant (2\varepsilon)^{k-1} \cdot 2^k \cdot k^{k/2} \err^k
\end{align*}
Using SoS Cancellation we get
\begin{align*}
    \err^k \leqslant \frac{2^{2k-1}\cdot \varepsilon^{k-1}\cdot k^{k/2}}{\delta^k}
\end{align*}
By SoS Square Root we get
\begin{align*}
    \cA \sststile{O(k)}{w,x} \err^2 = O\left( \frac{{\varepsilon}^{2-2/k} \cdot k}{\delta^2}\right)
\end{align*}
which proves the Theorem.
\end{proof}

\begingroup
\allowdisplaybreaks
\begin{theorem}[Optimal Breakdown Point for Sparse Mean Estimation]\label{thm:sparseoptbd}
     Under the conditions of Theorem \ref{theorem:sparse} and for $\delta \coloneqq 1 - 2\varepsilon$, we have
    \begin{align*}
        \vnorm{\hat{\mu} - \mu^*}_{2,k} \leqslant O\left( \frac{M^{1/t} \cdot \varepsilon^{1-1/t}}{\delta}\right)
    \end{align*}
    with probability 0.99.
\end{theorem}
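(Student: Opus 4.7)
The plan is to mirror the argument of \cref{thm:optbdmom}, replacing the Euclidean norm with the $(2,k)$ norm and the certifiably bounded $k$-th moment condition with the $(M,t)$ certifiable bound in $k$-sparse directions. As in \cref{theorem:sparse}, our estimator will be $\hat\mu = \pE[\mu]$ for a degree-$O(t)$ pseudo-expectation satisfying $\cA_{\text{sparse}}$, so after overloading $\mu^* = \sn x_i^*$ it suffices to produce an SoS derivation
\[
\cA_{\text{sparse}} \cup \cA_k \sststile{O(t)}{w,x,v,z} \langle v,\mu-\mu^*\rangle^t \leqslant O\!\left(\tfrac{\varepsilon^{t-1}\, M}{\delta^t}\right),
\]
take $t$-th roots in SoS, and then push the bound through $\pE[\cdot]$ as in \cref{sec:rounding} to conclude $\vnorm{\hat\mu-\mu^*}_{2,k} = \sup_{v\in S_k}\pE\langle v,\mu-\mu^*\rangle \leqslant O(M^{1/t}\varepsilon^{1-1/t}/\delta)$.

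First, using \cref{lemma:thm} (whose conclusions go through verbatim for $\cA_{\text{sparse}}$ since the first three axioms are unchanged), I expand
\[
\langle v,\mu-\mu^*\rangle = \sn(1-w_iw_i^*)\langle v, x_i-\mu\rangle - \sn(1-w_iw_i^*)\langle v, x_i^*-\mu^*\rangle + \left(\sn(1-w_iw_i^*)\right)\langle v,\mu-\mu^*\rangle,
\]
and use $\sn(1-w_iw_i^*) \leqslant 2\varepsilon$ to rearrange into
\[
\delta\cdot \langle v,\mu-\mu^*\rangle \leqslant \sn(1-w_iw_i^*)\langle v, x_i-\mu\rangle - \sn(1-w_iw_i^*)\langle v, x_i^*-\mu^*\rangle.
\]
This is the sparse analogue of the pre-squaring step in \cref{thm:optbdbcov} and \cref{thm:optbdmom}, and it already carries the optimal $\delta$-dependence.

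Second, I raise both sides to the $t$-th power (legal in SoS since $t$ is even), apply the SoS triangle inequality to split the two terms, and then apply SoS Hölder to each:
\[
\left(\sn(1-w_iw_i^*)\langle v,x_i-\mu\rangle\right)^t \leqslant \left(\sn(1-w_iw_i^*)\right)^{t-1}\cdot \sn\langle v,x_i-\mu\rangle^t \leqslant (2\varepsilon)^{t-1} M,
\]
where the second inequality uses the moment axiom of $\cA_{\text{sparse}}$ (after taking SoS square root of the $M^2$ bound, valid on $\cA_k$). The companion term involving the uncorrupted $x_i^*$ is handled identically, using the concentration fact that the uniform distribution on $\uip$ also has $(O(M), t)$-certifiably bounded moments in $k$-sparse directions, which holds for the sample complexity assumed in \cref{theorem:sparse} (Lemma 14 of \cite{diakonikolas2022robust}). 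Combining gives
\[
\delta^t\, \langle v,\mu-\mu^*\rangle^t \leqslant 2^t (2\varepsilon)^{t-1} M.
\]

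Finally, SoS cancellation and SoS square root (applied $\log_2 t$ times, as $t$ is a power of $2$) yield $\langle v,\mu-\mu^*\rangle \leqslant O(M^{1/t}\varepsilon^{1-1/t}/\delta)$ as an SoS proof of degree $O(t)$ in $(w,x,v,z)$. Passing through $\pE$ as in \cref{sec:rounding} completes the bound on $\vnorm{\hat\mu-\mu^*}_{2,k}$. The only step that requires genuine care is the SoS Hölder application: I must check that booleanity of $w_iw_i^*$ (and hence of $1-w_iw_i^*$) together with the $k$-sparse moment certificate combine in SoS to yield the bound against $(2\varepsilon)^{t-1}M$, whereas all other manipulations are straightforward adaptations of \cref{thm:optbdmom}.
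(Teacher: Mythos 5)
Your proof is correct and follows essentially the same route as the paper's: rearrange via \cref{lemma:thm} to isolate $\delta\cdot\langle v,\mu-\mu^*\rangle$, raise to the $t$-th power, apply SoS triangle then SoS H\"older, and finish with SoS square root. The only genuine difference is order of operations: you take an SoS square root of the moment axiom $\left(\sn\langle v,x_i-\mu\rangle^t\right)^2\le M^2$ up front to get a linear-in-$M$ bound and then one SoS square root at degree $t$ at the end, whereas the paper defers the square root entirely, squares the inequality at degree $2t$, applies SoS triangle again, uses the $M^2$ axiom directly, and takes a single SoS square root at degree $2t$. Both orderings are valid (since $t$ and $2t$ are powers of $2$) and both yield the same constant $2^{2-1/t}\varepsilon^{1-1/t}M^{1/t}/\delta$, so this is a cosmetic reorganization rather than a different argument; your variant is marginally cleaner in that it avoids the extra squaring-then-triangle-inequality pass.
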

\begin{proof}
We overload notation like we did in the proof of Theorem \ref{thm:boundedcov} -- we will use $\mu^* = \sn x_i^*$. Similar to the proof of Theorems \ref{thm:optbdbcov} and \ref{thm:optbdmom}, we will show that 
$\cA_{\text{sparse}} \sststile{O(t)}{w,x,v} \langle v, \mu - \mu^* \rangle \leqslant O\left( {M^{1/t} \cdot \varepsilon^{1-1/t}}\cdot{\delta^{-1}}\right)$. This will be sufficient by the discussion in Section \ref{sec:rounding}.
Indeed we have, for any $v \in S_k$
\begin{align*}
    \langle v, \mu - \mu^* \rangle = \sn  (1 - w_iw_i^*) \langle v, x_i - x_i^* \rangle
\end{align*}
where we used Lemma \ref{lemma:thm}. This can further be written as 
\begin{align*}
  \sn  (1 - w_iw_i^*) \langle v, x_i - x_i^* \rangle &= \sn(1 - w_iw_i^*)\langle v, x_i - \mu \rangle - \sn (1 - w_iw_i^*) \langle v, x_i^* - \mu^* \rangle \\ 
  & \ \ \ \ + \sn (1 - w_iw_i^*)\langle v, \mu - \mu^* \rangle \\
  &\leqslant \sn(1 - w_iw_i^*)\langle v, x_i - \mu \rangle - \sn (1 - w_iw_i^*) \langle v, x_i^* - \mu^* \rangle \\
  & \ \ \ \ + 2\varepsilon \cdot \langle v, \mu - \mu^* \rangle
\end{align*}
where we used Lemma \ref{lemma:thm}. Rearranging we get
\begin{align*}
    (1 - 2\varepsilon)\langle v, \mu - \mu^* \rangle = \delta \langle v, \mu - \mu^* \rangle \leqslant  \sn(1 - w_iw_i^*)\langle v, x_i - \mu \rangle - \sn (1 - w_iw_i^*) \langle v, x_i^* - \mu^* \rangle \\
\end{align*}
Raising both sides to the power of $t$, we have
\begin{align*}
    \delta^t  \langle v, \mu - \mu^* \rangle^t &\leqslant \left(\sn(1 - w_iw_i^*)\langle v, x_i - \mu \rangle - \sn (1 - w_iw_i^*) \langle v, x_i^* - \mu^* \rangle\right)^t \\
    &\leqslant 2^{t-1}\cdot \left[ \left(\sn(1 - w_iw_i^*)\langle v, x_i - \mu \rangle \right)^t + \left(\sn (1 - w_iw_i^*) \langle v, x_i^* - \mu^* \rangle\right)^t \right] \\
    &\leqslant 2^{t-1} \cdot \left(\sn(1-w_iw_i^*)\right)^{t-1}\cdot \left(\sn \langle v, x_i - \mu \rangle^t \right) \\
    & \ \ \ \ + 2^{t-1} \cdot \left(\sn(1-w_iw_i^*)\right)^{t-1}\cdot \left(\sn \langle v, x_i^* - \mu^* \rangle^t \right) \\
    &\leqslant 2^{t-1}\cdot (2\varepsilon)^{t-1}\cdot \left(\sn \langle v, x_i - \mu \rangle^t + \sn \langle v, x_i^* - \mu^* \rangle^t \right)
\end{align*}
where we used SoS Triangle Inequality, then SoS \Hoelder's inequality and finally Lemma \ref{lemma:thm}. Squaring, we get
\begin{align*}
   \delta^{2t}  \langle v, \mu - \mu^* \rangle^{2t} &\leqslant  2^{4t-4}\cdot \varepsilon^{2t-2} \cdot \left(\sn \langle v, x_i - \mu \rangle^t + \sn \langle v, x_i^* - \mu^* \rangle^t \right)^2 \\
   &\leqslant 2^{4t-3} \cdot \varepsilon^{2t-2}\left[ \left(\sn \langle v, x_i - \mu \rangle^t \right)^2 + \left(\sn \langle v, x_i^* - \mu^* \rangle^t \right)^2\right]\\
   &\leqslant 2^{4t-2} \cdot \varepsilon^{2t-2} \cdot M^2
\end{align*}
where we used SoS Triangle Inequality again and the boundedness assumptions.
Taking SoS Square Root,
\begin{align*}
    \cA_{\text{sparse}} \sststile{O(t)}{w,x,v} \langle v, \mu - \mu^* \rangle \leqslant \frac{2^{2 - 1/t}\cdot \varepsilon^{1 - 1/t}\cdot M^{1/t}}{\delta} = O\left( \frac{M^{1/t} \cdot \varepsilon^{1-1/t}}{\delta}\right).
\end{align*}
which proves the desired statement.
\end{proof}
\endgroup
\subsection{Missing Proofs}\label{app:proofs}

\begingroup
\allowdisplaybreaks
\subsubsection*{ Proof of Lemma \ref{lemma:thm}}
    \begin{proof}\label{prooflemmathreefour}
    We provide the proofs for the three parts individually. Firstly observe that for the first part that
    \begin{align*}
        w_iw_i^* \cdot x_i = w_i^* (w_i \cdot x_i) = w_i^* (w_i \cdot z_i) = w_i (w_i^* \cdot z_i) = w_i w_i^* \cdot x_i^* 
    \end{align*}
     We now have for part 2 that
     \begin{align*}
        (1 - w_iw_i^*)^2 = 1 + (w_iw_i^*)^2 - 2 w_iw_i^* = 1 + w_iw_i^* - 2 w_iw_i^* = 1 - w_iw_i^*
    \end{align*}
    Finally for part 3 we begin by proving the following fact.
    \begin{align*}
        \left\{w^2 = w\right\} \sststile{2}{w} w \leqslant 1
    \end{align*}
    Observe that 
    \begin{align*}
        w = 1\cdot w \leqslant \frac{1}{2} + \frac{{w}^2}{2} = \frac{1}{2} + \frac{w}{2}
    \end{align*}
    from which we can conclude that $w \leqslant 1$. In the above, we used the fact that $(a-b)^2 \geqslant 0 \ \forall a, b \in \R$. \\
    Therefore we have $ \cA_{w,x} \sststile{2}{w} 1 - w_i \geqslant 0 $ using which we have
    \begin{align*}
         \cA_{w,x} \sststile{2}{w} (1 - w_i^*)\cdot(1 - w_i) \geqslant 0 
    \end{align*}
    since $w_i^*$ is a constant for our system.
    Taking summation we have 
    \begin{align*}
        \cA_{w,x} \sststile{2}{w} \sn \left(1 - w_i^* - w_i + w_iw_i^*\right) \geqslant 0 
    \end{align*}
    Now we know $\sn w_i \geqslant 1 - \varepsilon$ and $\sn w_i^* \geqslant 1 - \varepsilon$. Using this we get 
    \begin{align*}
       \cA_{w,x} \sststile{2}{w} \sn (1 - w_iw_i^*) &\leqslant 2 - \sn w_i - \sn  w_i^* \leqslant 2 - (1 - \varepsilon) - (1 - \varepsilon) = 2\varepsilon
    \end{align*}
    By rearranging the terms we thus get $\cA_{w,x} \sststile{2}{w} \sn w_iw_i^* \geqslant 1 - 2\varepsilon = \delta$. 
\end{proof}
\endgroup

\begingroup
\allowdisplaybreaks
\subsubsection*{ Proof of Lemma \ref{lemma:ineff}}
    \begin{proof}\label{prooflemmathreeseven}
    Overloading notation like we did in the proof of Theorem \ref{thm:boundedcov} and using Lemma \ref{lemma:thm} we have
    \begin{align*}
        \left(\sn w_iw_i^* \right) \cdot \langle \del, v \rangle &= \left(\sn w_iw_i^* \right) \cdot \langle \del, v \rangle + \sn w_iw_i^*  \cdot \langle x_i^* - x_i, v \rangle \\
         &= \sn w_iw_i^* \cdot \langle \mu - x_i, v \rangle + \sn w_iw_i^* \cdot \langle x_i^* - \mu^*, v \rangle
    \end{align*}
    where the first inequality is due to Lemma \ref{lemma:thm}.
    
    We now raise both sides to the power of $k$ to obtain
    \begin{align*}
        \left(\sn w_iw_i^* \right)^k \cdot \langle \del, v \rangle^k &= \left(\sn w_iw_i^* \cdot \langle \mu - x_i, v \rangle + \sn w_iw_i^*  \cdot \langle x_i^* - \mu^*, v \rangle \right)^k \\
        &\leqslant 2^{k-1}\left[\left(\sn w_iw_i^* \cdot \langle \mu - x_i, v \rangle\right)^k +  \left( \sn w_iw_i^* \cdot \langle x_i^* - \mu^*, v \rangle \right)^k \right]
    \end{align*}
    where the inequality follows from SoS Triangle Inequality. Now using the booleanity of $w_iw_i^*$, we apply SoS \Hoelder to the two terms above. Therefore we have
    \begin{align*}
        \left(\sn w_iw_i^* \cdot \langle \mu - x_i, v \rangle\right)^k  &\leqslant \left(\sn w_iw_i^* \right)^{k-1}\cdot \left(\sn \langle x_i - \mu, v \rangle^k \right) \\
        &\leqslant \left(\sn w_iw_i^* \right)^{k-1}\cdot \Vert v \Vert^k \cdot k^{k/2}
    \end{align*}
where the final inequality follows from certifiably bounded central moments. Similarly we can show 
\begin{align*}
    \left( \sn w_iw_i^* \cdot \langle x_i^* - \mu^*, v \rangle \right)^k \leqslant  \left(\sn w_iw_i^* \right)^{k-1}\cdot \Vert v \Vert^k \cdot k^{k/2}
\end{align*}
Putting the pieces together we obtain
\begin{align*}
     \left(\sn w_iw_i^* \right)^k \cdot \langle  \del, v \rangle^k  \leqslant 2^k \cdot \left(\sn w_iw_i^* \right)^{k-1} \cdot \Vert v \Vert^k \cdot k^{k/2}.
\end{align*}
Picking $v = \del$ proves the Lemma.
\end{proof}
\endgroup 

\subsubsection*{Proof of Theorem \ref{theorem:sparse}}\label{proof:theoremsparse}
\begin{proof}
    As discussed earlier, it is sufficient to show 
    \begin{align*}
        \cA_{\text{sparse}} \sststile{O(t)}{w,x,v} \langle v, \mu - \mu^* \rangle \leqslant O\left( M^{1/t} \cdot \delta^{-1/t}\right)
    \end{align*}
    Indeed we have
    \begin{align*}
        \delta \cdot \langle v, \mu - \mu^* \rangle^t &\leqslant \left(\sn w_iw_i^* \right)\cdot \langle v, \mu - \mu^* \rangle^t \\
        &= \sn \left(w_iw_i^*\right)^t \cdot \langle v, \mu - \mu^* \rangle^t \\
        &= \sn \left( w_iw_i^* \cdot \langle v, \mu - \mu^* \rangle\right)^t
    \end{align*}
    where in the first inequality we used Lemma \ref{lemma:thm} and the first equality follows from the booleanity of $w_iw_i^*$ (Lemma \ref{lemma:thm}). We now have by Lemma \ref{lemma:thm} that
    \begin{align*}
        \sn \left( w_iw_i^* \cdot \langle v, \mu - \mu^* \rangle\right)^t &= \sn \left( w_iw_i^* \cdot \langle v, \mu - \mu^* \rangle + w_iw_i^* \cdot \langle v, x_i^* - x_i \rangle\right)^t \\
        &= \sn \left(w_iw_i^* \cdot \langle v, \mu - x_i \rangle + w_iw_i^* \cdot \langle v, x_i^* - \mu^* \rangle\right)^t\\
        &\leqslant 2^{t-1}\cdot\left(\sn\left(w_iw_i^* \cdot \langle v, \mu - x_i \rangle\right)^t + \sn \left(w_iw_i^* \cdot\langle v, x_i^* - \mu^* \rangle \right)^t\right)\\
        &= 2^{t-1} \cdot \left(\sn(w_iw_i^*)^t \cdot\langle v, \mu - x_i \rangle^t + \sn (w_iw_i^*)^t \cdot \langle v, x_i^* - \mu^* \rangle^t\right)\\
        &\leqslant 2^{t-1} \cdot \left(\sn \langle v, \mu - x_i \rangle^t + \sn \langle v, x_i^* - \mu^* \rangle^t\right) 
    \end{align*}
    where the first inequality is due to SoS Triangle Inequality, the final inequality is due to the fact that $\forall i \in [n]: w_iw_i^* \leqslant 1$ (See Proof of Lemma \ref{lemma:thm} in Appendix \ref{app:lemmaproof}). Squaring and applying SoS Triangle Inequality, we have
    \begin{align*}
        \delta^2 \cdot \langle v, \mu - \mu^* \rangle^{2t} &\leqslant 2^{2t-2} \cdot 2 \left[\left(\sn \langle v, \mu - x_i \rangle^t \right)^2 + \left(\sn \langle v, x_i^* - \mu^* \rangle^t\right)^2 \right] \\
        &\leqslant 2^{2t} \cdot M^2.
    \end{align*}
    where the final inequality follows from the moment boundedness assumption in the sense of Definition \ref{defn:sparsebounded}.
    Taking SoS Square root we have
    \begin{align*}
        \cA_{\text{sparse}} \sststile{O(t)}{w,x,v} \langle v, \mu - \mu^* \rangle \leqslant \frac{2\cdot M^{1/t}}{\delta^{1/t}} = O\left(\frac{M^{1/t}}{\delta^{1/t}}\right).
    \end{align*}
\end{proof}
\section{ Robust Gaussian Mean Estimation}\label{app:gauss}
In this section, we discuss applying the techniques from \cite{kothari2022polynomial} to our problem when $\varepsilon$ is large and close to the breakdown point. In particular, we will specialize Lemma 22 in \cite{kothari2022polynomial} for the case of robust Gaussian mean estimation with identity covariance. 
Consider the following polynomial system in variables $\pxi \in \R^d$ and $\pwi \in \R$. Consistent with before, let $\ip$ be an $\varepsilon$-corruption of $\uip \sim \cN(\mu^*, I_d)$.

\begin{align}\label{eqn:gausskmz}
     \cA_{\text{Gauss}} \coloneqq \left\{
    \begin{array}{l}
        \forall i \in [n]: \ w_i^2 = w_i   \\
        \forall i \in [n]: \ w_i\cdot (z_i - x_i) = 0\\
        \sn w_i \geqslant (1 - \varepsilon) \\
        \mu = \sn x_i \\
        \sn (x_i - \mu)(x_i - \mu)^T \preccurlyeq \left(1 +\tau \right)\cdot I_d
        \end{array}
    \right\}
\end{align}
In the above system the slack term of $\tau > 0$ takes the value of $O\left(\varepsilon \log 1/\varepsilon\right)$ in \cite{kothari2022polynomial}. Overloading notation let $\mu^*$ be the sample mean of the uncorrupted samples $\uip$. Further we have $\sn(x_i^* - \mu^*)(x_i^* - \mu^*)^T \preccurlyeq (1 + \tau)\cdot I_d$.

We interpret $\cA_{\text{Gauss}}$ as finding a \emph{specific} bounded covariance empirical distribution from corrupted samples. Note that this distinction is important as hoping to estimate the mean of a Gaussian with \emph{any} bounded covariance distribution might be too much to ask for. We provide lower bounds (Lemmas \ref{lemma:kmzlbdel} and  \ref{lemma:kmzlbeps}) which essentially argue that there exists a specific bounded covariance distribution and a Gaussian with identity covariance, such that 
\begin{enumerate}
    \item When $\varepsilon$ is large, these distributions are $2\varepsilon$ far away in statistical distance and have means $\sqrt{\frac{1}{1-2\varepsilon}}$ apart and 
    \item When $\varepsilon$ is small, these distributions are $\varepsilon$ far away in statistical distance and have means $\varepsilon\sqrt{\log \frac{1}{\varepsilon}} $ apart.
\end{enumerate}
These lower bounds rule out any algorithm that use only specific bounded covariance conditions such as the one in $\cA_{\text{Gauss}}$ to robustly estimate the mean of a Gaussian with good error, especially when $\varepsilon$ is large. Our lower bound when $\varepsilon$ is small also shows that \cite{kothari2022polynomial}'s result is tight.

\subsection{Lower Bounds for Bounded Covariance and Gaussian}
\begin{lemma}[Large $\varepsilon$]\label{lemma:kmzlbdel}
      There exist two distributions $\D_1$ and $\D_2$ on $\R$ such that for $\varepsilon \in (0,\frac{1}{2})$ sufficiently large
    \begin{enumerate}
        \item $d_{TV}(\D_1, \D_2) \leqslant 2\varepsilon$.
        \item $\D_1$ is $\cN(0,1)$ and $\D_2$ has variance bounded from above by $1 + O(\delta)$, where $\delta = 1 - 2\varepsilon$.
        \item $\left| \E_{X \sim \D_1}[X] - \E_{X \sim \D_2}[X] \right| \geqslant \sqrt{\frac{1}{1-2\varepsilon}}$.
    \end{enumerate}
\end{lemma}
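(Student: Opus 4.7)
The plan is to construct $\D_2$ as a convex combination of a point mass and the Gaussian $\D_1$ itself, balancing the location of the point mass against the mass weights so that three constraints (TV distance, mean gap, and variance bound) hold simultaneously. Specifically, writing $\delta = 1 - 2\varepsilon$, I would take
\[
\D_1 = \mathcal{N}(0,1), \qquad \D_2 = (1-\delta)\,\mathbb{1}_{\mu} + \delta \cdot \mathcal{N}(0,1), \qquad \mu = \frac{1}{(1-\delta)\sqrt{\delta}},
\]
where $\mathbb{1}_\mu$ denotes the point mass at $\mu$. The intuition is that the $\delta$-fraction of $\D_2$ matching $\D_1$ certifies the TV bound automatically, the spike carries all of the mean, and placing the spike at distance $\Theta(1/\sqrt{\delta})$ is exactly the sweet spot where the contribution of $(1-\delta)\mu^2$ to the second moment is (after subtracting the squared mean) only $O(\delta)$ larger than $1$.

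First I would verify the TV bound by exhibiting an explicit coupling: draw a Bernoulli $B$ with $\Pr[B=1] = \delta$; on $\{B=1\}$ both distributions output the same sample from $\mathcal{N}(0,1)$, and on $\{B=0\}$ the distribution $\D_1$ outputs an independent Gaussian while $\D_2$ outputs $\mu$. Since $\Pr[X=Y] \geq \delta$ under this coupling, the coupling characterization of total variation (see also the overlap-versus-TV discussion in Appendix \ref{app:factsoltv}) gives $d_{TV}(\D_1,\D_2) \leq 1-\delta = 2\varepsilon$.

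Next I would compute the mean and variance directly. The mean of $\D_2$ equals $(1-\delta)\mu + \delta \cdot 0 = 1/\sqrt{\delta}$, so the mean gap is exactly $\sqrt{1/(1-2\varepsilon)}$. For the second moment,
\[
\E_{\D_2}[X^2] = (1-\delta)\mu^2 + \delta = \frac{1}{(1-\delta)\delta} + \delta,
\]
so the variance is
\[
\E_{\D_2}[X^2] - (\E_{\D_2}[X])^2 = \frac{1}{(1-\delta)\delta} + \delta - \frac{1}{\delta} = \frac{1}{1-\delta} + \delta = 1 + O(\delta)
\]
for $\delta$ sufficiently small, which gives the claimed variance bound.

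There is no genuine obstacle in this argument; the only subtlety is the algebraic balancing that forces the constant $\mu$ to take the specific value $1/((1-\delta)\sqrt{\delta})$, so that the $(1-\delta)\mu$ mean contribution is precisely $1/\sqrt{\delta}$ while the $(1-\delta)\mu^2$ second-moment contribution cancels against $(\E_{\D_2}[X])^2 = 1/\delta$ up to the $1/(1-\delta) = 1+O(\delta)$ slack. If one instead tried to make the mean gap larger than $1/\sqrt{\delta}$, the leftover $\frac{1}{1-\delta}$ term in the variance would blow up, which matches the matching upper bound of Theorem \ref{thm:boundedcov} and explains why this is the right threshold.
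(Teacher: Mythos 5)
Your construction is exactly the paper's: the paper's $\D_2$ outputs from $\cN(0,1)$ with probability $1-2\varepsilon=\delta$ and outputs $(1-2\varepsilon)^{-1/2}(2\varepsilon)^{-1}=\frac{1}{(1-\delta)\sqrt{\delta}}$ with probability $2\varepsilon=1-\delta$, which is precisely your mixture $(1-\delta)\,\mathbb{1}_\mu + \delta\,\cN(0,1)$ with $\mu = \frac{1}{(1-\delta)\sqrt{\delta}}$. Your computations (coupling for the TV bound, mean $=\delta^{-1/2}$, variance $=\frac{1}{1-\delta}+\delta = 1+O(\delta)$) agree with the paper's, just expressed throughout in terms of $\delta$ rather than $\varepsilon$ and with a slightly cleaner simplification of the variance; the proof is correct and takes the same route.
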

\begin{proof}
     Let $\D_2$ to be the distribution that outputs samples from $\D_1$ with probability $1-2\varepsilon$, and outputs $(1-2\varepsilon)^{-1/2}\cdot (2\varepsilon)^{-1}$  with probability $2\varepsilon$. Notice that $d_{TV}(\D_1, \D_2) \leqslant 2\varepsilon$ and that the mean of $\D_2$ is $\mu_2 = (1-2\varepsilon)^{-1/2}$. 
     
     We have the variance of $\D_2$
     \begin{align*}
\E_{X \sim \D_2}[\left(X - \mu_2\right)^2] &=\E\left[(X-\mu_2)^2 | X \sim \cN(0,1)\right]\Pr\left[X \sim \cN(0,1) \right] + \\ 
         &\E\left[(X-\mu_2)^2 \left.\right| X = (1-2\varepsilon)^{-1/2}\cdot (2\varepsilon)^{-1}\right]\Pr\left[X = (1-2\varepsilon)^{-1/2}\cdot (2\varepsilon)^{-1} \right]\\
         &= \left( 1 + \frac{1}{1-2\varepsilon}\right) \cdot (1 - 2\varepsilon) + \left( (1-2\varepsilon)^2\cdot (1-2\varepsilon)^{-1}\cdot (2\varepsilon)^{-2} \right)\cdot 2\varepsilon \\
         &= 1 + 1 - 2\varepsilon + (1 - 2\varepsilon) \cdot (2\varepsilon)^{-1} =  1 + (1 - 2\varepsilon)\cdot\left(1 + \frac{1}{2\varepsilon}\right) \\
         &\leqslant 1 + (1 - 2\varepsilon)\cdot 3 \leqslant 1 + O(\delta).
     \end{align*}
    We note that we get the inequality by taking $\varepsilon > 1/4$. Observe that if the input to the program $\cA_{\text{Gauss}}$ is $\frac{1}{2}\left(\D_1 + \D_2\right)$, it is an $\varepsilon$ corruption of $\D_1$, and the $\varepsilon$ corrupted samples will simply satisfy the constraints, essentially allowing the algorithm to output $\frac{1}{2}\sqrt{\frac{1}{1-2\varepsilon}}$ as samples from $\frac{1}{2}\left(\D_1 + \D_2\right)$ are feasible for $\cA_{\text{Gauss}}$. Moreover, we show that in the large $\varepsilon$ regime, even when asking the bounded covariance distribution to have variance at most $1 + O(\delta)$, \emph{stronger} than the analogous $1 + O(\delta \log \frac{1}{\delta})$ as in \cite{kothari2022polynomial} still proves insufficient for obtaining optimal error rates.
\end{proof}

\begin{lemma}[Tightness of $\cite{kothari2022polynomial}$] \label{lemma:kmzlbeps}
      There exist two distributions $\D_1$ and $\D_2$ on $\R$ such that for $\varepsilon \in (0,\frac{1}{2})$ sufficiently small
    \begin{enumerate}
        \item $d_{TV}(\D_1, \D_2) \leqslant \varepsilon$.
        \item $\D_1$ is $\cN(0,1)$ and $\D_2$ has variance bounded from above by $1 + \varepsilon \log \frac{1}{\varepsilon}$.
        \item $\left| \E_{X \sim \D_1}[X] - \E_{X \sim \D_2}[X] \right| \geqslant  \varepsilon \sqrt{\log \frac{1}{\varepsilon}}$.
    \end{enumerate}
\end{lemma}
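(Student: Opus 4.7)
The plan is to mirror the construction in Lemma \ref{lemma:kmzlbdel}, adjusting the location and mass of the point contamination so that the mean shift scales as $\varepsilon \sqrt{\log(1/\varepsilon)}$ while the variance inflates only by $O(\varepsilon \log(1/\varepsilon))$. Specifically, I would take $\D_1 = \cN(0,1)$ and define $\D_2$ as the mixture that draws from $\cN(0,1)$ with probability $1-\varepsilon$ and places a point mass at $M \coloneqq \sqrt{\log(1/\varepsilon)}$ with probability $\varepsilon$. The TV bound is immediate from the coupling that agrees on the $\cN(0,1)$ component, giving $d_{TV}(\D_1,\D_2) \leqslant \varepsilon$.

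For the mean separation, $\E_{X\sim\D_2}[X] = \varepsilon M = \varepsilon \sqrt{\log(1/\varepsilon)}$ and $\E_{X\sim\D_1}[X] = 0$, which supplies property 3 with equality up to constants. The variance computation is the one nontrivial calculation: writing $\E_{X\sim\D_2}[X^2] = (1-\varepsilon) \cdot 1 + \varepsilon M^2 = 1 - \varepsilon + \varepsilon \log(1/\varepsilon)$ and subtracting $\E_{X\sim\D_2}[X]^2 = \varepsilon^2 \log(1/\varepsilon)$ gives variance
\[
1 - \varepsilon + \varepsilon \log(1/\varepsilon) - \varepsilon^2 \log(1/\varepsilon) \leqslant 1 + \varepsilon \log(1/\varepsilon),
\]
which is property 2.

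There is no real obstacle here; the construction is the natural analog of Lemma \ref{lemma:kmzlbdel}, with the key design choice being $M = \sqrt{\log(1/\varepsilon)}$ instead of $(1-2\varepsilon)^{-1/2}(2\varepsilon)^{-1}$. The small-$\varepsilon$ threshold is used only to ensure $\log(1/\varepsilon) \geqslant 1$ so that $M$ is well-defined and the variance bound has the stated form. The interpretation for $\cA_{\text{Gauss}}$, as in the preceding lemma, is that the $\varepsilon$-corrupted input can be realized as a feasible solution with mean $\varepsilon \sqrt{\log(1/\varepsilon)}$ away from the true Gaussian mean, showing that the $\tau = O(\varepsilon \log(1/\varepsilon))$ slack in \cite{kothari2022polynomial} is tight in the sense that no algorithm relying solely on this covariance constraint can do better than $\widetilde{O}(\varepsilon)$ error.
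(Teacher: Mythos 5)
Your construction is identical to the paper's: $\D_2$ is $\cN(0,1)$ mixed with an $\varepsilon$-mass point at $\sqrt{\log(1/\varepsilon)}$, with the TV, mean, and variance computations matching (you compute the variance via $\E[X^2]-\E[X]^2$ while the paper expands $\E[(X-\mu_2)^2]$ directly, but the resulting quantity $(1-\varepsilon)+\varepsilon(1-\varepsilon)\log(1/\varepsilon)$ is the same). The proposal is correct and takes essentially the same approach as the paper.
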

\begin{proof}
     Let $\D_2$ to be the distribution that outputs samples from $\D_1$ with probability $1 - \varepsilon$, and outputs $\sqrt{\log\frac{1}{\varepsilon}}$  with probability $\varepsilon$. Notice that $d_{TV}(\D_1, \D_2) \leqslant \varepsilon$ and that the mean of $\D_2$ is $\mu_2 = \varepsilon \cdot \sqrt{\log \frac{1}{\varepsilon}}$. We have the variance of $\D_2$
     \begin{align*}
\E_{X \sim \D_2}[\left(X - \mu_2\right)^2] &=\E\left[(X-\mu_2)^2 | X \sim \cN(0,1)\right]\Pr\left[X \sim \cN(0,1) \right] + \\ 
         &\E\left[(X-\mu_2)^2 \left.\right| X = \sqrt{\log\frac{1}{\varepsilon}}\right]\Pr\left[X = \sqrt{\log\frac{1}{\varepsilon}} \right]\\
         &= \left( 1 + \varepsilon^2\log 1/\varepsilon\right) \cdot (1 - \varepsilon) + (1 - \varepsilon)^2 \cdot \log 1/\varepsilon \cdot \varepsilon \\
         &= (1-\varepsilon) + \varepsilon \cdot (1 - \varepsilon) \cdot \left( \varepsilon \cdot  \log 1/\varepsilon + (1 - \varepsilon) \cdot \log 1/\varepsilon \right) \\ 
         &\leqslant 1 + \varepsilon \cdot (1 - \varepsilon) \cdot  \log 1/\varepsilon \leqslant 1 + \varepsilon \log \frac{1}{\varepsilon}.
     \end{align*}
     where the inequality was because $(1 - \varepsilon) \leqslant 1.$
     In this case, the adversary can simply generate samples from $\D_2$. The corrupted samples will satisfy the constraints of $\cA_{\text{Gauss}}$ and the algorithm will output $\varepsilon \sqrt{\log \frac{1}{\varepsilon}}$ as samples from $\D_2$ are feasible for $\cA_{\text{Gauss}}$.
\end{proof}

In both lower bounds above, the adversary can essentially produce corrupted samples that are feasible for the polynomial system, and that is key to the $O\left(\delta^{-1/2}\right)$ barrier.

\subsection{Inefficient Robust Estimator for the Gaussian Mean}
\begin{theorem}[Inefficient Estimator for the Gaussian Mean]\label{thm:zjs}
(Rephrasing of Theorem 4 from \cite{zhu2020does}): 
Let $\D$ be $\cN(\mu^*, I_d)$. 
Let $\delta = 1 - 2\varepsilon$.
Then for any $p$ such that $d_{TV}(p, \D) \leqslant \varepsilon$, there is an estimator $\hat{\mu}(p)$ with breakdown point $\frac{1}{2}$ such that
\begin{align*}
    \vnorm{\hat{\mu}(p) - \mu^*} \leqslant O\left(\sqrt{\log \frac{1}{\delta}}\right).
\end{align*}

Note that we only have access to $p$ and not $\D$.
\end{theorem}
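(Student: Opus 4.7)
I would analyze the classical minimum-distance estimator, which on the population level may be taken to be
\[
\hat\mu(p) \;\coloneqq\; \arg\min_{\mu \in \R^d} \, d_{TV}\bigl(p,\; \cN(\mu, I_d)\bigr).
\]
In \cite{zhu2020does}, $d_{TV}$ is replaced by the generalized Kolmogorov--Smirnov distance $d_K(p,q) := \sup_{v \in \bbS^{d-1},\, t \in \R}\bigl|\Pr_{X \sim p}(\langle X,v\rangle \le t)-\Pr_{X \sim q}(\langle X,v\rangle \le t)\bigr|$, which makes sample-based estimation possible; but since $d_K \le d_{TV}$ and both metrics agree with the one-dimensional TV distance of projected Gaussians along any direction, the proof below goes through verbatim for either choice. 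The ``inefficiency'' has nothing to do with the analysis; it stems from the intractability of the minimization over $\mu \in \R^d$.

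\textbf{Key steps.} First, by optimality of $\hat\mu$ together with the hypothesis $d_{TV}(p, \D) \le \varepsilon$, we have $d_{TV}(p, \cN(\hat\mu, I_d)) \le \varepsilon$, and the triangle inequality yields
\[
d_{TV}\bigl(\cN(\hat\mu, I_d),\, \cN(\mu^*, I_d)\bigr) \;\le\; 2\varepsilon \;=\; 1 - \delta.
\]
Second, convert this into a bound on $r \coloneqq \|\hat\mu - \mu^*\|$ using the exact TV formula for two unit-covariance Gaussians. By rotational invariance and projection onto the direction of $\hat\mu-\mu^*$, the computation reduces to the one-dimensional case and gives
\[
1 - d_{TV}\bigl(\cN(\hat\mu, I_d),\, \cN(\mu^*, I_d)\bigr) \;=\; 2\,\Phi(-r/2),
\]
where $\Phi$ is the standard normal CDF. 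Combining, $\Phi(-r/2) \ge \delta/2$. Third, invert this using the standard Gaussian tail bound $\Phi(-x) \le e^{-x^2/2}$ (valid for $x \ge 0$): we obtain $e^{-r^2/8} \ge \delta/2$ and hence
\[
r \;\le\; \sqrt{8\log(2/\delta)} \;=\; O\bigl(\sqrt{\log(1/\delta)}\bigr).
\]
The breakdown-point claim is immediate, because the entire argument is valid for every $\varepsilon < 1/2$ (at $\varepsilon=1/2$ identifiability itself fails, as discussed in the excerpt).

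\textbf{The main obstacle.} This is essentially a one-line triangle-inequality argument plus elementary Gaussian tail decay, so there is no serious mathematical obstacle; the tightness of the constants is routine. The real content of the statement is conceptual: the proof crucially exploits the full Gaussian tail (encoded by $\Phi$) rather than merely bounded covariance, which is precisely why the rate is $O(\sqrt{\log(1/\delta)})$ and not the $O((1-2\varepsilon)^{-1/2})$ barrier shown in Lemmas~\ref{lemma:kmzlbdel} and~\ref{lemma:kmzlbeps}. Matching this rate with an efficient estimator remains the open question highlighted at the end of Section~\ref{sec:gaussian}.
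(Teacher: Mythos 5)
Your proof is correct, and it takes a different (more self-contained) route than the paper's. The paper quotes the modulus-of-continuity bound from \cite{zhu2020does}, namely $\vnorm{\hat{\mu}(p) - \mu^*} \leqslant 2 h^{-1}(\tfrac{1}{2} - \varepsilon)$ with $h(t) = \sup_{\Vert v\Vert\leqslant 1}\Pr_{X\sim\D}[\langle v, X-\mu\rangle > t]$, specializes $h$ to the standard Gaussian tail, and then inverts; you instead re-derive the same bound from first principles by analyzing the minimum-distance estimator directly: the triangle inequality gives $d_{TV}(\cN(\hat\mu,I_d),\cN(\mu^*,I_d)) \leqslant 2\varepsilon$, the exact TV identity $1 - d_{TV}(\cN(\mu_1,I_d),\cN(\mu_2,I_d)) = 2\Phi(-\Vert\mu_1-\mu_2\Vert/2)$ reduces to the one-dimensional projection along $\hat\mu-\mu^*$, and the Gaussian tail bound $\Phi(-x)\leqslant e^{-x^2/2}$ inverts to $O(\sqrt{\log(1/\delta)})$. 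The two routes land on the identical intermediate quantity (your $r$ satisfying $\Phi(-r/2) \geqslant \delta/2$ is precisely the paper's $2h^{-1}(\delta/2)$), so the bounds match. What your route buys is transparency: it makes explicit that the entire argument rests on the sharp Gaussian tail decay and the fact that, for unit-covariance Gaussians, both $d_{TV}$ and the generalized Kolmogorov--Smirnov distance reduce to the same one-dimensional TV of the projected means; the paper's route is shorter because it delegates the minimum-distance analysis to \cite{zhu2020does}. Your remark that $d_K \leqslant d_{TV}$ and that the two metrics coincide on one-dimensional Gaussian projections correctly handles the fact that the actual estimator in \cite{zhu2020does} uses $d_K$, so the argument is rigorous as stated.
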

\begin{proof}
    The error obtained in \cite{zhu2020does} is 
    \begin{align*}
        \vnorm{\hat{\mu}(p) - \mu^*} \leqslant 2 \cdot h^{-1}\left(\frac{1}{2} - \varepsilon\right)
    \end{align*}
    where
    \begin{align*}
        h(t) \coloneqq \sup_{\Vert v \Vert \leqslant 1} \Pr_{X \sim \D}\left[\langle v, X - \mu \rangle > t \right].
    \end{align*}
    and $h^{-1}$ is the generalized inverse.
    We show that when $\varepsilon \to \frac{1}{2}$, we can bound $2 \cdot h^{-1}\left(\frac{1}{2} - \varepsilon\right)$ from above by $O\left(\sqrt{\log \frac{1}{\delta}}\right)$. We have for $\D = \cN (\mu^*, I_d)$,
    \begin{align*}
    \forall t \geqslant 0 \ \ h(t) &= \sup_{\Vert v \Vert \leqslant 1} \Pr_{X \sim\cN(\mu^*, I)}\left[\langle v, X - \mu^* \rangle > t \right]
    =\sup_{\Vert v \Vert \leqslant 1}\Pr_{X' \sim\cN(0, \Vert v \Vert^2)}[X'>t]\\
    &= \sup_{\Vert v \Vert \leqslant 1}\Pr_{X'' \sim\cN(0,1)}\left[X'' > \frac{t}{\Vert v \Vert}\right]
    = \Pr_{X \sim\cN(0,1)} [X > t].
\end{align*}
For us $\delta = 1 - 2\varepsilon$. Using this we have estimation error
\begin{align*}
    2 \cdot h^{-1}\left(\frac{1}{2} - \varepsilon\right) = 2 \cdot h^{-1}\left( \frac{\delta}{2} \right) \leqslant O\left(\sqrt{\log \frac{1}{\delta}}\right).
\end{align*}
since $O(\delta)$ mass on the tails of $\cN(0,1)$ is at the tail starting at $O\left(\sqrt{\log \frac{1}{\delta}}\right)$ for $\delta$ sufficiently small.
\end{proof}

\end{document}